\documentclass[preprint,12pt]{elsarticle}

\usepackage[T1]{fontenc}
\usepackage[utf8]{inputenc}
\usepackage{lmodern}
\usepackage{microtype}

\usepackage{amsmath,amssymb,amsthm,mathtools}
\usepackage{mathrsfs}
\usepackage{stmaryrd}
\usepackage{bm}
\usepackage{mathpartir}

\usepackage{booktabs}
\usepackage{enumitem}
\usepackage{tikz}
\usepackage{tikz-cd}
\usepackage{listings}

\usepackage[hidelinks]{hyperref}
\usepackage[nameinlink,noabbrev]{cleveref}

\journal{Information and Computation}

\newtheorem{theorem}{Theorem}[section]
\newtheorem{proposition}[theorem]{Proposition}
\newtheorem{lemma}[theorem]{Lemma}
\newtheorem{corollary}[theorem]{Corollary}
\newtheorem{definition}[theorem]{Definition}
\newtheorem{assumption}[theorem]{Assumption}

\newcommand{\N}{\mathbb{N}}
\newcommand{\R}{\mathbb{R}}
\newcommand{\C}{\mathbb{C}}
\newcommand{\id}{\operatorname{id}}
\newcommand{\fix}{\operatorname{fix}}
\newcommand{\supp}{\operatorname{supp}}

\newcommand{\Tr}{\operatorname{Tr}}
\newcommand{\DetF}{\operatorname{det}_{\mathrm F}}
\newcommand{\Spec}{\operatorname{Spec}}

\newcommand{\DCPO}{\mathbf{DCPO}}
\newcommand{\Wstar}{W^{*}}
\newcommand{\QO}{\mathsf{Q}}
\newcommand{\Exec}{\mathsf{Exec}}

\newcommand{\sem}[1]{\llbracket #1\rrbracket}
\newcommand{\kleisli}[1]{#1^{\sharp}}
\newcommand{\bottom}{\bot}

\newcommand{\Sone}{\mathfrak{S}_{1}}
\newcommand{\Stwo}{\mathfrak{S}_{2}}
\newcommand{\CR}{\operatorname{CR}}

\begin{document}

\begin{frontmatter}

\title{Graph-Series Semantics and Abel Regularization for Recursive Hybrid Quantum Programs}

\author[angers,lepage,jeannedarc]{Jean-Pierre Magnot\corref{cor1}}
\ead{magnot@math.cnrs.fr}
\cortext[cor1]{Corresponding author.}

\address[angers]{Univ Angers, CNRS, LAREMA, SFR MATHSTIC, F-49000 Angers, France}
\address[lepage]{Lepage Research Institute, 17 novembra 1, 081 16 Pre\v{s}ov, Slovakia}
\address[jeannedarc]{Lyc\'ee Jeanne d'Arc, Avenue de Grande Bretagne, 63000 Clermont-Ferrand, France}

\begin{abstract}
	We introduce a graded graph-series semantics for recursive hybrid quantum
	programs interpreted in the quantum orchestra monad. Finite terminating
	executions are represented by directed paths whose edges carry normal
	completely positive subunital maps and whose terminal vertices carry classical
	results. Path concatenation defines a graded execution category, while
	continuation grafting models outcome-dependent sequential composition. We
	construct a semantic evaluation from admissible execution-graph series to
	quantum orchestras and prove that it is compatible with both channel
	composition and Kleisli composition.
	
	For finitary recursive programs, the truncation of the execution series at
	degree $n$ is shown to coincide with the $n$-th Kleene approximant of the
	associated Scott-continuous recursion functional. Consequently, evaluation of
	the complete graph series recovers the ordinary least-fixed-point denotation.
	Weighting a graph of degree $n$ by $q^n$, with $0<q<1$, yields an
	Abel-regularised semantics whose Scott limit as $q\to 1^{-}$ is the
	unregularised recursive denotation. Equivalently, the parametrisation
	$q=e^{-t}$ exponentially suppresses long executions and reconstructs the
	denotation as $t\to 0^{+}$.
	
	In a supplementary linear feedback sector, repeated recursion is represented
	by the execution resolvent $(I-qST)^{-1}$. We identify $I-qST$ with an
	algebraic cross-ratio of graph subspaces. Under Hilbert--Schmidt assumptions,
	the associated return operator is trace class and defines the Fredholm
	feedback determinant
	\(
	\operatorname{det}_{F}(I-qST),
	\)
	whose zeros detect singular feedback configurations and whose logarithmic
	expansion records closed loop traversals.
\end{abstract}

\begin{keyword}
denotational semantics \sep hybrid quantum programs \sep recursion \sep quantum instruments \sep graph series \sep Abel regularization \sep feedback
\end{keyword}

\end{frontmatter}

% -----------------------------------------------------------------------------
% Provisional structure
% -----------------------------------------------------------------------------

\section{Introduction}
\label{sec:introduction}

Quantum programming languages must combine two qualitatively different forms
of computation. Quantum data evolve through completely positive maps, while
classical control selects subsequent operations, stores measurement outcomes,
and determines whether a computation terminates or recurses. Early categorical
and type-theoretic models established the basic interaction between quantum
operations and classical control \cite{SelingerValiron2006,Selinger2007}, and
scalable language design subsequently made this interaction explicit at the
level of circuit generation and higher-order programming
\cite{GreenLumsdaineRossSelingerValiron2013}. More recent semantic accounts
have treated quantum computation through algebraic effects
\cite{Staton2015}, operator algebras \cite{Cho2016}, categorical completions
of quantum channels \cite{HuotStaton2019}, information effects
\cite{HeunenKaarsgaard2022}, and domain-theoretic models supporting
continuous or variational parameters
\cite{JiaKornellLindenhoviusMisloveZamdzhiev2022}. The operator-algebraic
background for the present work is the standard theory of $C^*$- and
$W^*$-algebras \cite{Sakai1971}.

A central technical issue is the semantic treatment of measurement-dependent
recursion. Quantum measurements are represented by instruments, which jointly
encode the classical outcome and the corresponding quantum state
transformation \cite{DaviesLewis1970}. Their compositional structure has
recently been clarified from complementary perspectives. Fritz constructs a
quantum instrument monad \cite{Fritz2026}, while Booth, Leichtle, Rice, and
Worrall study the composition of quantum instruments and the interaction
between outcome-dependent control and quantum channels
\cite{BoothLeichtleRiceWorrall2026}. Building on this analysis, Rice,
Leichtle, Worrall, and Booth introduce quantum orchestras, a
continuation-based semantics for recursive hybrid programs
\cite{RiceLeichtleWorrallBooth2026}. Quantum orchestras form a pointed
directed-complete semantic domain in which recursive programs are interpreted
as least fixed points of Scott-continuous maps. They provide the semantic
starting point of the present paper.
The use of monads to represent computational effects originates in Moggi's
seminal formulation of notions of computation \cite{Moggi1991}. Parameterised
monads refine this framework by allowing the effectful interface to change
across a computation \cite{Atkey2009}, while related indexed constructions
model environments and state-dependent interfaces
\cite{LevyPowerThielecke2003}. Effect systems can be interpreted through
parametric effect monads \cite{Katsumata2014}, and graded monads provide a
flexible means of recording quantitative or structural information about
computation \cite{KatsumataMcDermottUustaluWu2022}. The graph degree used in
this paper is not introduced as an additional effect system. Instead, it is a
combinatorial refinement of an already defined quantum-orchestra denotation:
it records the length of finite execution histories before those histories
are summed in the semantic domain.

The treatment of recursion relies on the classical order-theoretic theory of
fixed points. Tarski's lattice-theoretic theorem provides the general fixed
point principle for monotone maps on complete lattices \cite{Tarski1955},
while domain theory identifies Scott continuity and directed completeness as
the appropriate structures for denotational approximation
\cite{AbramskyJung1994,StoltenbergHansenLindstromGriffor1994}. Iteration
theories axiomatise the equational properties of recursive and iterative
processes \cite{BloomEsik1993}. In probabilistic semantics, the Giry monad
\cite{Giry1982}, probabilistic powerdomains \cite{JonesPlotkin1989}, and mixed
powerdomains combining probability and nondeterminism
\cite{KeimelPlotkin2017} show how infinite behaviour can be recovered from
ordered families of finite approximations. Quantum orchestras extend this
order-theoretic pattern to measurement-dependent quantum computations. The
present paper asks whether the individual finite histories underlying the
Kleene chain can themselves be organised into a compositional formal series.

Our answer is based on execution graphs. A terminating run is represented by
a finite directed path whose edges carry normal completely positive subunital
maps and whose terminal vertex carries a classical result. Concatenation of
paths is compatible with composition of quantum channels, and the path length
defines an additive grading. Formal sums of such paths therefore form a
complete graded algebra of the type considered in the theory of generalised
formal series indexed by graded categories
\cite{Magnot2025FormalSeries}. The idea of refining a nonlinear or recursive
object by a formal expansion is also reminiscent of Taylor expansions of
lambda terms \cite{EhrhardRegnier2008} and of their coherent organisation by
monadic and bimonadic structures \cite{EhrhardWalch2025}. The construction
developed here is different in purpose and coefficient semantics: its
coefficients are quantum execution histories, and evaluation sends them to
quantum orchestras. Nevertheless, these works motivate the systematic
separation between a formal expansion and the semantic operation that resums
it.

The first main result is a compositional graph-series semantics. Finite graph
polynomials are evaluated by summing the Dirac orchestras associated with
their terminating paths. Path concatenation evaluates to channel composition,
while grafting a continuation graph after a terminating path evaluates to
Kleisli composition in the quantum orchestra monad. For a finitary recursive
program, the degree-$n$ truncation of its execution series is exactly the
$n$-th Kleene approximant. Consequently, semantic evaluation of the complete
graph series recovers the ordinary least-fixed-point denotation. The graph
series is therefore a conservative refinement: it retains individual
histories and their degrees, but forgetting this extra information gives the
original quantum-orchestra semantics.

The grading also provides a canonical regularisation. For $0<q<1$, a graph
of degree $n$ is multiplied by $q^n$. The resulting positive family is
monotone in $q$, and its Scott supremum as $q\to 1^{-}$ is the unregularised
recursive denotation. Equivalently, with $q=e^{-t}$, the parameter $t>0$
exponentially suppresses long executions and the limit $t\to 0^{+}$ removes
the cutoff. The relation between exact-depth contributions and cumulative
Kleene approximants yields the Abel identity
\[
\sum_{n=0}^{\infty} q^n D_n
=
(1-q)\sum_{n=0}^{\infty} q^n A_n,
\]
where $D_n$ is the contribution of histories of exact degree $n$ and $A_n$ is
the $n$-th cumulative approximant. This is an order-theoretic form of Abel
summation. Its classical analytic foundations belong to the theory of
divergent series \cite{Hardy1949} and to Tauberian theory
\cite{Korevaar2004}, but the basic reconstruction theorem proved here does
not require norm convergence or a Tauberian converse.

A second, supplementary part of the paper concerns linear feedback. Traced
monoidal categories provide an abstract language for feedback
\cite{JoyalStreetVerity1996}, and their relation with cyclic sharing and
recursion was developed by Hasegawa \cite{Hasegawa1997}. Guarded traced
categories distinguish well-founded feedback from unrestricted cyclicity
\cite{GoncharovSchroder2018}. Geometry-of-interaction models interpret
computation through the circulation of information along feedback paths
\cite{HaghverdiScott2006}, including higher-order quantum computation
\cite{HasuoHoshino2017}. In the linear sector considered here, an open
computation is represented by $T:C\to R$ and the return interface by
$S:R\to C$. Repeated feedback is then governed by the formal or analytic
resolvent
\[
(I_C-qST)^{-1}
=
\sum_{n=0}^{\infty} q^n (ST)^n.
\]
This series is the direct linear image of repeated loop traversals. The same
operator is an algebraic cross-ratio of two graph decompositions of
$C\oplus R$, in the sense suggested by the general theory of noncommutative
cross-ratio mappings and infinite-dimensional splittings
\cite{Magnot2024CrossRatio}. No smooth Grassmannian structure is needed for
the algebraic identity used in this paper.

When $C$ and $R$ are Hilbert spaces and the return operator $ST$ is trace
class, the feedback operator has the Fredholm determinant
\[
\Delta_{T,S}(q)
=
\DetF(I_C-qST).
\]
The relevant trace-ideal and determinant theory is classical
\cite{GohbergKrein1969,Simon2005}. The zeros of this determinant detect
singular feedback configurations, while
\[
\log\Delta_{T,S}(q)
=
-\sum_{m=1}^{\infty}
\frac{q^m}{m}
\Tr\bigl((ST)^m\bigr)
\]
organises traces of closed loop traversals. This determinant is a secondary
invariant requiring additional Hilbert--Schmidt or trace-class assumptions;
it is not part of the general dcpo semantics. In particular, we do not claim
that it is a tau function or that it satisfies Pl"ucker or Hirota
relations.

Repeat-until-success circuits provide a basic test case for the interaction
between recursion, measurement, and feedback. Such circuits implement a
desired quantum operation by repeating a probabilistic subroutine until a
success outcome is obtained \cite{PaetznickSvore2014}. In this example, the
execution series is geometric, its Abel sum is explicit, and the graph-series
denotation agrees with the least fixed point. More elaborate examples show
that the dcpo limit may remain meaningful even when norm convergence is not
the primary notion, whereas Fredholm determinants require separate analytic
control.

The contributions of the paper can be summarised as follows.
\begin{enumerate}
	\item We define a graded execution category and a complete algebra of formal
	execution-graph series for finitary hybrid quantum programs.
	
	\item We construct a semantic evaluation into quantum orchestras and prove
	that graph concatenation and continuation grafting are compatible with
	channel composition and Kleisli composition.
	
	\item We prove that finite degree truncations coincide with Kleene
	approximants and that the complete graph-series denotation equals the original
	least-fixed-point denotation.
	
	\item We introduce the degree regularisation $q^{|\Gamma|}$ and prove an Abel
	reconstruction theorem in the Scott order, together with norm estimates under
	additional geometric convergence assumptions.
	
	\item In a linear feedback sector, we identify the execution resolvent with
	the inverse of an algebraic cross-ratio and construct a Fredholm determinant
	under trace-class hypotheses.
\end{enumerate}

The paper is organised as follows. Section~\ref{sec:orchestras} recalls
quantum orchestras, instruments, and recursive denotational semantics.
Section~\ref{sec:graphs} introduces finite execution graphs and their graded
formal series. Section~\ref{sec:graph-semantics} defines compositional
semantic evaluation. Section~\ref{sec:abel} proves the finite-unfolding
correspondence and the Abel reconstruction theorem.
Section~\ref{sec:feedback} develops the algebraic feedback resolvent and its
cross-ratio interpretation. Section~\ref{sec:fredholm} studies the
trace-class sector and the Fredholm invariant. Section~\ref{sec:examples}
contains explicit examples, and Section~\ref{sec:discussion} discusses the
scope, limitations, and possible extensions of the construction.

\section{Quantum orchestras and recursive denotational semantics}
\label{sec:orchestras}

This section recalls the semantic framework on which the constructions of the
following sections are based. We use the quantum orchestra monad introduced in
\cite{RiceLeichtleWorrallBooth2026} to model hybrid computations in which a
classical program interacts with an evolving quantum state, receives classical
outcomes from measurements, and uses these outcomes to determine both the
subsequent quantum operations and the termination behaviour. The construction
combines the operator-algebraic formalism of quantum instruments with the
order-theoretic treatment of recursion by directed-complete partial orders.
Throughout the paper, quantum transformations are written in the Heisenberg
picture.

\subsection{Quantum channels and domain-theoretic structure}
\label{subsec:channels-domains}

Let $\mathcal A$ and $\mathcal B$ be von Neumann algebras. We write
\[
\Wstar(\mathcal B,\mathcal A)
\]
for the set of normal completely positive subunital maps
\[
\Phi:\mathcal B\longrightarrow\mathcal A.
\]
Thus, $\Phi$ is ultraweakly continuous, every matrix amplification of $\Phi$
is positive, and
\[
\Phi(1_{\mathcal B})\leq 1_{\mathcal A}.
\]
Composition of such maps is again normal, completely positive, and subunital.
The resulting category of von Neumann algebras and normal completely positive
subunital maps is symmetric monoidal for the spatial tensor product. In the
Heisenberg convention used here, a morphism $\Phi:\mathcal B\to\mathcal A$
represents a quantum process from the system described by $\mathcal A$ to the
system described by $\mathcal B$; observables are pulled backwards along
$\Phi$. We refer to \cite{Sakai1971,Cho2016} for the operator-algebraic
background.

The hom-set $\Wstar(\mathcal B,\mathcal A)$ is equipped with the completely
positive order
\[
\Phi\leq\Psi
\quad\Longleftrightarrow\quad
\Psi-\Phi\text{ is completely positive}.
\]
Its least element is the zero map. A fundamental fact used in the quantum
orchestra construction is that $\Wstar(\mathcal B,\mathcal A)$ is a pointed
directed-complete partial order and that composition is Scott-continuous in
each variable \cite{Cho2016,RiceLeichtleWorrallBooth2026}.

We recall the domain-theoretic terminology. A subset $D$ of a partially
ordered set $X$ is directed if every finite subset of $D$ has an upper bound
in $D$. A directed-complete partial order, abbreviated as dcpo, is a partially
ordered set in which every directed subset has a supremum. A dcpo is pointed
if it has a least element, denoted by $\bottom$. A map $f:X\to Y$ between
dcpos is Scott-continuous if it preserves directed suprema. We denote by
$\DCPO$ the category of dcpos and Scott-continuous maps.

The order-theoretic interpretation of recursion is supplied by the Kleene
fixed-point theorem
\cite{AbramskyJung1994,StoltenbergHansenLindstromGriffor1994}.
If $X$ is a pointed dcpo and $F:X\to X$ is Scott-continuous, then $F$ has a
least fixed point given by
\[
\fix(F)
=
\sup_{n\in\N}F^{n}(\bottom).
\]
The sequence
\[
\bottom\leq F(\bottom)\leq F^{2}(\bottom)\leq\cdots
\]
is called the Kleene chain of $F$. Its elements will later be related to
finite execution graphs and to the coefficients of the graph-series
semantics.

\subsection{Finite quantum instruments}
\label{subsec:finite-instruments}

A quantum measurement produces both a classical outcome and a transformed
quantum state. Quantum instruments provide the standard operator-algebraic
representation of this joint behaviour \cite{DaviesLewis1970}.

\begin{definition}
	\label{def:finite-instrument}
	Let $X$ be a set and let $\mathcal A$ and $\mathcal B$ be von Neumann
	algebras. A finite quantum instrument from $\mathcal A$ to $\mathcal B$ with
	classical outcomes in $X$ is a family
	\[
	\Phi=(\Phi_x)_{x\in X},
	\qquad
	\Phi_x\in\Wstar(\mathcal B,\mathcal A),
	\]
	such that
	\[
	\supp(\Phi)
	=
	\{x\in X\mid \Phi_x\neq 0\}
	\]
	is finite and the map
	\[
	\sum_{x\in X}\Phi_x
	\]
	is subunital.
\end{definition}

The component $\Phi_x$ describes the quantum evolution conditioned on the
classical outcome $x$. If $\rho$ is a normal state on $\mathcal A$, then
\[
\rho\bigl(\Phi_x(1_{\mathcal B})\bigr)
\]
is the probability of observing $x$, while
\[
\sum_{x\in X}\rho\bigl(\Phi_x(b)\bigr)
\]
is the expectation of an observable $b\in\mathcal B$ when the classical
outcome is discarded.

For $x\in X$ and $\Phi\in\Wstar(\mathcal B,\mathcal A)$, the associated
Dirac instrument is denoted by $\delta(x,\Phi)$ and is defined by
\[
\delta(x,\Phi)_y
=
\begin{cases}
	\Phi, & y=x,\\
	0, & y\neq x.
\end{cases}
\]
Every finite instrument is a finite sum of Dirac instruments. This notation
also makes sequential composition transparent. If an instrument first returns
$x\in X$ and the continuation selected by this outcome is an instrument
$f(x)$, then the appropriate composition is the Kleisli composition of the
first instrument with the outcome-dependent family $f$, rather than
composition with a single fixed channel. This is the mechanism by which
measurement-dependent classical control is represented.

Finite instruments are sufficient for finite classical outcome spaces and
bounded computations. They are not, however, closed under general recursion.
For example, a program returning the number of iterations performed before
termination may have nonzero components at every natural number. Moreover, a
pointwise order on outcome-indexed families does not provide a satisfactory
continuous unit when the outcome space itself carries a nontrivial dcpo
structure. These difficulties motivate the continuation-based completion by
quantum orchestras.

\subsection{The quantum orchestra monad}
\label{subsec:quantum-orchestra-monad}

The quantum orchestra construction is parameterised by an input algebra, an
output algebra, and a classical result dcpo. The algebra parameters encode the
quantum interface, while the dcpo parameter carries the classical return
values and their approximation order. The parameterised form is in the spirit
of Atkey's parameterised monads \cite{Atkey2009}; the underlying
interpretation of computational effects follows the monadic approach of
\cite{Moggi1991}.

\begin{definition}
	\label{def:quantum-orchestra}
	Let $\mathcal A$ and $\mathcal B$ be von Neumann algebras and let $X$ be a
	dcpo. A quantum orchestra
	\[
	\xi\in\QO(\mathcal A,\mathcal B,X)
	\]
	is a family of maps
	\[
	\xi_{\mathcal K}:
	\DCPO\bigl(X,\Wstar(\mathcal K,\mathcal B)\bigr)
	\longrightarrow
	\Wstar(\mathcal K,\mathcal A),
	\]
	indexed by von Neumann algebras $\mathcal K$, satisfying the following
	conditions.
	
	\begin{enumerate}[label=\textup{(\roman*)}]
		\item For every $\mathcal K$, the map
		\[
		k\longmapsto\xi_{\mathcal K}(k)
		\]
		is Scott-continuous.
		
		\item For all $\theta,\rho\in\R_{+}$ with $\theta+\rho\leq 1$ and all
		\[
		k,k'\in\DCPO\bigl(X,\Wstar(\mathcal K,\mathcal B)\bigr),
		\]
		one has
		\[
		\xi_{\mathcal K}(\theta k+\rho k')
		=
		\theta\xi_{\mathcal K}(k)+\rho\xi_{\mathcal K}(k').
		\]
		
		\item For every
		\[
		\chi\in\Wstar(\mathcal K',\mathcal K)
		\]
		and every
		\[
		k\in\DCPO\bigl(X,\Wstar(\mathcal K,\mathcal B)\bigr),
		\]
		one has
		\[
		\xi_{\mathcal K'}
		\bigl(x\mapsto k(x)\circ\chi\bigr)
		=
		\xi_{\mathcal K}(k)\circ\chi.
		\]
	\end{enumerate}
\end{definition}

When no confusion can arise, we write $\xi_k$ instead of
$\xi_{\mathcal K}(k)$. The continuation
\[
k:X\longrightarrow\Wstar(\mathcal K,\mathcal B)
\]
assigns to every classical return value a subsequent quantum computation. The
orchestra evaluates the whole continuation and returns a channel in
$\Wstar(\mathcal K,\mathcal A)$. The compositionality axiom states that a
channel attached after the continuation can be moved outside the orchestra.
The subconvexity axiom is the noncommutative analogue of the linearity
satisfied by integration against a subprobability valuation.

The order on $\QO(\mathcal A,\mathcal B,X)$ is defined pointwise:
\[
\xi\leq\zeta
\quad\Longleftrightarrow\quad
\xi_k\leq\zeta_k
\quad\text{for every admissible continuation }k.
\]
Directed suprema are also computed pointwise:
\[
\left(\sup_{i\in I}\xi_i\right)_k
=
\sup_{i\in I}(\xi_i)_k.
\]
Consequently, $\QO(\mathcal A,\mathcal B,X)$ is a pointed dcpo. Its least
element is the orchestra
\[
\bottom_k=0
\]
for every continuation $k$.

For $x\in X$ and $\Phi\in\Wstar(\mathcal B,\mathcal A)$, the Dirac quantum
orchestra is defined by
\[
\delta(x,\Phi)_k
=
\Phi\circ k(x).
\]
In particular, the unit of the monad is
\[
\eta_X^{\mathcal A}(x)
=
\delta(x,\id_{\mathcal A}),
\]
or equivalently
\[
\eta_X^{\mathcal A}(x)_k
=
k(x).
\]

Let
\[
f:X\longrightarrow\QO(\mathcal B,\mathcal C,Y)
\]
be Scott-continuous. Its Kleisli extension is the Scott-continuous map
\[
\kleisli{f}:
\QO(\mathcal A,\mathcal B,X)
\longrightarrow
\QO(\mathcal A,\mathcal C,Y)
\]
defined by
\[
\kleisli{f}(\xi)_k
=
\xi_{x\mapsto f(x)_k},
\]
where
\[
k\in\DCPO\bigl(Y,\Wstar(\mathcal K,\mathcal C)\bigr).
\]
For a Dirac orchestra this reduces to
\[
\kleisli{f}\bigl(\delta(x,\Phi)\bigr)_k
=
\Phi\circ f(x)_k.
\]
Thus, the first computation produces $x$ and the channel $\Phi$, after which
the continuation $f(x)$ selected by the classical outcome is executed. The
unit and Kleisli extension satisfy the monad laws, and the construction is a
strong parameterised monad on $\DCPO$
\cite{RiceLeichtleWorrallBooth2026}.

Every finite quantum instrument
\[
(\Phi_x)_{x\in X}
\]
embeds into the orchestra monad through
\[
\iota(\Phi)
=
\sum_{x\in\supp(\Phi)}\delta(x,\Phi_x).
\]
Its action on a continuation is
\[
\iota(\Phi)_k
=
\sum_{x\in\supp(\Phi)}\Phi_x\circ k(x).
\]
Hence, on finite classical outcome spaces, the continuation-based
construction recovers the usual composition of quantum instruments.

\subsection{Recursive terms and Kleene approximants}
\label{subsec:recursive-terms}

A recursive program is interpreted by a Scott-continuous endomap on an
appropriate orchestra dcpo. For simplicity, consider a recursive term whose
quantum input and output interfaces are both represented by the same von
Neumann algebra $\mathcal A$, and whose classical result type is interpreted
by a dcpo $X$. Its recursion functional has the form
\[
F:\QO(\mathcal A,\mathcal A,X)
\longrightarrow
\QO(\mathcal A,\mathcal A,X).
\]
The denotation of the recursive term is the least fixed point
\[
\sem{\operatorname{fix}F}
=
\fix(F)
=
\sup_{n\in\N}F^n(\bottom).
\]
We write
\[
A_n(F)=F^n(\bottom)
\]
for the $n$-th Kleene approximant. The sequence
$(A_n(F))_{n\in\N}$ is increasing and represents the behaviour obtained by
unfolding the recursion at most $n$ times. This interpretation of finite
unfoldings is the point of departure for the graph expansion developed below:
execution graphs of degree $n$ will encode the contributions appearing at the
$n$-th stage of the Kleene construction.

The restriction to identical input and output quantum interfaces is natural
for a loop. More generally, recursive allocation may be handled by choosing a
sufficiently large, possibly infinite-dimensional, von Neumann algebra, or by
a refined parameterised type system. These issues are independent of the
graph-series construction considered in this paper, and we shall work with a
fixed quantum interface whenever recursion is involved.

\subsection{A repeat-until-success computation}
\label{subsec:rus}

We conclude with the standard repeat-until-success example, which will also
serve as a test case for the graph-series and Abel constructions. Let
$\mathcal A=\mathcal B(H)$ and let $U:H\to H$ be unitary. We denote by
\[
\mathcal U:\mathcal A\longrightarrow\mathcal A,
\qquad
\mathcal U(M)=U^{*}MU,
\]
the corresponding Heisenberg channel. Suppose that a computation $C$ returns
a Boolean outcome and has denotation
\[
\sem{C}
=
\delta(\mathsf{true},p\mathcal U)
+
\delta(\mathsf{false},(1-p)\id_{\mathcal A}),
\qquad
0<p<1.
\]
The outcome $\mathsf{true}$ indicates success and applies $\mathcal U$, while
$\mathsf{false}$ leaves the quantum state unchanged and requests another
iteration. The corresponding recursion functional is
\[
F(\xi)
=
\delta((),p\mathcal U)+(1-p)\xi,
\qquad
\xi\in\QO(\mathcal A,\mathcal A,1).
\]
Its Kleene approximants are
\[
F^n(\bottom)
=
\delta\bigl((),\bigl(1-(1-p)^n\bigr)\mathcal U\bigr),
\qquad
n\in\N.
\]
Indeed, the coefficient $1-(1-p)^n$ is the probability that at least one of
the first $n$ trials succeeds. Since
\[
\sup_{n\in\N}\bigl(1-(1-p)^n\bigr)=1,
\]
Scott-continuity gives
\[
\fix(F)
=
\sup_{n\in\N}F^n(\bottom)
=
\delta((),\mathcal U).
\]
Thus, although the number of iterations is unbounded, the recursive program
denotes the desired unitary channel. The finite approximants retain the
complete unfolding information, whereas the least fixed point records only
the resummed behaviour. The purpose of the next sections is to organise the
finite unfoldings into a graded graph series and to compare its Abel limit
with the dcpo supremum above.

\section{Execution graphs and graded series}
\label{sec:graphs}

The least-fixed-point semantics recalled in Section~\ref{sec:orchestras}
describes a recursive computation as the supremum of its finite unfoldings.
In order to retain the combinatorial information that is lost after taking
this supremum, we associate finite execution graphs with the individual
histories of a hybrid program. Branching behaviour is represented by a family
of such graphs rather than by a single branching tree. Every graph then
carries an unambiguous quantum transformation, and its length is additive
under concatenation. This provides the grading required for the formal-series
construction.

We first work with a fixed quantum store represented by a von Neumann algebra
$\mathcal A$. This includes the recursive examples considered in this paper.
A version with varying quantum interfaces can be obtained by assigning a von
Neumann algebra to each control location and requiring the channels attached
to consecutive edges to have compatible domains and codomains.

\subsection{Finitary hybrid control systems}
\label{subsec:control-systems}

\begin{definition}
	\label{def:control-system}
	A finitary hybrid control system over $\mathcal A$ is a tuple
	\[
	\Sigma=(L,L_{\mathrm{term}},X,r,\mathcal C)
	\]
	with the following data.
	\begin{enumerate}
		\item $L$ is a set of control locations, and
		$L_{\mathrm{term}}\subseteq L$ is the set of terminal locations.
		
		\item $X$ is the classical result set, and
		\[
		r:L_{\mathrm{term}}\longrightarrow X
		\]
		assigns a result to every terminal location.
		
		\item $\mathcal C$ is a set of commands. Every command $c\in\mathcal C$
		has a source location $s(c)\in L\setminus L_{\mathrm{term}}$, a finite
		nonempty outcome set $O_c$, a target map
		\[
		\tau_c:O_c\longrightarrow L,
		\]
		and a family of channels
		\[
		\Phi_{c,o}\in\Wstar(\mathcal A,\mathcal A),
		\qquad o\in O_c,
		\]
		such that
		\[
		\sum_{o\in O_c}\Phi_{c,o}
		\]
		is subunital.
	\end{enumerate}
\end{definition}

The family $(\Phi_{c,o})_{o\in O_c}$ is a finite quantum instrument. The
classical outcome $o$ determines the next control location $\tau_c(o)$.
Unitary commands, measurements, classically controlled gates, and failing
operations are all covered by this definition. Purely classical control can
be represented by channels equal to either $\id_{\mathcal A}$ or the zero
map.

A control system describes the elementary transitions available to a program.
The program syntax determines which command is selected at each reachable
nonterminal execution prefix.

\subsection{Finite execution graphs}
\label{subsec:finite-execution-graphs}

\begin{definition}
	\label{def:execution-graph}
	A finite execution graph in $\Sigma$ is a finite directed path
	\[
	\Gamma=
	\bigl(
	\ell_0\xrightarrow{(c_1,o_1)}\ell_1
	\xrightarrow{(c_2,o_2)}\cdots
	\xrightarrow{(c_n,o_n)}\ell_n
	\bigr),
	\]
	where, for every $j\in\{1,\ldots,n\}$,
	\[
	s(c_j)=\ell_{j-1},
	\qquad
	\tau_{c_j}(o_j)=\ell_j.
	\]
	Its source, target, and degree are respectively
	\[
	s(\Gamma)=\ell_0,
	\qquad
	t(\Gamma)=\ell_n,
	\qquad
	|\Gamma|=n.
	\]
	For every $\ell\in L$, the empty graph at $\ell$ is denoted by
	$\mathbf 1_{\ell}$ and has degree zero.
\end{definition}

The quantum weight of an execution graph is the channel obtained by composing
the channels encountered along the path in execution order:
\[
\Phi_{\Gamma}
=
\Phi_{c_1,o_1}\circ\Phi_{c_2,o_2}\circ\cdots\circ\Phi_{c_n,o_n}.
\]
For the empty graph, we set
\[
\Phi_{\mathbf 1_{\ell}}=\id_{\mathcal A}.
\]
The order of composition follows the Heisenberg convention: the first command
pulls back the observable produced by the continuation of the path.

If $t(\Gamma_1)=s(\Gamma_2)$, their concatenation is denoted by
\[
\Gamma_1\star\Gamma_2.
\]
It is obtained by first following $\Gamma_1$ and then $\Gamma_2$. One has
\[
|\Gamma_1\star\Gamma_2|
=
|\Gamma_1|+|\Gamma_2|
\]
and
\[
\Phi_{\Gamma_1\star\Gamma_2}
=
\Phi_{\Gamma_1}\circ\Phi_{\Gamma_2}.
\]

\begin{proposition}
	\label{prop:execution-category}
	The control locations form the objects of a category $\Exec(\Sigma)$ whose
	morphisms are finite execution graphs and whose composition is concatenation.
	The degree map
	\[
	|\cdot|:\Exec(\Sigma)\longrightarrow\N
	\]
	is additive under composition. Moreover, the assignment
	\[
	\Gamma\longmapsto\Phi_{\Gamma}
	\]
	preserves identities and composition.
\end{proposition}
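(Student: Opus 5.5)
The plan is to verify the category axioms directly from the description of execution graphs as finite sequences of labelled steps. I would identify a graph $\Gamma$ with its source $\ell_0$ together with the finite word $((c_1,o_1),\ldots,(c_n,o_n))$. The locations $\ell_j$ are then recovered recursively by $\ell_j=\tau_{c_j}(o_j)$, and the source conditions $s(c_j)=\ell_{j-1}$ act as constraints on the word. The point of this encoding is that two paths with the same source and the same word are equal, so checking equalities of morphisms reduces to checking equalities of words. Empty words correspond to the graphs $\mathbf 1_\ell$.

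First, I would check that concatenation is well defined. If $t(\Gamma_1)=s(\Gamma_2)$, the concatenated word again satisfies all the local conditions $s(c_j)=\ell_{j-1}$ and $\tau_{c_j}(o_j)=\ell_j$: each condition involves only one step and its adjacent locations, and at the junction it is exactly the hypothesis $t(\Gamma_1)=s(\Gamma_2)$. The new graph has source $s(\Gamma_1)$ and target $t(\Gamma_2)$. Associativity of $\star$ then follows from associativity of word concatenation, and $\mathbf 1_{s(\Gamma)}\star\Gamma=\Gamma=\Gamma\star\mathbf 1_{t(\Gamma)}$ because the empty word is the unit for concatenation. This gives the category $\Exec(\Sigma)$, and the hom-sets are sets since they are contained in the set of finite words over $\bigsqcup_c\{c\}\times O_c$.

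Next, I would record that the degree is the length of the word. Length is additive under concatenation and vanishes on empty words, so $|\cdot|$ is a functor into the monoid $(\N,+)$ viewed as a one-object category.

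Finally, I would treat the assignment $\Gamma\mapsto\Phi_\Gamma$ as a functor into the one-object category $\Wstar(\mathcal A,\mathcal A)$, with composition $\Phi_{\Gamma_1}\circ\Phi_{\Gamma_2}$ in execution order, as fixed by the Heisenberg convention. Identities are preserved because $\Phi_{\mathbf 1_\ell}=\id_{\mathcal A}$ by definition. For composition, the product $\Phi_{c_1,o_1}\circ\cdots\circ\Phi_{c_{n+m},o_{n+m}}$ splits as $\Phi_{\Gamma_1}\circ\Phi_{\Gamma_2}$ by associativity of composition of maps. I would also note, citing the earlier remark, that compositions of normal completely positive subunital maps remain in $\Wstar(\mathcal A,\mathcal A)$, so $\Phi_\Gamma$ is always well defined.

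No step is a real obstacle; the only point needing care is orientation. The functor is covariant when the composition on the target side is written as $\Phi_{\Gamma_1}\circ\Phi_{\Gamma_2}$, which matches the displayed identity. If one instead insists on the Heisenberg direction of morphisms $\mathcal A\to\mathcal A$, the functor becomes contravariant. Since there is only one object $\mathcal A$, I would simply state that the assignment is a monoid homomorphism from the words to $(\Wstar(\mathcal A,\mathcal A),\circ)$ that respects the stated composition order.
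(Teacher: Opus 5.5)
Your proof is correct and takes the same route as the paper: associativity and units come from path concatenation with the empty graphs $\mathbf 1_\ell$, additivity of degree is immediate, and compatibility of $\Gamma\mapsto\Phi_\Gamma$ follows from associativity of channel composition. You add detail the paper leaves implicit, namely the word encoding, the junction condition $t(\Gamma_1)=s(\Gamma_2)$, and your correct remark that the assignment is contravariant when channels are read in the Heisenberg direction.
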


\begin{proof}
	Associativity follows from associativity of path concatenation, and the empty
	graphs are its identity elements. Additivity of the degree is immediate from
	the definition. Compatibility of the quantum weights follows from
	associativity of channel composition.
\end{proof}

An execution graph is terminating if its target belongs to
$L_{\mathrm{term}}$. In that case, its classical result is
\[
\operatorname{res}(\Gamma)=r(t(\Gamma)).
\]
Nonterminating behaviour is not represented by an infinite graph at this
stage. It appears through arbitrarily long finite prefixes and is recovered by
the directed-complete semantics.

\subsection{Finite unfoldings as prefix-closed graph families}
\label{subsec:finite-unfoldings}

\begin{definition}
	\label{def:finite-unfolding}
	Let $\ell_0\in L$ be an initial location. A finite unfolding rooted at
	$\ell_0$ is a finite set $E\subseteq\Exec(\Sigma)$ satisfying the following
	conditions.
	\begin{enumerate}
		\item Every $\Gamma\in E$ has source $\ell_0$.
		
		\item The empty graph $\mathbf 1_{\ell_0}$ belongs to $E$.
		
		\item The set $E$ is prefix-closed: if
		$\Gamma_1\star\Gamma_2\in E$, then $\Gamma_1\in E$.
		
		\item For every nonterminal graph $\Gamma\in E$ that is not maximal in $E$,
		there exists a command $c_{\Gamma}$ with
		\[
		s(c_{\Gamma})=t(\Gamma)
		\]
		such that the immediate extensions of $\Gamma$ in $E$ are precisely
		\[
		\Gamma\star
		\bigl(
		t(\Gamma)
		\xrightarrow{(c_{\Gamma},o)}
		\tau_{c_{\Gamma}}(o)
		\bigr),
		\qquad o\in O_{c_{\Gamma}}.
		\]
	\end{enumerate}
\end{definition}

Thus, a finite unfolding is equivalently a finite rooted execution tree, but
we identify it with the prefix-closed family of its path graphs. We denote by
$\operatorname{Term}(E)$ the set of terminating maximal graphs of $E$.
Maximal graphs ending at nonterminal locations are unresolved prefixes created
by the finite truncation and do not contribute to the terminating instrument.

\begin{definition}
	\label{def:unfolding-instrument}
	The terminating instrument associated with a finite unfolding $E$ is
	\[
	\mathcal I_E
	=
	\sum_{\Gamma\in\operatorname{Term}(E)}
	\delta\bigl(\operatorname{res}(\Gamma),\Phi_{\Gamma}\bigr).
	\]
\end{definition}

\begin{proposition}
	\label{prop:unfolding-instrument}
	For every finite unfolding $E$, the family of channels occurring in
	$\mathcal I_E$ has a subunital sum. Consequently, $\mathcal I_E$ is a finite
	quantum instrument.
\end{proposition}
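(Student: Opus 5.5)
We argue by induction on the finite rooted tree underlying $E$, and we prove a stronger statement. For every $\Gamma\in E$ let $E_{\Gamma}$ be the set of maximal graphs of $E$ that extend $\Gamma$, and set
\[
S_{\Gamma}
=
\sum_{\Gamma\star\Gamma'\in E_{\Gamma}}\Phi_{\Gamma'}.
\]
Here the sum runs over the residual paths $\Gamma'$ after $\Gamma$. We claim that $S_{\Gamma}(1)\leq 1$, so that $S_{\Gamma}$ is subunital. Applying the claim to $\Gamma=\mathbf 1_{\ell_0}$ gives
\[
\sum_{\Gamma\in\operatorname{Term}(E)}\Phi_{\Gamma}
\leq
\sum_{\Gamma\ \text{maximal}}\Phi_{\Gamma}
=
S_{\mathbf 1_{\ell_0}}
\]
in the completely positive order. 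Dropping the nonterminating maximal graphs only removes completely positive summands, so the sum over $\operatorname{Term}(E)$ is also subunital. It is a finite sum of normal completely positive maps, hence normal and completely positive. By Definition~\ref{def:finite-instrument}, $\mathcal I_E$ is a finite quantum instrument. Distinct terminating graphs may share a result $x$; the component at $x$ is then the sum of the corresponding $\Phi_\Gamma$, and subunitality of the total sum is exactly what the definition requires.

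The claim is proved by induction on the height of $\Gamma$ in $E$, i.e.\ the maximal length of an extension of $\Gamma$ inside $E$. This is finite because $E$ is finite.

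\textbf{Base case.} If $\Gamma$ is maximal, then $S_{\Gamma}=\Phi_{\mathbf 1}=\id_{\mathcal A}$, which is unital.

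\textbf{Inductive step, setup.} If $\Gamma$ is not maximal, it cannot be terminating, since terminating graphs have no extensions: commands have nonterminal sources. So condition (4) of Definition~\ref{def:finite-unfolding} applies and supplies a command $c=c_{\Gamma}$. The immediate extensions of $\Gamma$ are then exactly $\Gamma_o=\Gamma\star e_o$ for $o\in O_c$, where $e_o$ denotes the edge labelled $(c,o)$.

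\textbf{Inductive step, partition.} Every maximal extension of $\Gamma$ passes through exactly one $\Gamma_o$, because paths in $\Exec(\Sigma)$ are determined by their edge sequences. This gives the recursion
\[
S_{\Gamma}=\sum_{o\in O_c}\Phi_{c,o}\circ S_{\Gamma_o}.
\]
The recursion uses $\Phi_{e_o\star\Gamma'}=\Phi_{c,o}\circ\Phi_{\Gamma'}$ from Proposition~\ref{prop:execution-category}.

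\textbf{Inductive step, bound.} By the induction hypothesis, $S_{\Gamma_o}(1)\leq 1$. Since each $\Phi_{c,o}$ is positive,
\[
\Phi_{c,o}\bigl(S_{\Gamma_o}(1)\bigr)\leq\Phi_{c,o}(1).
\]
Summing over $o\in O_c$ and using subunitality of the instrument $(\Phi_{c,o})_o$ from Definition~\ref{def:control-system} gives $S_{\Gamma}(1)\leq 1$.

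The main point requiring care is the partition step. We must check that the maximal extensions of $\Gamma$ are in bijection with the disjoint union over $o$ of the maximal extensions of $\Gamma_o$. This needs two facts. Prefix-closedness ensures that every maximal extension passes through some immediate extension $\Gamma_o$. Condition (4) fixes a single command $c_\Gamma$, so the branching is a genuine instrument and not an arbitrary family of edges, which is what makes the subunitality bound available. The rest is positivity and monotonicity of composition.
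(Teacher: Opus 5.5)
Your proof is correct and follows essentially the same route as the paper. Both argue by structural induction on the unfolding tree, splitting at the first expanded command $c$ and combining positivity, $\Phi_{c,o}\bigl(\Psi_o(1_{\mathcal A})\bigr)\leq\Phi_{c,o}(1_{\mathcal A})$, with subunitality of the instrument $(\Phi_{c,o})_{o}$. The only difference is cosmetic: you bound the sum over all maximal graphs, with unresolved leaves contributing $\id_{\mathcal A}$, and discard the nonterminating ones at the end, whereas the paper inducts on the number of expanded prefixes and lets unresolved leaves contribute zero directly.
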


\begin{proof}
	We argue by induction on the number of expanded nonterminal prefixes. If no
	nonterminal prefix is expanded, then either the root is terminal, in which
	case the only contribution is the identity channel, or the root is
	nonterminal and unresolved, in which case the terminating instrument is zero.
	Both cases are subunital.
	
	Assume that the root is expanded by a command $c$. For every $o\in O_c$, let
	$E_o$ denote the unfolding following outcome $o$, and let $\Psi_o$ be the sum
	of the channels attached to its terminating paths. By the induction
	hypothesis,
	\[
	\Psi_o(1_{\mathcal A})\leq 1_{\mathcal A}.
	\]
	Since $\Phi_{c,o}$ is positive,
	\[
	\Phi_{c,o}\bigl(\Psi_o(1_{\mathcal A})\bigr)
	\leq
	\Phi_{c,o}(1_{\mathcal A}).
	\]
	Summing over the outcomes gives
	\[
	\sum_{o\in O_c}
	\Phi_{c,o}\bigl(\Psi_o(1_{\mathcal A})\bigr)
	\leq
	\sum_{o\in O_c}\Phi_{c,o}(1_{\mathcal A})
	\leq
	1_{\mathcal A}.
	\]
	This is the required subunitality condition.
\end{proof}

Suppose that $E\subseteq E'$ and that $E'$ is obtained from $E$ by expanding
some unresolved maximal prefixes. Then
\[
\mathcal I_E\leq\mathcal I_{E'}
\]
in the completely positive order. Hence an increasing sequence of finite
unfoldings determines a directed family of finite instruments. Its relation
to the Kleene approximants of the corresponding recursive term will be
established in Sections~\ref{sec:graph-semantics} and~\ref{sec:abel}.

\subsection{The graded algebra of execution-graph series}
\label{subsec:graded-graph-series}

Let
\[
\C\langle\!\langle\Exec(\Sigma)\rangle\!\rangle
\]
denote the vector space of all formal sums
\[
a
=
\sum_{\Gamma\in\Exec(\Sigma)}a_{\Gamma}[\Gamma],
\qquad
a_{\Gamma}\in\C.
\]
The symbol $[\Gamma]$ records the execution graph and is not identified with
its quantum weight. Multiplication is first defined on basis elements by
\[
[\Gamma_1][\Gamma_2]
=
\begin{cases}
	[\Gamma_1\star\Gamma_2],
	& t(\Gamma_1)=s(\Gamma_2),\\
	0,
	& t(\Gamma_1)\neq s(\Gamma_2),
\end{cases}
\]
and is then extended by the Cauchy rule
\[
(ab)_{\Gamma}
=
\sum_{\Gamma=\Gamma_1\star\Gamma_2}
a_{\Gamma_1}b_{\Gamma_2}.
\]

\begin{proposition}
	\label{prop:graph-series-algebra}
	The Cauchy product above is well defined and associative. The space
	$\C\langle\!\langle\Exec(\Sigma)\rangle\!\rangle$ is a complete graded
	algebra, with homogeneous component of degree $n$ given by
	\[
	\mathscr A_n
	=
	\prod_{\substack{\Gamma\in\Exec(\Sigma)\\|\Gamma|=n}}
	\C[\Gamma].
	\]
	Its degree-zero component is generated by the empty graphs
	$[\mathbf 1_{\ell}]$, with $\ell\in L$.
\end{proposition}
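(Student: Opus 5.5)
The plan is to reduce everything to a finiteness property of factorisations in the path category $\Exec(\Sigma)$ of Proposition~\ref{prop:execution-category}. The key observation is the following. A finite execution graph $\Gamma$ of degree $n$ admits exactly $n+1$ decompositions $\Gamma=\Gamma_1\star\Gamma_2$: one splits after the $j$-th edge for $j=0,\ldots,n$, with $\Gamma_1$ the initial segment ending at $\ell_j$ and $\Gamma_2$ the remaining segment. Empty graphs are typed by their location. Hence $\mathbf 1_{\ell}\star\Gamma=\Gamma$ forces $\ell=s(\Gamma)$, and no spurious factorisations through empty graphs at other locations arise. Consequently the Cauchy sum
\[
(ab)_{\Gamma}=\sum_{\Gamma=\Gamma_1\star\Gamma_2}a_{\Gamma_1}b_{\Gamma_2}
\]
is a finite sum of $|\Gamma|+1$ terms, for arbitrary, possibly infinitely supported, $a$ and $b$. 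This gives well-definedness. Bilinearity is clear from the formula, and the product agrees on basis elements with the stated rule for $[\Gamma_1][\Gamma_2]$, including the value $0$ when $t(\Gamma_1)\neq s(\Gamma_2)$.

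For associativity, I will expand both $((ab)c)_{\Gamma}$ and $(a(bc))_{\Gamma}$. Each equals
\[
\sum_{\Gamma=\Gamma_1\star\Gamma_2\star\Gamma_3}a_{\Gamma_1}b_{\Gamma_2}c_{\Gamma_3},
\]
summed over all triple factorisations. These correspond bijectively to pairs $0\le i\le j\le n$ of cut points. Reindexing uses the associativity of concatenation from Proposition~\ref{prop:execution-category} together with uniqueness of the cut description. Again all sums are finite, so no convergence issue arises.

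For the grading, I will use additivity of degree, $|\Gamma_1\star\Gamma_2|=|\Gamma_1|+|\Gamma_2|$. It gives $\mathscr A_m\mathscr A_n\subseteq\mathscr A_{m+n}$, and the whole space identifies with $\prod_{n}\mathscr A_n$, since a formal sum is exactly an arbitrary family of coefficients. Completeness then refers to the filtration $F_N=\prod_{n\ge N}\mathscr A_n$. I will check three things:
\begin{itemize}
\item the $F_N$ are ideals with $F_MF_N\subseteq F_{M+N}$;
\item $\bigcap_N F_N=0$;
\item the canonical map to $\varprojlim \C\langle\!\langle\Exec(\Sigma)\rangle\!\rangle/F_N$ is bijective.
\end{itemize}
The last point holds because a compatible system of truncations determines every coefficient. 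Multiplication is continuous for this topology, because $(ab)_{\Gamma}$ depends only on coefficients of degree $\le|\Gamma|$.

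The degree-zero statement is immediate: the graphs of degree $0$ are exactly the $\mathbf 1_{\ell}$. Hence $\mathscr A_0=\prod_{\ell\in L}\C[\mathbf 1_\ell]$, a product of orthogonal idempotents. If $L$ is infinite, the unit is the formal sum $\sum_\ell[\mathbf 1_\ell]$, which is an element of the completed algebra; here one uses that $[\mathbf 1_\ell][\Gamma]$ is nonzero only for $\ell=s(\Gamma)$.

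The main obstacle is only to state precisely what "complete" means, and to make sure the finiteness of factorisations is not undermined by infinitely many locations or commands. This is resolved by noting that factorisations of a given path are determined by cut positions along that path, independently of the size of $L$ or $\mathcal C$.
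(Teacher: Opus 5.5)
Your proposal is correct and takes the same route as the paper: finiteness of each Cauchy coefficient from the $|\Gamma|+1$ cut decompositions of a path, associativity from associativity of concatenation, the grading from additivity of the degree, and completeness from the identification with $\prod_{n}\mathscr A_n$. Your check via the filtration and inverse limit, and your remark that $\sum_{\ell}[\mathbf 1_\ell]$ is the unit, make precise what the paper's one-line proof leaves implicit.
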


\begin{proof}
	A graph of degree $n$ has exactly $n+1$ decompositions into an initial segment
	and a final segment. Therefore, every coefficient in the Cauchy product is a
	finite sum. Associativity follows from associativity of concatenation.
	Additivity of the degree under concatenation gives the graded decomposition.
	Completeness refers to the product decomposition
	\[
	\C\langle\!\langle\Exec(\Sigma)\rangle\!\rangle
	=
	\prod_{n\in\N}\mathscr A_n.
	\]
\end{proof}

This construction is a special case of formal series indexed by an
$\N$-graded small category, as considered in
\cite{Magnot2025FormalSeries}. No differential structure on the series space
is required in the present paper. We shall use only the grading, the local
finiteness of the Cauchy product, and completeness with respect to increasing
degree.

For a formal parameter $q$, define the degree-weighting operator
\[
R_q\left(
\sum_{\Gamma}a_{\Gamma}[\Gamma]
\right)
=
\sum_{\Gamma}q^{|\Gamma|}a_{\Gamma}[\Gamma].
\]
Since the degree is additive, $R_q$ is an algebra endomorphism:
\[
R_q(ab)=R_q(a)R_q(b).
\]
With the parametrisation
\[
q=e^{-t},
\qquad t>0,
\]
we write $R_t=R_{e^{-t}}$. Then
\[
R_{t+s}=R_tR_s,
\qquad s,t>0,
\]
so that $(R_t)_{t>0}$ is the semigroup generated by the graph-degree operator
\[
N[\Gamma]=|\Gamma|[\Gamma].
\]
The limit $t\to 0^{+}$, equivalently $q\to 1^{-}$, removes the exponential
suppression of long execution graphs. Its semantic meaning will be analysed
in Section~\ref{sec:abel}.

\subsection{The repeat-until-success graph family}
\label{subsec:rus-graphs}

Consider the repeat-until-success computation of
Section~\ref{subsec:rus}. It has one nonterminal location $\ell$ and one
terminal location $\ell_{\mathrm{done}}$. The unique command has outcomes
$\mathsf{false}$ and $\mathsf{true}$, with channels
\[
\Phi_{\mathsf{false}}
=
(1-p)\id_{\mathcal A},
\qquad
\Phi_{\mathsf{true}}
=
p\mathcal U.
\]
For every $n\in\N$, there is exactly one terminating execution graph with
$n$ failures followed by one success. Denote it by $\Gamma_n$. Its degree and
quantum weight are
\[
|\Gamma_n|=n+1
\]
and
\[
\Phi_{\Gamma_n}
=
p(1-p)^n\mathcal U.
\]
The associated formal execution series is
\[
\mathscr G_{\mathrm{RUS}}(q)
=
\sum_{n\geq 0}q^{n+1}[\Gamma_n].
\]
After evaluation of the graph weights, its channel-valued generating series
is
\[
\sum_{n\geq 0}q^{n+1}\Phi_{\Gamma_n}
=
\frac{pq}{1-(1-p)q}\mathcal U,
\qquad
|q|<(1-p)^{-1}.
\]
In particular, the Abel limit exists and satisfies
\[
\lim_{q\to 1^{-}}
\sum_{n\geq 0}q^{n+1}\Phi_{\Gamma_n}
=
\mathcal U,
\]
in agreement with the least-fixed-point calculation of
Section~\ref{subsec:rus}. The general relation between graph-series evaluation
and recursive denotational semantics is the subject of the next two sections.

\section{Compositional graph-series semantics}
\label{sec:graph-semantics}

The graph algebra introduced in Section~\ref{sec:graphs} records finite
execution histories independently of their denotational interpretation. We now
define an evaluation of execution graphs in the quantum orchestra monad and
show that graph concatenation and continuation grafting reproduce ordinary
channel composition and Kleisli composition, respectively. The resulting
construction gives a compositional semantics for finite unfoldings and, after
completion with respect to the graph degree, for locally finite execution
series.

Throughout this section, the quantum store is represented by a fixed von
Neumann algebra $\mathcal A$. All execution graphs are assumed to belong to a
finitary hybrid control system $\Sigma$ as in
Definition~\ref{def:control-system}.

\subsection{Evaluation of terminating graph polynomials}
\label{subsec:graph-polynomial-evaluation}

Let $X$ be the classical result set of $\Sigma$. A terminating graph
polynomial is a finite sum
\[
a
=
\sum_{\Gamma\in F}a_{\Gamma}[\Gamma],
\]
where $F$ is a finite set of terminating execution graphs and
$a_{\Gamma}\in\mathbb R_{+}$. We call $a$ semantically admissible if
\[
\sum_{\Gamma\in F}a_{\Gamma}\Phi_{\Gamma}
\]
is subunital.

\begin{definition}
	\label{def:graph-evaluation}
	The semantic evaluation of an admissible terminating graph polynomial is the
	finite quantum instrument
	\[
	\operatorname{ev}_{X}(a)
	=
	\sum_{\Gamma\in F}
	a_{\Gamma}\,
	\delta\bigl(\operatorname{res}(\Gamma),\Phi_{\Gamma}\bigr).
	\]
	Through the canonical embedding of finite instruments into quantum
	orchestras, we also regard
	\[
	\operatorname{ev}_{X}(a)
	\]
	as an element of $\QO(\mathcal A,\mathcal A,X)$.
\end{definition}

The admissibility condition is exactly the condition required for the sum in
Definition~\ref{def:graph-evaluation} to be a quantum instrument. In
particular, the characteristic polynomial of the terminating paths of a
finite unfolding,
\[
\mathscr G_{E}
=
\sum_{\Gamma\in\operatorname{Term}(E)}[\Gamma],
\]
is admissible by Proposition~\ref{prop:unfolding-instrument}, and
\[
\operatorname{ev}_{X}(\mathscr G_{E})
=
\mathcal I_{E}.
\]

The evaluation is compatible with nonnegative linear combinations whenever
the resulting polynomial remains admissible. More precisely, if $a$ and $b$
are admissible and if $\alpha,\beta\in\mathbb R_{+}$ satisfy
$\alpha+\beta\leq 1$, then
\[
\operatorname{ev}_{X}(\alpha a+\beta b)
=
\alpha\operatorname{ev}_{X}(a)
+
\beta\operatorname{ev}_{X}(b).
\]
This is the graph-level counterpart of the subconvexity axiom for quantum
orchestras.

\subsection{Open graph polynomials and channel composition}
\label{subsec:open-graph-polynomials}

For control locations $\ell,m\in L$, let
\[
\mathscr P_{+}(\ell,m)
\]
denote the cone of finite graph polynomials with nonnegative coefficients,
supported on execution graphs with source $\ell$ and target $m$. Their
evaluation is defined by
\[
\operatorname{ev}_{\ell,m}
\left(
\sum_{\Gamma\in F}a_{\Gamma}[\Gamma]
\right)
=
\sum_{\Gamma\in F}a_{\Gamma}\Phi_{\Gamma},
\]
whenever the right-hand side is a normal completely positive subunital map.
Thus,
\[
\operatorname{ev}_{\ell,m}(a)
\in
\Wstar(\mathcal A,\mathcal A).
\]

If $a\in\mathscr P_{+}(\ell,m)$ and
$b\in\mathscr P_{+}(m,n)$, their Cauchy product is supported on paths from
$\ell$ to $n$. The order of the factors agrees with execution order: the
paths represented by $a$ are executed first, and the paths represented by
$b$ are executed afterwards.

\begin{proposition}
	\label{prop:evaluation-composition}
	Let $a\in\mathscr P_{+}(\ell,m)$ and
	$b\in\mathscr P_{+}(m,n)$ be semantically admissible. Then $ab$ is
	semantically admissible and
	\[
	\operatorname{ev}_{\ell,n}(ab)
	=
	\operatorname{ev}_{\ell,m}(a)
	\circ
	\operatorname{ev}_{m,n}(b).
	\]
\end{proposition}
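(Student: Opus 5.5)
The plan is to compute the Cauchy product explicitly on the finite supports and then use bilinearity of composition. Write $a=\sum_{\Gamma_1\in F_1}a_{\Gamma_1}[\Gamma_1]$ with $F_1$ consisting of paths from $\ell$ to $m$, and $b=\sum_{\Gamma_2\in F_2}b_{\Gamma_2}[\Gamma_2]$ with $F_2$ consisting of paths from $m$ to $n$. Every pair $(\Gamma_1,\Gamma_2)\in F_1\times F_2$ is composable, since $t(\Gamma_1)=m=s(\Gamma_2)$. By the Cauchy rule,
\[
ab=\sum_{\Gamma}\Bigl(\sum_{\Gamma=\Gamma_1\star\Gamma_2}a_{\Gamma_1}b_{\Gamma_2}\Bigr)[\Gamma],
\]
which is a finite polynomial with nonnegative coefficients supported on paths from $\ell$ to $n$. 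Hence $ab\in\mathscr P_{+}(\ell,n)$.

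Regrouping this finite sum by pairs rather than by the resulting graph gives
\[
\sum_{\Gamma}(ab)_{\Gamma}\Phi_{\Gamma}=\sum_{(\Gamma_1,\Gamma_2)\in F_1\times F_2}a_{\Gamma_1}b_{\Gamma_2}\Phi_{\Gamma_1\star\Gamma_2}.
\]
One point needs care here. A graph $\Gamma$ may factor as $\Gamma_1\star\Gamma_2$ in several ways, but each factorisation corresponds to exactly one pair $(\Gamma_1,\Gamma_2)$. Different pairs with the same concatenation are simply collected into the same coefficient, so nothing is lost or double counted. By Proposition~\ref{prop:execution-category}, $\Phi_{\Gamma_1\star\Gamma_2}=\Phi_{\Gamma_1}\circ\Phi_{\Gamma_2}$. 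Composition of linear maps is bilinear, so the double sum factors as
\[
\Bigl(\sum_{\Gamma_1}a_{\Gamma_1}\Phi_{\Gamma_1}\Bigr)\circ\Bigl(\sum_{\Gamma_2}b_{\Gamma_2}\Phi_{\Gamma_2}\Bigr)=\operatorname{ev}_{\ell,m}(a)\circ\operatorname{ev}_{m,n}(b).
\]

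Admissibility of $ab$ then follows from this identity, since the right-hand side is a composite of two maps in $\Wstar(\mathcal A,\mathcal A)$. It is normal and completely positive, because these properties are stable under composition. It is subunital, because positivity of $\operatorname{ev}_{\ell,m}(a)$ together with $\operatorname{ev}_{m,n}(b)(1)\le 1$ gives $\operatorname{ev}_{\ell,m}(a)(\operatorname{ev}_{m,n}(b)(1))\le\operatorname{ev}_{\ell,m}(a)(1)\le 1$.

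The only delicate point is the regrouping step, that is, matching the coefficients of the Cauchy product with the sum over pairs. Everything else is formal.
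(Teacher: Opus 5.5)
Your proposal is correct and follows the paper's proof: expand the Cauchy product over pairs, use $\Phi_{\Gamma_1\star\Gamma_2}=\Phi_{\Gamma_1}\circ\Phi_{\Gamma_2}$, factor by bilinearity of composition, and deduce admissibility of $ab$ because a composite of normal completely positive subunital maps is again such a map. Your explicit justification that pairs $(\Gamma_1,\Gamma_2)\in F_1\times F_2$ correspond bijectively to the nonzero terms of the Cauchy coefficients is a point the paper uses silently when it writes $ab=\sum_{\Gamma,\Delta}a_{\Gamma}b_{\Delta}[\Gamma\star\Delta]$.
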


\begin{proof}
	Write
	\[
	a=\sum_{\Gamma}a_{\Gamma}[\Gamma]
	\qquad\text{and}\qquad
	b=\sum_{\Delta}b_{\Delta}[\Delta].
	\]
	By definition of the Cauchy product,
	\[
	ab
	=
	\sum_{\Gamma,\Delta}
	a_{\Gamma}b_{\Delta}[\Gamma\star\Delta].
	\]
	Using
	\[
	\Phi_{\Gamma\star\Delta}
	=
	\Phi_{\Gamma}\circ\Phi_{\Delta},
	\]
	we obtain
	\[
	\begin{aligned}
		\operatorname{ev}_{\ell,n}(ab)
		&=
		\sum_{\Gamma,\Delta}
		a_{\Gamma}b_{\Delta}
		\bigl(\Phi_{\Gamma}\circ\Phi_{\Delta}\bigr)
		\\
		&=
		\left(\sum_{\Gamma}a_{\Gamma}\Phi_{\Gamma}\right)
		\circ
		\left(\sum_{\Delta}b_{\Delta}\Phi_{\Delta}\right)
		\\
		&=
		\operatorname{ev}_{\ell,m}(a)
		\circ
		\operatorname{ev}_{m,n}(b).
	\end{aligned}
	\]
	The composition of two normal completely positive subunital maps is again
	normal, completely positive, and subunital. Hence $ab$ is semantically
	admissible.
\end{proof}

Thus, graph-series multiplication is not merely a combinatorial operation: it
is a refinement of sequential composition in the denotational model.

\subsection{Continuation grafting and Kleisli composition}
\label{subsec:continuation-grafting}

Sequential composition of hybrid computations is more general than ordinary
channel composition because the second computation may depend on the
classical result of the first one. We now express this dependence by grafting
continuation graphs onto terminating execution graphs.

Let $X$ and $Y$ be classical result sets. For every $x\in X$, choose an
initial control location $\kappa(x)$ for a continuation computation returning
values in $Y$. Let $a$ be an admissible terminating graph polynomial with
results in $X$, and let
\[
b_x
=
\sum_{\Delta}b_{x,\Delta}[\Delta]
\]
be an admissible terminating graph polynomial rooted at $\kappa(x)$ and with
results in $Y$.

For a terminating graph $\Gamma$ with
\[
\operatorname{res}(\Gamma)=x,
\]
and a graph $\Delta$ occurring in $b_x$, we denote by
\[
\Gamma\diamond\Delta
\]
the execution graph obtained by grafting $\Delta$ after $\Gamma$, identifying
the result interface $x$ with the continuation input $\kappa(x)$. Its degree,
quantum weight, and result are defined by
\[
|\Gamma\diamond\Delta|
=
|\Gamma|+|\Delta|,
\]
\[
\Phi_{\Gamma\diamond\Delta}
=
\Phi_{\Gamma}\circ\Phi_{\Delta},
\]
and
\[
\operatorname{res}(\Gamma\diamond\Delta)
=
\operatorname{res}(\Delta).
\]

\begin{definition}
	\label{def:continuation-grafting}
	The continuation grafting of $a$ with the family
	$b=(b_x)_{x\in X}$ is the graph polynomial
	\[
	a\diamond b
	=
	\sum_{\Gamma}
	\sum_{\Delta}
	a_{\Gamma}b_{\operatorname{res}(\Gamma),\Delta}
	[\Gamma\diamond\Delta].
	\]
\end{definition}

Assume that $X$ is a dcpo and define
\[
f:X\longrightarrow\QO(\mathcal A,\mathcal A,Y)
\]
by
\[
f(x)=\operatorname{ev}_{Y}(b_x).
\]

\begin{theorem}
	\label{thm:grafting-kleisli}
	Assume that $a$ and every $b_x$ are semantically admissible and that
	$f$ is Scott-continuous. Then $a\diamond b$ is semantically admissible and
	\[
	\operatorname{ev}_{Y}(a\diamond b)
	=
	\kleisli{f}\bigl(\operatorname{ev}_{X}(a)\bigr).
	\]
\end{theorem}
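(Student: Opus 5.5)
The plan is to reduce everything to Dirac orchestras, where both sides can be computed explicitly, and then use linearity of evaluation together with linearity of $\kleisli{f}$ on finite sums.

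\textbf{Step 1: the right-hand side on a Dirac summand.} Write $a=\sum_{\Gamma\in F}a_\Gamma[\Gamma]$. By Definition~\ref{def:graph-evaluation}, $\operatorname{ev}_X(a)=\sum_\Gamma a_\Gamma\,\delta(\operatorname{res}(\Gamma),\Phi_\Gamma)$, viewed as an orchestra through the embedding $\iota$. Its action on a continuation $h$ is $\sum_\Gamma a_\Gamma\,\Phi_\Gamma\circ h(\operatorname{res}(\Gamma))$. Applying the formula $\kleisli{f}(\xi)_k=\xi_{x\mapsto f(x)_k}$ with $h(x)=f(x)_k$ gives
\[
\kleisli{f}\bigl(\operatorname{ev}_X(a)\bigr)_k=\sum_{\Gamma}a_\Gamma\,\Phi_\Gamma\circ f(\operatorname{res}(\Gamma))_k .
\]
Scott-continuity of $f$ is used only here, to guarantee that $x\mapsto f(x)_k$ is an admissible continuation in $\DCPO(X,\Wstar(\mathcal K,\mathcal A))$.

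\textbf{Step 2: expand the continuation.} Next I substitute $f(x)_k=\sum_\Delta b_{x,\Delta}\,\Phi_\Delta\circ k(\operatorname{res}(\Delta))$, which is the evaluation of the finite instrument $\operatorname{ev}_Y(b_x)$ on $k$. This yields
\[
\sum_\Gamma\sum_\Delta a_\Gamma b_{\operatorname{res}(\Gamma),\Delta}\,\Phi_\Gamma\circ\Phi_\Delta\circ k(\operatorname{res}(\Delta)).
\]
Here I use that composition with $\Phi_\Gamma$ is additive and positively homogeneous on finite sums.

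\textbf{Step 3: match with the left-hand side.} Using $\Phi_{\Gamma\diamond\Delta}=\Phi_\Gamma\circ\Phi_\Delta$ and $\operatorname{res}(\Gamma\diamond\Delta)=\operatorname{res}(\Delta)$, each term in Step 2 is exactly $a_\Gamma b_{\operatorname{res}(\Gamma),\Delta}\,\delta(\operatorname{res}(\Gamma\diamond\Delta),\Phi_{\Gamma\diamond\Delta})_k$. Summing gives $\operatorname{ev}_Y(a\diamond b)_k$, provided $a\diamond b$ is admissible. Since the grafting is a finite sum (finitely many $\Gamma$, each $b_x$ finite), no convergence issue arises.

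\textbf{Main obstacle: admissibility of $a\diamond b$.} I must show that $\sum_{\Gamma,\Delta}a_\Gamma b_{\operatorname{res}(\Gamma),\Delta}\Phi_\Gamma\circ\Phi_\Delta$ is subunital. The plan is to mimic the proof of Proposition~\ref{prop:unfolding-instrument}.

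1. Set $\Psi_x=\sum_\Delta b_{x,\Delta}\Phi_\Delta$. This is subunital by admissibility of $b_x$, so $\Psi_x(1)\le 1$.
2. By positivity of $\Phi_\Gamma$, we get $\Phi_\Gamma(\Psi_{\operatorname{res}(\Gamma)}(1))\le\Phi_\Gamma(1)$.
3. Multiplying by $a_\Gamma\ge0$ and summing gives a total bounded by $\sum_\Gamma a_\Gamma\Phi_\Gamma(1)\le1$, by admissibility of $a$.
4. Complete positivity and normality hold because finite nonnegative combinations of compositions of normal CP maps are again normal and CP.

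One point needs care. Distinct pairs $(\Gamma,\Delta)$ might produce the same grafted graph, so coefficients could merge. This is harmless: both the evaluation and the subunitality estimate are linear in the coefficients. Alternatively, I can treat $a\diamond b$ as indexed by pairs.

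Finally, Steps 1–3 hold for every $\mathcal K$ and every $k$, so the two orchestras are equal.
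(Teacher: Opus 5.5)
Your proposal is correct and follows essentially the same route as the paper. Both arguments decompose $\operatorname{ev}_X(a)$ into Dirac summands, apply the Kleisli formula, and match terms using $\Phi_{\Gamma\diamond\Delta}=\Phi_\Gamma\circ\Phi_\Delta$ and $\operatorname{res}(\Gamma\diamond\Delta)=\operatorname{res}(\Delta)$. The differences are small. You check the identity continuation by continuation, where the paper invokes linearity of $\kleisli{f}$ on finite sums. You also prove subunitality of $a\diamond b$ directly by a positivity estimate, whereas the paper reads admissibility off from the right-hand side being a quantum orchestra; your version makes both steps explicit.
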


\begin{proof}
	Let
	\[
	a
	=
	\sum_{\Gamma}a_{\Gamma}[\Gamma].
	\]
	By Definition~\ref{def:graph-evaluation},
	\[
	\operatorname{ev}_{X}(a)
	=
	\sum_{\Gamma}
	a_{\Gamma}
	\delta\bigl(\operatorname{res}(\Gamma),\Phi_{\Gamma}\bigr).
	\]
	Kleisli extension is subconvex-linear on finite instruments. Hence
	\[
	\kleisli{f}\bigl(\operatorname{ev}_{X}(a)\bigr)
	=
	\sum_{\Gamma}
	a_{\Gamma}
	\kleisli{f}
	\left(
	\delta\bigl(\operatorname{res}(\Gamma),\Phi_{\Gamma}\bigr)
	\right).
	\]
	For a Dirac instrument, the Kleisli composition formula gives
	\[
	\kleisli{f}
	\left(
	\delta\bigl(\operatorname{res}(\Gamma),\Phi_{\Gamma}\bigr)
	\right)
	=
	\sum_{\Delta}
	b_{\operatorname{res}(\Gamma),\Delta}
	\delta
	\left(
	\operatorname{res}(\Delta),
	\Phi_{\Gamma}\circ\Phi_{\Delta}
	\right).
	\]
	Using the defining properties of the grafted graph
	$\Gamma\diamond\Delta$, the right-hand side becomes
	\[
	\sum_{\Delta}
	b_{\operatorname{res}(\Gamma),\Delta}
	\delta
	\left(
	\operatorname{res}(\Gamma\diamond\Delta),
	\Phi_{\Gamma\diamond\Delta}
	\right).
	\]
	Summing over $\Gamma$ yields exactly
	\[
	\operatorname{ev}_{Y}(a\diamond b).
	\]
	Since the right-hand side is a quantum orchestra, the graph polynomial
	$a\diamond b$ is semantically admissible.
\end{proof}

Theorem~\ref{thm:grafting-kleisli} is the fundamental compositionality result
for finite graph semantics. It states that grafting execution histories is a
combinatorial refinement of Kleisli composition in the quantum orchestra
monad.

\subsection{Degree weighting and compositionality}
\label{subsec:weighted-compositionality}

The graph-degree weighting introduced in
Section~\ref{subsec:graded-graph-series} is compatible with both concatenation
and continuation grafting. For a graph polynomial $a$, recall that
\[
R_q(a)
=
\sum_{\Gamma}q^{|\Gamma|}a_{\Gamma}[\Gamma].
\]
For a family $b=(b_x)_{x\in X}$, write
\[
R_q(b)
=
\bigl(R_q(b_x)\bigr)_{x\in X}.
\]

\begin{proposition}
	\label{prop:weighted-grafting}
	For every formal parameter $q$,
	\[
	R_q(a\diamond b)
	=
	R_q(a)\diamond R_q(b).
	\]
	Whenever the graph polynomials involved are semantically admissible, one has
	\[
	\operatorname{ev}_{Y}\bigl(R_q(a\diamond b)\bigr)
	=
	\kleisli{f_q}
	\left(
	\operatorname{ev}_{X}\bigl(R_q(a)\bigr)
	\right),
	\]
	where
	\[
	f_q(x)
	=
	\operatorname{ev}_{Y}\bigl(R_q(b_x)\bigr).
	\]
\end{proposition}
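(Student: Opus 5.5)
The plan has two parts: a purely combinatorial identity, followed by an application of Theorem~\ref{thm:grafting-kleisli}.

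\textbf{The formal identity.} Both sides of $R_q(a\diamond b)=R_q(a)\diamond R_q(b)$ are finite sums over pairs $(\Gamma,\Delta)$, where $\Gamma$ runs over the support of $a$ and $\Delta$ over the support of $b_{\operatorname{res}(\Gamma)}$. I would compare coefficients pair by pair.

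By Definition~\ref{def:continuation-grafting}, the left-hand side carries the coefficient $q^{|\Gamma\diamond\Delta|}a_{\Gamma}b_{\operatorname{res}(\Gamma),\Delta}$ on $[\Gamma\diamond\Delta]$. The right-hand side carries $\bigl(q^{|\Gamma|}a_{\Gamma}\bigr)\bigl(q^{|\Delta|}b_{\operatorname{res}(\Gamma),\Delta}\bigr)$, because $R_q$ does not change supports or results. Additivity $|\Gamma\diamond\Delta|=|\Gamma|+|\Delta|$ then gives equality term by term.

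The one subtlety is that distinct pairs could produce the same grafted graph, so that coefficients collect. Since the identification is the same on both sides, I would phrase the comparison as an equality of formal sums indexed by pairs before collecting. This argument works for a formal parameter $q$, exactly as for the endomorphism property of $R_q$ in Section~\ref{subsec:graded-graph-series}.

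\textbf{The semantic identity.} Here I take $q$ real with $0\le q\le 1$, so that all coefficients stay in $\R_+$; the statement is meant where the evaluations make sense. The hypotheses are that $R_q(a)$ and every $R_q(b_x)$ are semantically admissible, and that $f_q$ is Scott-continuous. In the finite-instrument setting the latter comes for free when $X$ is discrete, and otherwise it is part of "admissible" as in Theorem~\ref{thm:grafting-kleisli}.

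I would apply Theorem~\ref{thm:grafting-kleisli} with $a$ replaced by $R_q(a)$ and each $b_x$ replaced by $R_q(b_x)$; the associated map $f$ is then exactly $f_q$. The theorem gives
\[
\operatorname{ev}_Y\bigl(R_q(a)\diamond R_q(b)\bigr)=\kleisli{f_q}\bigl(\operatorname{ev}_X(R_q(a))\bigr).
\]
Rewriting the left side with the formal identity from the first part yields the claim.

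\textbf{Admissibility of the weights.} If only $a$ and $b_x$ are assumed admissible, I would derive admissibility of the weighted polynomials for $0\le q\le1$. For each $\Gamma$, the difference $\Phi_\Gamma-q^{|\Gamma|}\Phi_\Gamma=(1-q^{|\Gamma|})\Phi_\Gamma$ is completely positive. Summing gives $\sum_\Gamma q^{|\Gamma|}a_\Gamma\Phi_\Gamma\le\sum_\Gamma a_\Gamma\Phi_\Gamma$, so subunitality is inherited.

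For Scott-continuity of $f_q$ from that of $f$, the clean route is to assume $X$ discrete, or to note that $f_q$ is built from $f$ degree by degree. I expect this continuity point to be the only real obstacle, and I would handle it by adopting it as a hypothesis under "semantically admissible", as the statement permits.
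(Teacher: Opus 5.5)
Your proposal is correct and follows the paper's proof: additivity $|\Gamma\diamond\Delta|=|\Gamma|+|\Delta|$ gives the formal identity coefficient by coefficient, and the semantic identity is Theorem~\ref{thm:grafting-kleisli} applied to $R_q(a)$ and the family $R_q(b_x)$. Your additional remarks make explicit two assumptions the paper leaves implicit: restricting to $0\le q\le 1$ so that admissibility is inherited, and treating Scott-continuity of $f_q$ as a hypothesis.
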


\begin{proof}
	For every pair of composable graphs,
	\[
	|\Gamma\diamond\Delta|
	=
	|\Gamma|+|\Delta|.
	\]
	Therefore,
	\[
	q^{|\Gamma\diamond\Delta|}
	=
	q^{|\Gamma|}q^{|\Delta|},
	\]
	which proves the first identity coefficient by coefficient. The second
	identity follows from Theorem~\ref{thm:grafting-kleisli} applied to the
	weighted graph polynomials.
\end{proof}

With $q=e^{-t}$, the proposition says that exponential damping by execution
length is compositional. The regularisation can therefore be applied locally
to program components before they are combined.

\subsection{Locally finite execution series}
\label{subsec:locally-finite-execution-series}

We now pass from finite graph polynomials to infinite execution series. A
terminating execution series with results in $X$ is a formal sum
\[
\mathscr G
=
\sum_{\Gamma}a_{\Gamma}[\Gamma],
\qquad
a_{\Gamma}\in\mathbb R_{+},
\]
supported on terminating execution graphs. It is called locally finite if,
for every $n\in\N$, only finitely many graphs of degree at most $n$ have a
nonzero coefficient. Its degree-$n$ truncation is
\[
\mathscr G_{\leq n}
=
\sum_{|\Gamma|\leq n}a_{\Gamma}[\Gamma].
\]

\begin{definition}
	\label{def:semantically-admissible-series}
	A locally finite terminating execution series $\mathscr G$ is semantically
	admissible if every truncation $\mathscr G_{\leq n}$ is admissible.
\end{definition}

For a semantically admissible series, the sequence
\[
\left(
\operatorname{ev}_{X}(\mathscr G_{\leq n})
\right)_{n\in\N}
\]
is increasing in the quantum orchestra order. We define
\[
\operatorname{ev}_{X}(\mathscr G)
=
\sup_{n\in\N}
\operatorname{ev}_{X}(\mathscr G_{\leq n}).
\]
The supremum exists because
$\QO(\mathcal A,\mathcal A,X)$ is a pointed dcpo.

For $q\in(0,1]$, define
\[
R_q(\mathscr G)
=
\sum_{\Gamma}q^{|\Gamma|}a_{\Gamma}[\Gamma].
\]
Since multiplication by $q^{|\Gamma|}$ decreases every nonnegative
coefficient, the truncations of $R_q(\mathscr G)$ remain admissible whenever
the truncations of $\mathscr G$ are admissible. We therefore obtain the
$q$-weighted denotation
\[
\mathcal Z_{\mathscr G}(q)
=
\operatorname{ev}_{X}\bigl(R_q(\mathscr G)\bigr).
\]
For $q=e^{-t}$, we also write
\[
\mathcal Z_{\mathscr G}(t)
=
\operatorname{ev}_{X}\bigl(R_t(\mathscr G)\bigr).
\]

\begin{proposition}
	\label{prop:monotonicity-in-q}
	Let $\mathscr G$ be a semantically admissible execution series. If
	\[
	0<q\leq q'\leq 1,
	\]
	then
	\[
	\mathcal Z_{\mathscr G}(q)
	\leq
	\mathcal Z_{\mathscr G}(q').
	\]
	Equivalently, the family
	\[
	t\longmapsto\mathcal Z_{\mathscr G}(t)
	\]
	is decreasing with respect to $t>0$.
\end{proposition}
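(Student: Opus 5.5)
The plan is to reduce the comparison to finite truncations and then pass to the supremum. Since $\mathcal Z_{\mathscr G}(q)=\sup_n \operatorname{ev}_X(R_q(\mathscr G)_{\leq n})$, with the supremum computed pointwise in $\QO(\mathcal A,\mathcal A,X)$, it suffices to prove, for every fixed $n$,
\[
\operatorname{ev}_X\bigl(R_q(\mathscr G)_{\leq n}\bigr)\leq\operatorname{ev}_X\bigl(R_{q'}(\mathscr G)_{\leq n}\bigr).
\]
Once this holds, each approximant of $\mathcal Z_{\mathscr G}(q)$ lies below an approximant of $\mathcal Z_{\mathscr G}(q')$, hence below its supremum. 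Taking the supremum over $n$ then gives the claim.

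For the truncation inequality, note that $R_q(\mathscr G)_{\leq n}=R_q(\mathscr G_{\leq n})$ is a finite sum over graphs of degree at most $n$. Local finiteness guarantees that only finitely many of these graphs have nonzero coefficient. For a fixed continuation $k$, Definition~\ref{def:graph-evaluation} together with the embedding $\iota$ gives
\[
\operatorname{ev}_X\bigl(R_{q'}(\mathscr G_{\leq n})\bigr)_k-\operatorname{ev}_X\bigl(R_q(\mathscr G_{\leq n})\bigr)_k
=\sum_{|\Gamma|\leq n}\bigl(q'^{|\Gamma|}-q^{|\Gamma|}\bigr)a_\Gamma\,\Phi_\Gamma\circ k(\operatorname{res}(\Gamma)).
\]
Each summand is completely positive, for three reasons. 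First, $q'^{m}-q^{m}\geq 0$ whenever $0<q\leq q'$. Second, $a_\Gamma\geq 0$. Third, a composite of completely positive maps is completely positive. Hence the difference is completely positive, which is the required order relation in $\Wstar(\mathcal K,\mathcal A)$ for every $\mathcal K$ and every $k$. Admissibility of both truncations was already noted before the statement, because multiplying by $q^{|\Gamma|}\leq 1$ only shrinks the nonnegative coefficients. So both sides are genuine orchestras.

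The main point needing care is that the order on orchestras is pointwise over all continuations. I would therefore verify the inequality for an arbitrary $k$, not merely for finite-instrument components. The expression above already handles this uniformly, so I do not expect a genuine obstacle. The equivalent reformulation in $t$ follows directly from the substitution $q=e^{-t}$. Since $t\mapsto e^{-t}$ is decreasing, $t\leq t'$ implies $e^{-t'}\leq e^{-t}$, and so $\mathcal Z_{\mathscr G}(t')\leq\mathcal Z_{\mathscr G}(t)$.
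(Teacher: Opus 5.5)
Your proposal is correct and follows the paper's proof. Both arguments compare the degree-$n$ truncations using $q^{|\Gamma|}\leq (q')^{|\Gamma|}$, so that their difference is a finite nonnegative combination of completely positive maps, and then take directed suprema over $n$; your explicit check against an arbitrary continuation $k$ and the substitution $q=e^{-t}$ simply spell out details the paper leaves implicit.
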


\begin{proof}
	For every graph $\Gamma$,
	\[
	q^{|\Gamma|}
	\leq
	(q')^{|\Gamma|}.
	\]
	The difference between the corresponding finite truncations is therefore a
	finite sum of completely positive maps with nonnegative coefficients. Hence
	\[
	\operatorname{ev}_{X}
	\bigl(R_q(\mathscr G_{\leq n})\bigr)
	\leq
	\operatorname{ev}_{X}
	\bigl(R_{q'}(\mathscr G_{\leq n})\bigr).
	\]
	Taking directed suprema over $n$ proves the result.
\end{proof}

\subsection{Graph-series semantics of a program}
\label{subsec:program-graph-series}

Let $P$ be a closed finitary hybrid program with initial control location
$\ell_P$ and classical result set $X$. Denote by
\[
\operatorname{Term}(P)
\]
the set of all finite terminating execution graphs generated by $P$. The
formal execution series of $P$ is
\[
\mathscr G_P
=
\sum_{\Gamma\in\operatorname{Term}(P)}[\Gamma].
\]
Its degree-$n$ truncation is
\[
\mathscr G_{P,\leq n}
=
\sum_{\substack{\Gamma\in\operatorname{Term}(P)\\|\Gamma|\leq n}}
[\Gamma].
\]
For a finitary program, each truncation is finite. Moreover, it is the
characteristic polynomial of the terminating paths in the depth-$n$
unfolding of $P$. Consequently,
\[
\operatorname{ev}_{X}(\mathscr G_{P,\leq n})
=
\mathcal I_{E_n(P)},
\]
where $E_n(P)$ denotes the depth-$n$ unfolding.

\begin{definition}
	\label{def:program-graph-semantics}
	The graph-series denotation of $P$, whenever $\mathscr G_P$ is semantically
	admissible, is
	\[
	\sem{P}_{\mathrm{gr}}
	=
	\operatorname{ev}_{X}(\mathscr G_P).
	\]
	Its Abel-regularised graph denotation is
	\[
	\sem{P}_{q}
	=
	\operatorname{ev}_{X}\bigl(R_q(\mathscr G_P)\bigr),
	\qquad
	0<q<1.
	\]
	Equivalently, for $q=e^{-t}$,
	\[
	\sem{P}_{t}
	=
	\operatorname{ev}_{X}\bigl(R_t(\mathscr G_P)\bigr),
	\qquad
	t>0.
	\]
\end{definition}

The constructions above separate two issues that are conflated by the final
least-fixed-point denotation. The formal series $\mathscr G_P$ records the
finite execution histories and their degrees, while semantic evaluation
forgets the individual graphs and sums their quantum effects. The next
section proves that, for recursive programs generated by Scott-continuous
recursion functionals, the truncations of $\mathscr G_P$ coincide with the
Kleene approximants and that the limit $q\to 1^{-}$ reconstructs the ordinary
denotation.

\section{Kleene approximants and Abel reconstruction}
\label{sec:abel}

The graph-series semantics of Section~\ref{sec:graph-semantics} separates a
recursive computation into its finite terminating execution histories. We now
show that this decomposition is compatible with the least-fixed-point
semantics of quantum orchestras. The main result identifies the depth
truncations of the execution series with the Kleene approximants and proves
that exponential degree damping, with $q=e^{-t}$, reconstructs the recursive
denotation in the limit $t\to 0^{+}$.

The arguments are order-theoretic. No norm convergence is required for the
basic reconstruction theorem. Analytic convergence estimates are given at the
end of the section under additional Banach-space assumptions.

\subsection{The one-step semantic functional}
\label{subsec:one-step-functional}

Let $P$ be a closed finitary hybrid program represented by a control system
$\Sigma$ as in Definition~\ref{def:control-system}. We assume that the syntax
of $P$ selects a unique command $c_{\ell}$ at every reachable nonterminal
control location $\ell$. This does not restrict the branching caused by
quantum measurements: the command $c_{\ell}$ may still have several classical
outcomes. A program with several syntactic commands available at one location
can always be refined by replacing that location with sufficiently many
control locations.

Set
\[
L^{\circ}=L\setminus L_{\mathrm{term}}
\]
and consider the product dcpo
\[
\mathcal D_{P}
=
\prod_{\ell\in L^{\circ}}
\QO(\mathcal A,\mathcal A,X),
\]
equipped with the pointwise order. Its least element is denoted by
$\bottom_{P}$.

For
\[
\Phi\in\Wstar(\mathcal A,\mathcal A)
\]
and
\[
\xi\in\QO(\mathcal A,\mathcal A,X),
\]
define the left action of $\Phi$ on $\xi$ by
\[
(\Phi\triangleright\xi)_{k}
=
\Phi\circ\xi_{k}.
\]

\begin{lemma}
	\label{lem:channel-action}
	The assignment
	\[
	(\Phi,\xi)
	\longmapsto
	\Phi\triangleright\xi
	\]
	is well defined. For fixed $\Phi$, the map
	\[
	\xi\longmapsto\Phi\triangleright\xi
	\]
	is Scott-continuous and preserves the least element and directed suprema.
\end{lemma}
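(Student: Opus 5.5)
The proof splits into two parts. First we check that $\Phi\triangleright\xi$ satisfies the three axioms of Definition~\ref{def:quantum-orchestra}, so that it lies in $\QO(\mathcal A,\mathcal A,X)$. Then we use the fact that suprema and the order on orchestras are computed pointwise to reduce the continuity statements to the corresponding facts about composition in $\Wstar(\mathcal A,\mathcal A)$. Throughout, we use that $\Wstar(\mathcal K,\mathcal A)$ is a pointed dcpo and that composition is Scott-continuous in each variable, as recalled in Section~\ref{subsec:channels-domains}.

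\emph{Well-definedness.} For a continuation $k$, the map $\xi_k\in\Wstar(\mathcal K,\mathcal A)$ is normal, completely positive and subunital. Composing it with $\Phi\in\Wstar(\mathcal A,\mathcal A)$ gives again such a map, so $(\Phi\triangleright\xi)_k\in\Wstar(\mathcal K,\mathcal A)$. We then verify the axioms in turn.

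Axiom (i): $k\mapsto\Phi\circ\xi_k$ is the composite of the Scott-continuous map $k\mapsto\xi_k$ with the Scott-continuous map $\Psi\mapsto\Phi\circ\Psi$, hence Scott-continuous.

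Axiom (ii): $\Phi$ is linear, so
\[
\Phi\circ(\theta\xi_k+\rho\xi_{k'})=\theta\,\Phi\circ\xi_k+\rho\,\Phi\circ\xi_{k'}.
\]

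Axiom (iii): by associativity of composition,
\[
\Phi\circ\xi_{\mathcal K'}(x\mapsto k(x)\circ\chi)=\Phi\circ\xi_{\mathcal K}(k)\circ\chi.
\]

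\emph{Continuity.} We first check monotonicity. If $\xi\leq\zeta$, then $\zeta_k-\xi_k$ is completely positive. Hence $\Phi\circ\zeta_k-\Phi\circ\xi_k=\Phi\circ(\zeta_k-\xi_k)$ is completely positive, being a composite of completely positive maps. The least element is preserved because $\Phi\circ 0=0$. For a directed family $(\xi_i)$, suprema are pointwise, so
\[
\Bigl(\Phi\triangleright\sup_i\xi_i\Bigr)_k=\Phi\circ\sup_i(\xi_i)_k=\sup_i\,\Phi\circ(\xi_i)_k=\Bigl(\sup_i\Phi\triangleright\xi_i\Bigr)_k.
\]
The middle equality is Scott-continuity of post-composition with $\Phi$. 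Since this holds for every $\mathcal K$ and $k$, the map $\xi\mapsto\Phi\triangleright\xi$ preserves directed suprema.

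\emph{Main obstacle.} The only nontrivial input is the Scott-continuity of $\Psi\mapsto\Phi\circ\Psi$ on $\Wstar(\mathcal K,\mathcal A)$. This rests on normality of $\Phi$: directed suprema in the completely positive order are ultraweak (pointwise strong) limits, and normal maps preserve suprema of bounded increasing nets of positive elements. We take this from \cite{Cho2016,RiceLeichtleWorrallBooth2026} rather than reproving it.
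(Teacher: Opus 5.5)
Your proof is correct and follows the same route as the paper. Like the paper, you check that normality, complete positivity, subunitality and the three orchestra axioms survive left composition with $\Phi$, and then obtain preservation of $\bottom$ and of directed suprema from the pointwise computation of suprema together with Scott-continuity of composition in $\Wstar$; you simply write out each axiom and the monotonicity step that the paper leaves implicit.
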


\begin{proof}
	Normality, complete positivity, and subunitality are preserved by composition.
	The continuity, subconvexity, and compositionality axioms of a quantum
	orchestra are also preserved by left composition with $\Phi$. If
	$(\xi_i)_{i\in I}$ is directed, then directed suprema in the hom-sets of
	$\Wstar$ are computed pointwise and composition is Scott-continuous. Hence
	\[
	\Phi\triangleright\left(\sup_{i\in I}\xi_i\right)
	=
	\sup_{i\in I}(\Phi\triangleright\xi_i).
	\]
	The statement concerning the least element is immediate.
\end{proof}

For $\boldsymbol\xi=(\xi_{\ell})_{\ell\in L^{\circ}}\in\mathcal D_P$, define
$F_P(\boldsymbol\xi)\in\mathcal D_P$ by
\[
F_P(\boldsymbol\xi)_{\ell}
=
\sum_{o\in O_{c_{\ell}}}
B_{\ell,o}(\boldsymbol\xi),
\]
where
\[
B_{\ell,o}(\boldsymbol\xi)
=
\begin{cases}
	\delta\bigl(r(\tau_{c_{\ell}}(o)),\Phi_{c_{\ell},o}\bigr),
	& \tau_{c_{\ell}}(o)\in L_{\mathrm{term}},
	\\[4pt]
	\Phi_{c_{\ell},o}\triangleright
	\xi_{\tau_{c_{\ell}}(o)},
	& \tau_{c_{\ell}}(o)\in L^{\circ}.
\end{cases}
\]
The sum is well defined because each continuation orchestra is subunital and
\[
\sum_{o\in O_{c_{\ell}}}
\Phi_{c_{\ell},o}(1_{\mathcal A})
\leq
1_{\mathcal A}.
\]

\begin{proposition}
	\label{prop:one-step-continuity}
	The map
	\[
	F_P:\mathcal D_P\longrightarrow\mathcal D_P
	\]
	is Scott-continuous. Consequently, it has a least fixed point
	\[
	\fix(F_P)
	=
	\sup_{n\in\N}F_P^{n}(\bottom_P).
	\]
\end{proposition}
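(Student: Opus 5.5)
We reduce Scott-continuity of $F_P$ to continuity of its components and then invoke the Kleene fixed-point theorem recalled in Section~\ref{subsec:channels-domains}. Since $\mathcal D_P$ carries the pointwise order, directed suprema in $\mathcal D_P$ are computed coordinatewise. A map into a product of dcpos is Scott-continuous if and only if each coordinate map is. It therefore suffices to show that, for each $\ell\in L^{\circ}$, the map $\boldsymbol\xi\mapsto F_P(\boldsymbol\xi)_{\ell}$ preserves directed suprema.

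Fix $\ell$ and an outcome $o\in O_{c_\ell}$. If $\tau_{c_\ell}(o)$ is terminal, then $B_{\ell,o}$ is constant and hence Scott-continuous. Otherwise $B_{\ell,o}$ factors as the projection $\boldsymbol\xi\mapsto\xi_{\tau_{c_\ell}(o)}$, which is Scott-continuous because suprema are pointwise, followed by $\xi\mapsto\Phi_{c_\ell,o}\triangleright\xi$, which is Scott-continuous by Lemma~\ref{lem:channel-action}. It remains to show that the finite sum $\sum_{o}B_{\ell,o}$ preserves directed suprema.

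Here the orchestra structure enters. Both the sum and the suprema are evaluated at each continuation $k$, so we reduce to showing that addition $\Wstar(\mathcal K,\mathcal A)\times\Wstar(\mathcal K,\mathcal A)\to\Wstar(\mathcal K,\mathcal A)$ is Scott-continuous on pairs whose sum stays subunital. This is the main point. I would argue that the directed supremum of completely positive maps in the cp order is the ultraweak pointwise limit, following Cho's result quoted in Section~\ref{subsec:channels-domains}. Then $\sup_i(\Psi_i+\Psi'_i)=\sup_i\Psi_i+\sup_i\Psi'_i$ for jointly directed families, because addition is monotone and pointwise limits of increasing nets add. Along a directed family $(\boldsymbol\xi_i)$, each $B_{\ell,o}(\boldsymbol\xi_i)_k$ is increasing in $i$, so the joint family is directed, and the identity applies summand by summand.

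We must also check that the sum remains a valid orchestra at the supremum. Subunitality follows from the bound $\sum_o\Phi_{c_\ell,o}(1)\le 1$ already noted, since each continuation orchestra is subunital. Subconvexity and compositionality are preserved under finite sums and suprema. This shows $F_P$ is Scott-continuous. Since $\mathcal D_P$ is a pointed dcpo (a product of pointed dcpos), the Kleene fixed-point theorem yields $\fix(F_P)=\sup_n F_P^n(\bottom_P)$.
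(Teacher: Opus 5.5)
Your proposal is correct and follows essentially the same route as the paper. You reduce to the components of $F_P$, treat terminal branches as constant Dirac orchestras and nonterminal ones as a coordinate projection followed by the channel action of Lemma~\ref{lem:channel-action}, use that finite sums preserve directed suprema, and conclude with the Kleene fixed-point theorem; the only difference is that you actually justify the finite-sum step (directed suprema in $\Wstar$ are pointwise ultraweak limits of increasing bounded nets, which add), whereas the paper simply asserts it.
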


\begin{proof}
	Every component of $F_P$ is a finite sum of constant Dirac orchestras and maps
	obtained by composing a coordinate projection with the Scott-continuous action
	of Lemma~\ref{lem:channel-action}. Finite sums preserve directed suprema in the
	present subconvex domain. Hence every component of $F_P$ is Scott-continuous,
	and so is $F_P$. The fixed-point formula follows from the Kleene fixed-point
	theorem.
\end{proof}

For an initial nonterminal location $\ell_P$, the ordinary recursive
denotation of the program is
\[
\sem{P}
=
\fix(F_P)_{\ell_P}.
\]
If the initial location is terminal, the denotation is simply
$\eta_X^{\mathcal A}(r(\ell_P))$ and no fixed-point construction is needed.

\subsection{Execution depth and Kleene approximation}
\label{subsec:depth-kleene}

For $\ell\in L^{\circ}$ and $n\in\N$, let
\[
\mathscr G_{P,\ell,\leq n}
=
\sum_{\substack{\Gamma\in\operatorname{Term}(P)\\
		s(\Gamma)=\ell,\ |\Gamma|\leq n}}
[\Gamma]
\]
be the polynomial formed by all terminating execution graphs starting at
$\ell$ and having degree at most $n$. We set
\[
A_{\ell,n}
=
\operatorname{ev}_{X}
\bigl(\mathscr G_{P,\ell,\leq n}\bigr).
\]
Since no terminating path can start at a nonterminal location and have degree
zero, one has
\[
A_{\ell,0}=\bottom.
\]

\begin{theorem}[Finite-unfolding correspondence]
	\label{thm:finite-unfolding-correspondence}
	For every $\ell\in L^{\circ}$ and every $n\in\N$,
	\[
	F_P^{n}(\bottom_P)_{\ell}
	=
	A_{\ell,n}.
	\]
	Thus, the $n$-th Kleene approximant is exactly the semantic evaluation of the
	terminating execution graphs of degree at most $n$.
\end{theorem}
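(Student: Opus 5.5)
We argue by induction on $n$, simultaneously for all $\ell\in L^{\circ}$. For $n=0$ both sides equal $\bottom$: the left side by definition of $\bottom_P$, the right side by the observation preceding the theorem that no terminating graph of degree zero starts at a nonterminal location, so $A_{\ell,0}=\bottom$.

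For the inductive step we first establish a purely combinatorial first-step decomposition of the graphs counted by $\mathscr G_{P,\ell,\leq n+1}$. Since the program selects the unique command $c_\ell$ at $\ell$, every terminating graph $\Gamma$ with $s(\Gamma)=\ell$ and $1\le|\Gamma|\leq n+1$ factors uniquely as
\[
\Gamma=\bigl(\ell\xrightarrow{(c_\ell,o)}\tau_{c_\ell}(o)\bigr)\star\Gamma',
\]
with $o\in O_{c_\ell}$. There are two cases. If $\tau_{c_\ell}(o)\in L_{\mathrm{term}}$, then $\Gamma'$ is the empty graph, because terminal locations are not sources of commands. If $\tau_{c_\ell}(o)\in L^{\circ}$, then $\Gamma'$ is a terminating graph from $\tau_{c_\ell}(o)$ of degree at most $n$. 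Conversely, every such pair $(o,\Gamma')$ yields a graph of the required form. This gives a bijection, and by Proposition~\ref{prop:execution-category} we have $\Phi_\Gamma=\Phi_{c_\ell,o}\circ\Phi_{\Gamma'}$ and $\operatorname{res}(\Gamma)=\operatorname{res}(\Gamma')$.

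Applying $\operatorname{ev}_X$ to this decomposition gives
\[
A_{\ell,n+1}=\sum_{o:\,\tau_{c_\ell}(o)\in L_{\mathrm{term}}}\delta\bigl(r(\tau_{c_\ell}(o)),\Phi_{c_\ell,o}\bigr)+\sum_{o:\,\tau_{c_\ell}(o)\in L^{\circ}}\Phi_{c_\ell,o}\triangleright A_{\tau_{c_\ell}(o),n}.
\]
The second sum uses the identity
\[
\Phi\triangleright\delta(x,\Psi)=\delta(x,\Phi\circ\Psi),
\]
which holds since both sides send $k$ to $\Phi\circ\Psi\circ k(x)$, together with additivity of $\Phi\triangleright(\cdot)$ over finite sums of Dirac orchestras. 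By the induction hypothesis, $A_{\tau_{c_\ell}(o),n}=F_P^n(\bottom_P)_{\tau_{c_\ell}(o)}$, so the right-hand side is exactly $\sum_o B_{\ell,o}(F_P^n(\bottom_P))=F_P^{n+1}(\bottom_P)_\ell$.

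I expect the main obstacle to be the bookkeeping in the first-step decomposition, namely checking that the factorization is a genuine bijection onto the degree-$\le n$ graphs. This relies on two facts: commands have sources only in $L\setminus L_{\mathrm{term}}$, so terminating graphs stop at their first terminal location, and $\operatorname{Term}(P)$ is determined by the program's command choice $c_\ell$. Finiteness of the sums, and hence the validity of splitting $\operatorname{ev}_X$ across them, follows from finitary local finiteness. Admissibility of every sum involved holds because each equals a Kleene iterate, which is a well-defined orchestra.
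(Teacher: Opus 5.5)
Your proposal is correct and follows essentially the same route as the paper: induction on $n$ simultaneously over all nonterminal locations, splitting terminating paths by the first outcome of $c_\ell$, and handling terminal targets as Dirac orchestras and nonterminal targets via the induction hypothesis and the left action $\triangleright$. Your explicit first-step bijection, the identity $\Phi\triangleright\delta(x,\Psi)=\delta(x,\Phi\circ\Psi)$, and the admissibility remark make precise steps that the paper leaves implicit.
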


\begin{proof}
	We argue by induction on $n$. For $n=0$, both sides are the least orchestra.
	Assume that the statement holds at depth $n$. A terminating path of degree at
	most $n+1$ starting at $\ell$ begins with an outcome $o$ of the command
	$c_{\ell}$. If $\tau_{c_{\ell}}(o)$ is terminal, this first transition already
	defines the Dirac orchestra
	\[
	\delta\bigl(r(\tau_{c_{\ell}}(o)),\Phi_{c_{\ell},o}\bigr).
	\]
	If $\tau_{c_{\ell}}(o)$ is nonterminal, the remainder of the path is a
	terminating path of degree at most $n$ starting at
	$\tau_{c_{\ell}}(o)$. By the induction hypothesis, the sum of the semantic
	weights of all such remainders is
	\[
	F_P^{n}(\bottom_P)_{\tau_{c_{\ell}}(o)}.
	\]
	Prepending the first transition acts by
	\[
	\Phi_{c_{\ell},o}\triangleright
	F_P^{n}(\bottom_P)_{\tau_{c_{\ell}}(o)}.
	\]
	Summing over the outcomes gives
	\[
	A_{\ell,n+1}
	=
	F_P\bigl(F_P^{n}(\bottom_P)\bigr)_{\ell}
	=
	F_P^{n+1}(\bottom_P)_{\ell}.
	\]
	This completes the induction.
\end{proof}

\begin{corollary}[Graph-series adequacy]
	\label{cor:graph-series-adequacy}
	For every initial nonterminal location $\ell_P$,
	\[
	\sem{P}_{\mathrm{gr}}
	=
	\sem{P}.
	\]
	More explicitly,
	\[
	\operatorname{ev}_{X}(\mathscr G_P)
	=
	\sup_{n\in\N}
	\operatorname{ev}_{X}(\mathscr G_{P,\ell_P,\leq n})
	=
	\fix(F_P)_{\ell_P}.
	\]
\end{corollary}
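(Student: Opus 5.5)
The plan is to combine the definition of the graph-series denotation with Theorem~\ref{thm:finite-unfolding-correspondence} and the Kleene formula of Proposition~\ref{prop:one-step-continuity}.

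First, I will check that $\mathscr G_P$, understood as the series of terminating graphs rooted at $\ell_P$, is semantically admissible. Each truncation $\mathscr G_{P,\ell_P,\leq n}$ is a finite polynomial, because the program is finitary: every command has finitely many outcomes, and $c_\ell$ is unique at each reachable location. This truncation is exactly the characteristic polynomial of the terminating paths of the depth-$n$ unfolding rooted at $\ell_P$. Proposition~\ref{prop:unfolding-instrument} then gives admissibility. Alternatively, admissibility follows directly from Theorem~\ref{thm:finite-unfolding-correspondence}, since $A_{\ell_P,n}=F_P^n(\bottom_P)_{\ell_P}$ is a genuine orchestra.

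Next, by the definition in Section~\ref{subsec:locally-finite-execution-series},
\[
\operatorname{ev}_X(\mathscr G_P)=\sup_{n\in\N}\operatorname{ev}_X(\mathscr G_{P,\ell_P,\leq n})=\sup_{n\in\N}A_{\ell_P,n}.
\]
This supremum is directed because the truncations increase and the added terms are completely positive with nonnegative coefficients. Theorem~\ref{thm:finite-unfolding-correspondence} then rewrites it as $\sup_n F_P^n(\bottom_P)_{\ell_P}$.

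The remaining step is to commute the coordinate projection at $\ell_P$ with the supremum. Suprema in the product dcpo $\mathcal D_P$ are computed coordinatewise, so
\[
\sup_{n}F_P^n(\bottom_P)_{\ell_P}=\Bigl(\sup_n F_P^n(\bottom_P)\Bigr)_{\ell_P}=\fix(F_P)_{\ell_P}=\sem{P}.
\]
This uses Proposition~\ref{prop:one-step-continuity}.

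I expect no serious obstacle. The only delicate point is bookkeeping: $\operatorname{Term}(P)$ must be read as the graphs rooted at $\ell_P$, so that the truncations of $\mathscr G_P$ coincide literally with $\mathscr G_{P,\ell_P,\leq n}$. With that reading fixed, the proof is a direct chain of equalities.
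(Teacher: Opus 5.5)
Your proposal is correct and follows the paper's own argument. You unfold $\operatorname{ev}_X(\mathscr G_P)$ as the supremum of its degree truncations, identify these with the Kleene approximants via Theorem~\ref{thm:finite-unfolding-correspondence}, and conclude with the Kleene formula for $\fix(F_P)$; your remarks on admissibility, on coordinatewise suprema in $\mathcal D_P$, and on reading $\operatorname{Term}(P)$ as the graphs rooted at $\ell_P$ only make explicit steps the paper leaves implicit.
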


\begin{proof}
	By Definition~\ref{def:program-graph-semantics}, the graph-series denotation is
	the supremum of the denotations of its finite degree truncations. Theorem
	\ref{thm:finite-unfolding-correspondence} identifies these truncations with
	the Kleene approximants. Their supremum is the least fixed point of $F_P$.
\end{proof}

The result shows that the graph-series semantics is a conservative refinement
of the quantum orchestra semantics. It retains the degree and the individual
execution histories, but forgetting this extra information by semantic
evaluation recovers the original denotation.

\subsection{Exact-depth contributions}
\label{subsec:exact-depth-contributions}

For $n\in\N$, define the exact-depth graph polynomial
\[
\mathscr G_{P,\ell,n}
=
\sum_{\substack{\Gamma\in\operatorname{Term}(P)\\
		s(\Gamma)=\ell,\ |\Gamma|=n}}
[\Gamma]
\]
and its semantic contribution
\[
D_{\ell,n}
=
\operatorname{ev}_{X}(\mathscr G_{P,\ell,n}).
\]
The exact-depth contributions are positive in the quantum orchestra order and
satisfy
\[
A_{\ell,n}
=
\sum_{j=0}^{n}D_{\ell,j}.
\]
Equivalently,
\[
D_{\ell,n}
=
A_{\ell,n}-A_{\ell,n-1},
\qquad n\geq 1,
\]
where the difference is taken pointwise in the ambient vector spaces of
normal linear maps and is completely positive. We use the graph definition of
$D_{\ell,n}$ as the primary one, so no subtraction operation is required in
the dcpo itself.

For $0<q<1$, define the degree-damped denotation
\[
Z_{\ell}(q)
=
\operatorname{ev}_{X}
\bigl(R_q(\mathscr G_{P,\ell})\bigr)
=
\sup_{N\in\N}
\sum_{n=0}^{N}q^{n}D_{\ell,n}.
\]
Every finite partial sum is subunital because it is bounded above by the
corresponding unweighted truncation. Hence the directed supremum is a
well-defined quantum orchestra.

\begin{proposition}[Abel identity]
	\label{prop:abel-identity}
	For every $0<q<1$,
	\[
	Z_{\ell}(q)
	=
	(1-q)
	\sum_{n=0}^{\infty}q^{n}A_{\ell,n},
	\]
	where the series on the right is defined as the directed supremum of its
	finite partial sums.
\end{proposition}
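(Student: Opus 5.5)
The plan is to prove the identity at the level of finite partial sums, where it becomes an exact algebraic identity plus a remainder term, and then pass to directed suprema. Since every orchestra in sight is evaluated on continuations $k$, and both sides are defined as suprema computed pointwise in $k$, I would fix an admissible continuation $k$ throughout. I would then work in the ambient real vector space of normal linear maps $\mathcal K$-side, where $A_{\ell,n,k}$ and $D_{\ell,n,k}$ are honest elements and subtraction is allowed. The completely positive order is compatible with addition and with nonnegative scalars there.

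\textbf{Step 1: the finite summation-by-parts identity.} Using $A_{\ell,n}=\sum_{j\le n}D_{\ell,j}$, and the convention $A_{\ell,-1}=0$ so that $D_{\ell,n}=A_{\ell,n}-A_{\ell,n-1}$, Abel summation gives, for every $N$,
\[
\sum_{n=0}^{N}q^nD_{\ell,n}
=(1-q)\sum_{n=0}^{N-1}q^nA_{\ell,n}+q^NA_{\ell,N}
=(1-q)\sum_{n=0}^{N}q^nA_{\ell,n}+q^{N+1}A_{\ell,N}.
\]
This is a routine regrouping of the coefficient of each $D_{\ell,j}$. Indeed $(1-q)\sum_{n=j}^{N}q^n+q^{N+1}=q^j$.

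\textbf{Step 2: two-sided comparison.} Write $S_N=\sum_{n\le N}q^nD_{\ell,n}$ and $T_N=(1-q)\sum_{n\le N}q^nA_{\ell,n}$. Both are increasing in $N$, and each $T_N$ is subunital because $T_N\le(1-q)\sum_n q^n\,\fix(F_P)_{\ell}\le\fix(F_P)_\ell$. The identity gives $T_N\le S_N$, hence $\sup_NT_N\le Z_\ell(q)$. For the reverse inequality, note that $S_N=T_N+q^{N+1}A_{\ell,N}\le T_N+q^{N+1}\,\fix(F_P)_\ell$. Then $S_N\le T_M+q^{N+1}\fix(F_P)_\ell$ for all $M\ge N$, so $S_N\le \sup_M T_M+q^{N+1}\fix(F_P)_\ell$.

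\textbf{Step 3: killing the remainder (the main obstacle).} The delicate point is to conclude $\sup_N S_N\le\sup_M T_M$ purely order-theoretically, since $q^{N+1}\fix(F_P)_\ell$ does not vanish in any dcpo sense by itself. I would evaluate everything at a fixed $k$ and a fixed positive element $b$, and test against normal states $\rho$. The remainder then satisfies $0\le\rho(q^{N+1}\fix(F_P)_{\ell,k}(b))\le q^{N+1}\|b\|\to0$. Directed suprema in $\Wstar$ are ultraweak (pointwise-normal) limits of increasing nets, by the standard fact underlying the dcpo structure \cite{Cho2016}. Hence scalar values $\rho(\cdot(b))$ of suprema are suprema of scalar values. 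Therefore $\rho(Z_{\ell}(q)_k(b))\le\rho((\sup T)_k(b))$ for all normal states and positive $b$, and the same argument applied to matrix amplifications yields the completely positive inequality. Combined with Step 2 this gives equality at every $k$, hence $Z_\ell(q)=(1-q)\sum_n q^nA_{\ell,n}$.
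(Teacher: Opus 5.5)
Your proof is correct, but it handles the limit passage differently from the paper.

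\textbf{The two routes.} The paper substitutes $A_{\ell,n}=\sum_{j\le n}D_{\ell,j}$ into the infinite series and interchanges the double sum directly, using $(1-q)\sum_{n\ge j}q^n=q^j$. Its justification is a one-line Tonelli-type appeal to positivity: ``each side is the supremum of the same directed family of finite positive sums.'' You stay at finite level instead. There, summation by parts gives the exact identity $S_N=T_N+q^{N+1}A_{\ell,N}$. You then remove the tail by testing against normal states, using the uniform bound $q^{N+1}\lVert b\rVert$ and the fact that suprema in $\Wstar$ are computed pointwise ultraweakly.

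\textbf{What your route buys.} It is more precise. The row-truncated partial sums of the right-hand side are $T_N=\sum_{j\le N}(q^j-q^{N+1})D_{\ell,j}$, which are not the partial sums of $Z_\ell(q)$. So identifying the two suprema needs more than monotonicity: it needs something like $\sup_M(q^j-q^{M+1})D_{\ell,j}=q^jD_{\ell,j}$, i.e. continuity of scalar multiplication in the scalar. The paper leaves this implicit; your Step~3 makes it explicit. The paper's route is shorter and stays in abstract order-theoretic language.

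\textbf{A simpler Step~3.} The ``main obstacle'' you name can be bypassed without states. For fixed $N$ and every $M\ge N$, you have $T_M\ge\sum_{j\le N}(q^j-q^{M+1})D_{\ell,j}$. Taking $\sup_M$, using Scott continuity of addition and of $c\mapsto c\,\xi$ in $c$, gives $\sup_M T_M\ge S_N$ directly. This is the same fact the paper uses in the Abel reconstruction theorem when it writes $\sup_{0<q<1}q^NA_{\ell,N}=A_{\ell,N}$. It absorbs the remainder into the coefficients instead of treating it as an additive error.

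\textbf{The amplification step is not needed.} You already have $\sup_M T_M\le Z_\ell(q)$ in the completely positive order, and $Z_\ell(q)_k(b)\le(\sup_M T_M)_k(b)$ for positive $b$. Together these give equality on positive elements, hence equality of the maps.
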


\begin{proof}
	Using
	\[
	A_{\ell,n}
	=
	\sum_{j=0}^{n}D_{\ell,j},
	\]
	positivity allows us to rearrange the directed sums and obtain
	\[
	\begin{aligned}
		(1-q)
		\sum_{n=0}^{\infty}q^{n}A_{\ell,n}
		&=
		(1-q)
		\sum_{n=0}^{\infty}q^{n}
		\sum_{j=0}^{n}D_{\ell,j}
		\\
		&=
		\sum_{j=0}^{\infty}
		\left(
		(1-q)
		\sum_{n=j}^{\infty}q^{n}
		\right)
		D_{\ell,j}
		\\
		&=
		\sum_{j=0}^{\infty}q^{j}D_{\ell,j}
		\\
		&=
		Z_{\ell}(q).
	\end{aligned}
	\]
	All rearrangements are justified by monotonicity: each side is the supremum
	of the same directed family of finite positive sums.
\end{proof}

Proposition~\ref{prop:abel-identity} distinguishes two generating series. The
increment series
\[
\sum_{n=0}^{\infty}q^{n}D_{\ell,n}
\]
has a direct semantic interpretation as the degree-weighted sum of terminating
histories. The cumulative series
\[
\sum_{n=0}^{\infty}q^{n}A_{\ell,n}
\]
contains the same contribution repeatedly and therefore carries a universal
geometric divergence as $q\to 1^{-}$. The factor $1-q$ is the canonical Abel
normalisation that removes this repeated counting in the sense of classical
Abel summation \cite{Hardy1949}.

\subsection{Order-theoretic Abel reconstruction}
\label{subsec:order-abel-reconstruction}

We now prove that the regularised denotations recover the least-fixed-point
semantics without any norm-convergence assumption.

\begin{theorem}[Abel reconstruction]
	\label{thm:abel-reconstruction}
	Let $P$ be a closed finitary hybrid program and let $\ell\in L^{\circ}$. Then
	the family
	\[
	\bigl(Z_{\ell}(q)\bigr)_{0<q<1}
	\]
	is increasing in $q$ and bounded above by $\fix(F_P)_{\ell}$. Moreover,
	\[
	\sup_{0<q<1}Z_{\ell}(q)
	=
	\fix(F_P)_{\ell}.
	\]
	Equivalently,
	\[
	\lim_{q\to 1^{-}}^{\mathrm{Scott}}Z_{\ell}(q)
	=
	\fix(F_P)_{\ell}.
	\]
\end{theorem}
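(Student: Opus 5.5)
The plan is to work throughout with the increment representation
\[
Z_{\ell}(q)=\sup_{N\in\N}\sum_{n=0}^{N}q^{n}D_{\ell,n},
\]
and to compare it with the Kleene chain through Theorem~\ref{thm:finite-unfolding-correspondence}, which gives $A_{\ell,N}=\sum_{n=0}^{N}D_{\ell,n}=F_P^{N}(\bottom_P)_{\ell}$. Monotonicity in $q$ is exactly Proposition~\ref{prop:monotonicity-in-q} applied to $\mathscr G_{P,\ell}$. Alternatively, it follows directly: for $q\le q'$, each partial sum $\sum_{n\le N}q^nD_{\ell,n}$ lies below $\sum_{n\le N}(q')^nD_{\ell,n}$, because the difference is a nonnegative combination of the positive orchestras $D_{\ell,n}$, and one then passes to suprema over $N$. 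For the upper bound, since $q^n\le 1$, each partial sum lies below $A_{\ell,N}=F_P^N(\bottom_P)_\ell\le\fix(F_P)_\ell$, so $Z_\ell(q)\le \fix(F_P)_\ell$ because the latter is an upper bound of all the partial sums.

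For the reverse inequality, fix $N$. The key estimate is
\[
q^{N}A_{\ell,N}=\sum_{n=0}^{N}q^{N}D_{\ell,n}\le\sum_{n=0}^{N}q^{n}D_{\ell,n}\le Z_{\ell}(q).
\]
It then suffices to show that $\sup_{0<q<1}q^{N}A_{\ell,N}=A_{\ell,N}$ in $\QO(\mathcal A,\mathcal A,X)$. Granting this, $A_{\ell,N}\le\sup_q Z_\ell(q)$ for every $N$, and taking the supremum over $N$ with Corollary~\ref{cor:graph-series-adequacy} (or Proposition~\ref{prop:one-step-continuity}) yields $\fix(F_P)_\ell\le\sup_q Z_\ell(q)$. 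Since the family is increasing in $q$, it is directed, so the supremum is a directed supremum. This is exactly the Scott limit as $q\to1^-$, which gives the final reformulation.

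The main obstacle is the scalar-continuity claim $\sup_{q<1}q^N\xi=\xi$ for a fixed orchestra $\xi$. Because directed suprema in $\QO$ are computed pointwise in $k$, it reduces to showing $\sup_{q<1}q^N\Psi=\Psi$ in $\Wstar(\mathcal K,\mathcal A)$ for $\Psi=\xi_k$. I would handle this in the completely positive order. Any upper bound $\Theta$ of the family satisfies that $\Theta-q^N\Psi$ is completely positive for all $q<1$, and these differences converge in norm to $\Theta-\Psi$. The cone of completely positive maps is closed under such limits (positivity of all matrix amplifications is a pointwise-closed condition), so $\Psi\le\Theta$. Since $\Psi$ itself is an upper bound, the supremum is $\Psi$, and because the directed supremum in the dcpo is the least upper bound, it coincides with $\Psi$. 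Here I use the fact that directed suprema in $\Wstar$ are least upper bounds in the completely positive order, as recalled in Section~\ref{subsec:channels-domains}.
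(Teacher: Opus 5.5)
Your proposal is correct and follows the paper's proof essentially step for step. You get monotonicity by comparing $q^nD_{\ell,n}$ with $(q')^nD_{\ell,n}$ term by term, the upper bound from $q^n\le 1$, and the reverse inequality from $q^NA_{\ell,N}\le Z_\ell(q)$ followed by suprema over $q$ and then over $N$. The only place you go beyond the paper is the identity $\sup_{0<q<1}q^NA_{\ell,N}=A_{\ell,N}$, which the paper simply asserts; your pointwise reduction, together with closedness of the completely positive cone, justifies it correctly.
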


\begin{proof}
	If $0<q\leq q'<1$, then
	\[
	q^{n}D_{\ell,n}
	\leq
	(q')^{n}D_{\ell,n}
	\]
	for every $n$, and therefore
	\[
	Z_{\ell}(q)
	\leq
	Z_{\ell}(q').
	\]
	Since $q^{n}\leq 1$, one also has
	\[
	Z_{\ell}(q)
	\leq
	\sup_{N\in\N}
	\sum_{n=0}^{N}D_{\ell,n}
	=
	\sup_{N\in\N}A_{\ell,N}
	=
	\fix(F_P)_{\ell}.
	\]
	
	Let
	\[
	Z_{\ell}^{*}
	=
	\sup_{0<q<1}Z_{\ell}(q).
	\]
	Fix $N\in\N$. For every $0<q<1$, positivity gives
	\[
	Z_{\ell}(q)
	\geq
	\sum_{n=0}^{N}q^{n}D_{\ell,n}
	\geq
	q^{N}
	\sum_{n=0}^{N}D_{\ell,n}
	=
	q^{N}A_{\ell,N}.
	\]
	Taking the supremum over $q\in(0,1)$ yields
	\[
	Z_{\ell}^{*}
	\geq
	A_{\ell,N},
	\]
	because
	\[
	\sup_{0<q<1}q^{N}A_{\ell,N}
	=
	A_{\ell,N}.
	\]
	Since this holds for every $N$,
	\[
	Z_{\ell}^{*}
	\geq
	\sup_{N\in\N}A_{\ell,N}
	=
	\fix(F_P)_{\ell}.
	\]
	Combined with the opposite inequality proved above, this gives the result.
\end{proof}

\begin{corollary}
	\label{cor:pointwise-ultraweak-abel}
	Let $\mathcal K$ be a von Neumann algebra, let
	\[
	k:X\longrightarrow\Wstar(\mathcal K,\mathcal A)
	\]
	be a Scott-continuous continuation, and let $b\in\mathcal K$ be positive.
	Then
	\[
	\bigl(Z_{\ell}(q)_{k}(b)\bigr)_{0<q<1}
	\]
	is an increasing net in $\mathcal A_{+}$ and
	\[
	Z_{\ell}(q)_{k}(b)
	\longrightarrow
	\bigl(\fix(F_P)_{\ell}\bigr)_{k}(b)
	\]
	ultraweakly as $q\to 1^{-}$.
\end{corollary}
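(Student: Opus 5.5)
The argument derives the corollary from Theorem~\ref{thm:abel-reconstruction} by unpacking the pointwise definition of the orchestra order and of directed suprema. The only genuinely analytic input is the standard fact that in a von Neumann algebra a bounded increasing net of positive elements converges ultraweakly to its supremum.

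\emph{Monotonicity and positivity.} Fix $\mathcal K$, the continuation $k$, and a positive $b\in\mathcal K$. Theorem~\ref{thm:abel-reconstruction} gives $Z_{\ell}(q)\leq Z_{\ell}(q')$ for $0<q\leq q'<1$. The orchestra order is pointwise, so $Z_{\ell}(q)_k\leq Z_{\ell}(q')_k$ in the completely positive order of $\Wstar(\mathcal K,\mathcal A)$. The difference is then completely positive, in particular positive, and evaluating at $b\geq 0$ gives $Z_{\ell}(q)_k(b)\leq Z_{\ell}(q')_k(b)$ in $\mathcal A_{+}$. Positivity of each term follows because $Z_{\ell}(q)_k$ is completely positive. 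The net is also bounded: by subunitality, $Z_{\ell}(q)_k(b)\leq \|b\|\,Z_{\ell}(q)_k(1_{\mathcal K})\leq\|b\|\,1_{\mathcal A}$.

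\emph{Identification of the limit.} Theorem~\ref{thm:abel-reconstruction} says that $\fix(F_P)_{\ell}$ is the directed supremum of $(Z_{\ell}(q))_{0<q<1}$ in $\QO(\mathcal A,\mathcal A,X)$. Directed suprema there are computed pointwise, so $(\fix(F_P)_{\ell})_k=\sup_q Z_{\ell}(q)_k$ in $\Wstar(\mathcal K,\mathcal A)$. Since $q\mapsto Z_\ell(q)$ is monotone, this supremum along the directed set $(0,1)$ agrees with the limit as $q\to1^-$.

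\emph{Main step.} The remaining task, and the main obstacle, is to pass from the supremum in the completely positive order on $\Wstar(\mathcal K,\mathcal A)$ to convergence of the evaluations at $b$. For this I would invoke the standard description of directed suprema in $\Wstar(\mathcal K,\mathcal A)$ from \cite{Cho2016,RiceLeichtleWorrallBooth2026}: a directed sup of normal CP subunital maps is computed pointwise on positive elements, as the ultraweak limit (equivalently, the least upper bound in $\mathcal A_{+}$) of the increasing bounded net $\Phi_i(b)$. By Vigier's theorem, this limit exists for the bounded increasing net above. The map $b\mapsto\sup_i\Phi_i(b)$, extended linearly, is normal and CP and dominates every $\Phi_i$. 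It is also below any common upper bound $\Psi$, because $\Psi(b)-\Phi_i(b)\geq0$ for each $i$ and hence in the limit, and the same holds on matrix amplifications. So it is the supremum.

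Applying this to the net $Z_{\ell}(q)_k$ gives $Z_{\ell}(q)_k(b)\to(\fix(F_P)_{\ell})_k(b)$ ultraweakly as $q\to1^{-}$. Throughout, the fact that $(0,1)$ with its usual order is directed is used, so that the net formulation of the limit applies.
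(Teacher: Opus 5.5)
Your proposal is correct and follows the paper's route: you use the pointwise order and pointwise directed suprema on quantum orchestras, together with the fact that directed suprema in $\Wstar(\mathcal K,\mathcal A)$ are computed pointwise as ultraweak limits, and apply this to Theorem~\ref{thm:abel-reconstruction}. Beyond the paper, you also spell out the boundedness of the net and give a Vigier-based justification of the pointwise description of suprema, which the paper simply takes from \cite{Cho2016,RiceLeichtleWorrallBooth2026}.
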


\begin{proof}
	The order on quantum orchestras is pointwise, and directed suprema in the
	hom-sets of $\Wstar$ are computed pointwise in the ultraweak topology. The
	claim therefore follows directly from Theorem~\ref{thm:abel-reconstruction}.
\end{proof}

\subsection{The heat parameter $q=e^{-t}$}
\label{subsec:heat-parameter}

Set
\[
q=e^{-t},
\qquad t>0,
\]
and write
\[
Z_{\ell}(t)
=
Z_{\ell}(e^{-t}).
\]
The exact-depth expansion becomes
\[
Z_{\ell}(t)
=
\sum_{n=0}^{\infty}e^{-tn}D_{\ell,n},
\]
while the Abel identity takes the form
\[
Z_{\ell}(t)
=
(1-e^{-t})
\sum_{n=0}^{\infty}e^{-tn}A_{\ell,n}.
\]
The factor $e^{-tn}$ suppresses long execution histories, and the limit
$t\to 0^{+}$ removes this suppression. Theorem~\ref{thm:abel-reconstruction}
can therefore be rewritten as
\[
\lim_{t\to 0^{+}}^{\mathrm{Scott}}Z_{\ell}(t)
=
\fix(F_P)_{\ell}.
\]

Since
\[
1-e^{-t}
\sim
t
\qquad
\text{as }t\to 0^{+},
\]
the cumulative generating series is renormalised by a factor asymptotic to
$t$. This renormalisation is not an additional choice: it is forced by the
relation between exact-depth contributions and cumulative Kleene
approximants.

\subsection{Norm estimates under geometric convergence}
\label{subsec:abel-norm-estimates}

The order-theoretic theorem is sufficient for denotational semantics. In some
applications, however, the relevant orchestras are represented in a Banach
space of linear maps and the Kleene approximants converge in norm. The Abel
reconstruction then also holds in norm, with an explicit estimate.

\begin{proposition}
	\label{prop:abel-norm-estimate}
	Suppose that the sequence $(A_{\ell,n})_{n\in\N}$ and its supremum
	\[
	A_{\ell}=\fix(F_P)_{\ell}
	\]
	belong to a Banach space with norm $\lVert\cdot\rVert$. Assume that there
	exist constants $C>0$ and $0\leq r<1$ such that
	\[
	\lVert A_{\ell}-A_{\ell,n}\rVert
	\leq
	Cr^{n}
	\]
	for every $n\in\N$. Then, for $0<q<1$,
	\[
	\lVert A_{\ell}-Z_{\ell}(q)\rVert
	\leq
	C\frac{1-q}{1-qr}.
	\]
	Consequently,
	\[
	Z_{\ell}(q)
	\longrightarrow
	A_{\ell}
	\]
	in norm as $q\to 1^{-}$.
\end{proposition}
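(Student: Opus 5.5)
The plan is to use the Abel identity of Proposition~\ref{prop:abel-identity} together with the elementary normalisation $(1-q)\sum_{n\ge0}q^n=1$. Together they write $A_\ell-Z_\ell(q)$ as an Abel-weighted average of the tails $A_\ell-A_{\ell,n}$:
\[
A_{\ell}-Z_{\ell}(q)
=
(1-q)\sum_{n=0}^{\infty}q^{n}\bigl(A_{\ell}-A_{\ell,n}\bigr).
\]
The estimate then follows from the triangle inequality and the geometric hypothesis:
\[
\lVert A_{\ell}-Z_{\ell}(q)\rVert
\le (1-q)\sum_{n\ge0}Cq^{n}r^{n}
= C\frac{1-q}{1-qr}.
\]
Since $r<1$, the denominator $1-qr\ge 1-r>0$ stays bounded away from zero, so the bound is at most $C(1-q)/(1-r)$ and tends to $0$ as $q\to1^{-}$. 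This gives norm convergence.

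The step that needs care is justifying the displayed identity in the Banach space. Proposition~\ref{prop:abel-identity} identifies $Z_\ell(q)$ with $(1-q)\sum q^nA_{\ell,n}$ only as a directed supremum in the orchestra dcpo, not as a norm limit. I would handle this as follows.

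First, the hypothesis gives $\lVert A_{\ell,n}\rVert\le \lVert A_\ell\rVert+C$, so the partial sums $S_N=(1-q)\sum_{n\le N}q^nA_{\ell,n}$ converge absolutely in norm to some element $S$.

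Second, $S$ must be identified with the Scott supremum $Z_\ell(q)$. The partial sums are increasing in the completely positive order, because each $A_{\ell,n}\ge0$. The remaining task is to check that, for increasing sequences in the ambient space of normal maps, the pointwise (ultraweak) supremum used in Corollary~\ref{cor:pointwise-ultraweak-abel} agrees with the norm limit whenever the latter exists. This holds because norm convergence implies pointwise ultraweak convergence, and limits in that Hausdorff topology are unique. I would state this as the standing interpretation of ``belong to a Banach space'': the Banach space is assumed to embed continuously into the space in which the pointwise suprema are computed.

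With that identification in place, the subtraction $A_\ell-Z_\ell(q)$ is a legitimate Banach-space computation. Subtracting $S_N$ from $(1-q)\sum_{n\le N}q^nA_\ell$ and letting $N\to\infty$ yields the displayed identity. The geometric estimate then finishes the proof.
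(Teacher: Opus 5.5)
Your proposal is correct and follows the paper's proof. Both combine the Abel identity with $(1-q)\sum_{n\geq 0}q^{n}=1$ to write $A_{\ell}-Z_{\ell}(q)=(1-q)\sum_{n\geq 0}q^{n}(A_{\ell}-A_{\ell,n})$ and then sum the geometric bound $Cq^{n}r^{n}$. Your additional argument identifying the Scott supremum $Z_{\ell}(q)$ with the norm limit of the partial sums makes explicit a step the paper's proof leaves tacit; it assumes the Banach space embeds continuously into the space where the pointwise ultraweak suprema are computed.
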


\begin{proof}
	Since
	\[
	(1-q)\sum_{n=0}^{\infty}q^{n}=1,
	\]
	Proposition~\ref{prop:abel-identity} gives
	\[
	A_{\ell}-Z_{\ell}(q)
	=
	(1-q)
	\sum_{n=0}^{\infty}q^{n}
	\bigl(A_{\ell}-A_{\ell,n}\bigr).
	\]
	Therefore,
	\[
	\begin{aligned}
		\lVert A_{\ell}-Z_{\ell}(q)\rVert
		&\leq
		(1-q)
		\sum_{n=0}^{\infty}q^{n}
		\lVert A_{\ell}-A_{\ell,n}\rVert
		\\
		&\leq
		C(1-q)
		\sum_{n=0}^{\infty}(qr)^{n}
		\\
		&=
		C\frac{1-q}{1-qr}.
	\end{aligned}
	\]
	The right-hand side tends to zero as $q\to 1^{-}$.
\end{proof}

\subsection{Repeat-until-success revisited}
\label{subsec:rus-abel}

For the repeat-until-success computation of Sections~\ref{subsec:rus} and
\ref{subsec:rus-graphs}, the exact-depth contributions are
\[
D_{0}=0
\]
and
\[
D_{n}
=
p(1-p)^{n-1}\mathcal U,
\qquad n\geq 1.
\]
The Kleene approximants are
\[
A_n
=
\sum_{j=0}^{n}D_j
=
\bigl(1-(1-p)^n\bigr)\mathcal U.
\]
Consequently,
\[
\begin{aligned}
	Z(q)
	&=
	\sum_{n=1}^{\infty}q^{n}p(1-p)^{n-1}\mathcal U
	\\
	&=
	\frac{pq}{1-(1-p)q}\mathcal U
	\\
	&=
	(1-q)
	\sum_{n=0}^{\infty}
	q^{n}
	\bigl(1-(1-p)^n\bigr)\mathcal U.
\end{aligned}
\]
It follows that
\[
\lim_{q\to 1^{-}}Z(q)
=
\mathcal U.
\]
With $q=e^{-t}$, this becomes
\[
Z(t)
=
\frac{pe^{-t}}{1-(1-p)e^{-t}}\mathcal U,
\]
and
\[
\lim_{t\to 0^{+}}Z(t)
=
\mathcal U.
\]
Thus, in this elementary example, the graph expansion, the Abel
regularisation, and the least-fixed-point semantics can all be computed
explicitly and give the same result.

\section{Algebraic feedback and execution resolvents}
\label{sec:feedback}

The preceding sections reconstruct recursive denotations from finite execution
histories by means of degree-weighted graph series. We now isolate the
algebraic mechanism underlying a feedback loop. In a linear sector, one pass
through an open computation is represented by an operator
\[
T:C\longrightarrow R,
\]
while the return of the result to the continuation interface is represented by
an operator
\[
S:R\longrightarrow C.
\]
One complete traversal of the loop is therefore governed by the endomorphism
\[
K=ST:C\longrightarrow C.
\]
The regularised feedback equation involves the operator
\[
I_C-qST,
\]
where the factor $q$ records one additional loop traversal. Its inverse, when
it exists, resums all finite returns through the feedback interface. We first
construct this inverse formally, then relate it to least fixed points and to an
algebraic cross-ratio.

The present section is independent of any differential or Grassmannian
structure. Only linear algebra, formal series, and, in the analytic part,
standard operator theory are used. The relation with traced feedback and the
geometry of interaction is discussed in
\cite{JoyalStreetVerity1996,Hasegawa1997,HaghverdiScott2006,HasuoHoshino2017}.

\subsection{Linear feedback configurations}
\label{subsec:linear-feedback}

Let $C$ and $R$ be vector spaces over $\C$. We interpret $C$ as a continuation
interface and $R$ as a result interface.

\begin{definition}
	\label{def:feedback-configuration}
	A linear feedback configuration is a quadruple
	\[
	(C,R,T,S),
	\]
	where
	\[
	T:C\longrightarrow R
	\qquad\text{and}\qquad
	S:R\longrightarrow C
	\]
	are linear maps. Its return operator is
	\[
	K=ST\in\operatorname{End}(C).
	\]
	For a scalar parameter $q$, the associated feedback operator is
	\[
	D_q(T,S)=I_C-qST.
	\]
\end{definition}

Given an initial continuation value $u\in C$, the $q$-regularised feedback
equation is
\[
c=u+qSTc.
\]
Whenever $D_q(T,S)$ is invertible, this equation has the unique solution
\[
c=D_q(T,S)^{-1}u.
\]
The corresponding result at the output interface is
\[
r=TD_q(T,S)^{-1}u.
\]
This motivates the following notation.

\begin{definition}
	\label{def:feedback-response}
	Whenever $I_C-qST$ is invertible, the feedback response is the linear map
	\[
	\operatorname{Fb}_q(T,S)
	=
	T(I_C-qST)^{-1}:C\longrightarrow R.
	\]
\end{definition}

The parameter $q$ is central. In the simplest grading, one complete return
through the loop has degree one, and the term of degree $n$ represents an
execution that crosses the feedback interface exactly $n$ times. When the
degree counts elementary execution steps instead, the scalar operator $qST$
is replaced by a positively graded return series $K(q)$, as described below.

\subsection{Formal execution resolvents}
\label{subsec:formal-execution-resolvents}

We first work in the formal power-series algebra
\[
\operatorname{End}(C)[[q]].
\]
No topology on $C$ is needed.

\begin{proposition}[Formal Neumann resolvent]
	\label{prop:formal-neumann-resolvent}
	Let $K\in\operatorname{End}(C)$. Then $I_C-qK$ is invertible in
	$\operatorname{End}(C)[[q]]$, with inverse
	\[
	\mathcal R_K(q)
	=
	(I_C-qK)^{-1}
	=
	\sum_{n=0}^{\infty}q^nK^n.
	\]
	Consequently,
	\[
	\operatorname{Fb}_q(T,S)
	=
	\sum_{n=0}^{\infty}q^nT(ST)^n
	=
	\sum_{n=0}^{\infty}q^n(TS)^nT
	\]
	as an element of $\operatorname{Hom}(C,R)[[q]]$.
\end{proposition}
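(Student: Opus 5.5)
The plan is to work directly in the formal power-series ring $\operatorname{End}(C)[[q]]$, whose product is the Cauchy product $\bigl(\sum_n q^nX_n\bigr)\bigl(\sum_m q^mY_m\bigr)=\sum_p q^p\sum_{n+m=p}X_nY_m$. Each coefficient of such a product is a finite sum, so no topology on $C$ is involved. Set $\mathcal R_K(q)=\sum_{n\ge0}q^nK^n$.

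\textbf{Step 1: two-sided inverse.} Compute $(I_C-qK)\mathcal R_K(q)$ coefficientwise. The coefficient of $q^0$ is $I_C$. For $p\ge1$ the coefficient of $q^p$ is $K^p-K\cdot K^{p-1}=0$, which is a telescoping cancellation. The same computation for $\mathcal R_K(q)(I_C-qK)$ uses $K^{p-1}K=K^p$. Hence $\mathcal R_K(q)$ is a two-sided inverse, and it is unique by associativity of the ring. This step is routine.

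\textbf{Step 2: the feedback response.} Interpret $\operatorname{Fb}_q(T,S)=T(I_C-qST)^{-1}$ as the product of the constant series $T\in\operatorname{Hom}(C,R)\subset\operatorname{Hom}(C,R)[[q]]$ with $\mathcal R_{ST}(q)$. This uses the evident bilinear Cauchy pairing $\operatorname{Hom}(C,R)[[q]]\times\operatorname{End}(C)[[q]]\to\operatorname{Hom}(C,R)[[q]]$. Applying Step 1 with $K=ST$, multiplication by a constant acts coefficientwise and gives $\sum_n q^nT(ST)^n$.

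\textbf{Step 3: the second expression.} Prove $T(ST)^n=(TS)^nT$ by induction on $n$. The case $n=0$ is trivial. For the inductive step, $T(ST)^{n+1}=T(ST)^n\,ST=(TS)^nT\,ST=(TS)^n(TS)T=(TS)^{n+1}T$, by associativity of composition. Summing over $n$ coefficientwise gives the equality of series.

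The only point needing care is Step 2. One must make precise that $\operatorname{Fb}_q$ is defined here formally, through the module structure of $\operatorname{Hom}(C,R)[[q]]$ over $\operatorname{End}(C)[[q]]$, rather than through an analytic inverse. One must also check that this pairing is associative and compatible with the constant embedding. Both facts follow from associativity of composition together with finiteness of each Cauchy coefficient. I do not expect a genuine obstacle beyond this.
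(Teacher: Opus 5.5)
Your proposal is correct and follows the paper's route: a telescoping computation in $\operatorname{End}(C)[[q]]$ (the paper phrases it via partial sums, $(I_C-qK)\sum_{n\le N}q^nK^n=I_C-q^{N+1}K^{N+1}$, whereas you compare Cauchy coefficients directly), followed by the identity $T(ST)^n=(TS)^nT$. Your remark that $\operatorname{Fb}_q$ must be read formally, through the $\operatorname{End}(C)[[q]]$-module structure on $\operatorname{Hom}(C,R)[[q]]$, is a harmless clarification that the paper leaves implicit.
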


\begin{proof}
	For every $N\in\N$,
	\[
	(I_C-qK)
	\sum_{n=0}^{N}q^nK^n
	=
	I_C-q^{N+1}K^{N+1}.
	\]
	Hence the coefficient of every fixed power of $q$ agrees with the
	corresponding coefficient of the identity when $N$ is sufficiently large.
	This proves
	\[
	(I_C-qK)
	\sum_{n=0}^{\infty}q^nK^n
	=
	I_C.
	\]
	The same computation on the other side gives the two-sided inverse. Finally,
	\[
	T(ST)^n=(TS)^nT
	\]
	for every $n\in\N$, which proves the formula for the feedback response.
\end{proof}

The construction extends directly to a return operator that is itself a
graded execution series.

\begin{proposition}[Graded formal resolvent]
	\label{prop:graded-formal-resolvent}
	Let
	\[
	K(q)=\sum_{n=1}^{\infty}q^nK_n
	\in
	q\operatorname{End}(C)[[q]].
	\]
	Then $I_C-K(q)$ is invertible, with
	\[
	(I_C-K(q))^{-1}
	=
	\sum_{m=0}^{\infty}K(q)^m.
	\]
	For $N\geq 1$, the coefficient of $q^N$ in this inverse is
	\[
	\sum_{m=1}^{N}
	\sum_{\substack{n_1+\cdots+n_m=N\\ n_1,\ldots,n_m\geq 1}}
	K_{n_1}K_{n_2}\cdots K_{n_m}.
	\]
\end{proposition}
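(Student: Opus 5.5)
The plan is to reduce the claim to the formal Neumann argument of Proposition~\ref{prop:formal-neumann-resolvent}, with the $q$-adic filtration of $\operatorname{End}(C)[[q]]$ replacing the explicit powers $q^n$. The first step is to check that the series $\sum_{m\geq 0}K(q)^m$ makes sense in $\operatorname{End}(C)[[q]]$. Because $K(q)$ has no constant term, $K(q)^m\in q^m\operatorname{End}(C)[[q]]$ for every $m$. Hence the coefficient of a fixed power $q^N$ receives contributions only from the finitely many $m\leq N$. So the infinite sum is defined coefficientwise, and each coefficient is a finite sum of elements of $\operatorname{End}(C)$. No topology on $C$ is needed.

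Next I will prove the inverse property. For every $M$ there is the telescoping identity
\[
(I_C-K(q))\sum_{m=0}^{M}K(q)^m=I_C-K(q)^{M+1}=\Bigl(\sum_{m=0}^{M}K(q)^m\Bigr)(I_C-K(q)).
\]
The error term $K(q)^{M+1}$ lies in $q^{M+1}\operatorname{End}(C)[[q]]$. Fix $N$ and take $M\geq N$. Then the coefficient of $q^N$ in $(I_C-K(q))\sum_{m\geq0}K(q)^m$ equals that of $I_C$; this uses that multiplication in $\operatorname{End}(C)[[q]]$ is computed coefficientwise by finite Cauchy sums, so truncating beyond degree $N$ does not affect the degree-$N$ coefficient. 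The same holds on the right, so the series is a two-sided inverse. Uniqueness of inverses in the associative algebra $\operatorname{End}(C)[[q]]$ then justifies the notation $(I_C-K(q))^{-1}$.

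Finally I will compute the coefficient. Expanding the $m$-fold product gives
\[
K(q)^m=\sum_{n_1,\ldots,n_m\geq1}q^{n_1+\cdots+n_m}K_{n_1}\cdots K_{n_m},
\]
where the order of the factors must be kept, since the $K_n$ need not commute. The coefficient of $q^N$ is therefore the sum over ordered compositions $n_1+\cdots+n_m=N$ with all $n_i\geq 1$. Such compositions exist only when $1\leq m\leq N$ (for $N\geq1$), and the $m=0$ term $I_C$ contributes only to degree $0$. Summing over $m$ gives the displayed formula.

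I do not expect a genuine obstacle. The one point needing care is the justification of the coefficientwise infinite sums, that is, the valuation estimate $K(q)^m\in q^m\operatorname{End}(C)[[q]]$, together with keeping the noncommutative order of the factors in the composition sum.
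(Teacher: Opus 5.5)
Your proposal is correct and follows the paper's proof: the positive valuation of $K(q)$ makes each coefficient of $\sum_{m}K(q)^m$ a finite sum, the geometric-series identity then holds coefficient by coefficient, and expanding $K(q)^m$ over ordered compositions gives the formula. You supply more detail than the paper, namely the explicit telescoping identity and the two-sided inverse check, but the argument is the same.
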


\begin{proof}
	Since $K(q)$ has strictly positive valuation, the coefficient of $q^N$ in
	$K(q)^m$ vanishes whenever $m>N$. Thus every coefficient of
	\[
	\sum_{m=0}^{\infty}K(q)^m
	\]
	is a finite sum. The usual geometric-series computation is therefore valid
	coefficient by coefficient. Expanding $K(q)^m$ gives the stated coefficient
	formula.
\end{proof}

Suppose that $K_n$ is the total semantic weight of all elementary return
cycles of degree $n$. Then the ordered product
\[
K_{n_1}K_{n_2}\cdots K_{n_m}
\]
is the weight of a feedback execution obtained by concatenating $m$ return
cycles of respective degrees $n_1,\ldots,n_m$. Therefore,
Proposition~\ref{prop:graded-formal-resolvent} states that the inverse of
$I_C-K(q)$ is precisely the formal sum of all finite feedback executions.
Local finiteness at each total degree is inherited from the positive valuation
of $K(q)$.

\subsection{Analytic resolvents}
\label{subsec:analytic-resolvents}

Assume now that $C$ and $R$ are Banach spaces and that $T$ and $S$ are bounded
operators. The formal resolvent becomes an analytic operator-valued function
inside its disk of convergence.

\begin{proposition}
	\label{prop:analytic-neumann-resolvent}
	Let $K\in\mathcal L(C)$, where $\mathcal L(C)$ denotes the Banach algebra of
	bounded operators on $C$. If
	\[
	|q|\,\lVert K\rVert<1,
	\]
	then
	\[
	(I_C-qK)^{-1}
	=
	\sum_{n=0}^{\infty}q^nK^n
	\]
	with convergence in operator norm. More generally, $I_C-qK$ is invertible if
	and only if
	\[
	q^{-1}\notin\Spec(K)
	\]
	for $q\neq 0$.
\end{proposition}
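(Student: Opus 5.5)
The plan is to treat the two assertions separately. The first is the Neumann series in a Banach algebra. The second is the definition of the spectrum, after factoring out the scalar $q$.

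\emph{Norm convergence.} Submultiplicativity of the operator norm gives $\lVert q^nK^n\rVert\leq(|q|\lVert K\rVert)^n$. Under $|q|\lVert K\rVert<1$ the series $\sum_{n}q^nK^n$ is therefore absolutely convergent in the Banach algebra $\mathcal L(C)$, and its sum $\mathcal R$ is a bounded operator. To see that $\mathcal R$ is a two-sided inverse, I will use the telescoping identity already appearing in the proof of Proposition~\ref{prop:formal-neumann-resolvent}:
\[
(I_C-qK)\sum_{n=0}^{N}q^nK^n=\sum_{n=0}^{N}q^nK^n(I_C-qK)=I_C-q^{N+1}K^{N+1}.
\]
The remainder satisfies $\lVert q^{N+1}K^{N+1}\rVert\leq(|q|\lVert K\rVert)^{N+1}\to 0$. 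Multiplication in $\mathcal L(C)$ is norm-continuous, so letting $N\to\infty$ gives $(I_C-qK)\mathcal R=\mathcal R(I_C-qK)=I_C$. This also shows that the analytic inverse coincides with the formal resolvent of Proposition~\ref{prop:formal-neumann-resolvent}, evaluated at $q$.

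\emph{Spectral characterisation.} For $q\neq0$ we have the factorisation $I_C-qK=-q\,(K-q^{-1}I_C)$, and $-q$ is an invertible scalar. Hence $I_C-qK$ is invertible in $\mathcal L(C)$ if and only if $K-q^{-1}I_C$ is invertible in $\mathcal L(C)$. By the definition of the spectrum in the Banach algebra $\mathcal L(C)$, this holds exactly when $q^{-1}\notin\Spec(K)$.

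One point needs care: whether invertibility is meant with bounded inverse or merely as a linear bijection. By the bounded inverse theorem on the Banach space $C$, a bounded bijection automatically has a bounded inverse, so the two notions coincide and no ambiguity arises. For $q=0$ the operator $I_C-qK$ is simply $I_C$, which is invertible; this is why that case is excluded from the spectral statement.

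Neither step is a genuine obstacle. The only subtle point is this identification of algebraic and Banach-algebra invertibility via the open mapping theorem. As a consistency check, the first part is the special case of the second given by $\Spec(K)\subseteq\{|z|\leq\lVert K\rVert\}$.
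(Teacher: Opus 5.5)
Your proof is correct and follows the paper's argument: the paper cites the Neumann-series theorem in $\mathcal L(C)$ and uses the factorisation $I_C-qK=-q(K-q^{-1}I_C)$ to reduce invertibility to $q^{-1}$ lying in the resolvent set of $K$, exactly as you do. Your telescoping check of the two-sided inverse and your bounded-inverse-theorem remark only make explicit what the paper leaves implicit.
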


\begin{proof}
	The first statement is the Neumann-series theorem in the Banach algebra
	$\mathcal L(C)$. For $q\neq 0$, one has
	\[
	I_C-qK
	=
	-q\bigl(K-q^{-1}I_C\bigr),
	\]
	so invertibility is equivalent to $q^{-1}$ belonging to the resolvent set of
	$K$.
\end{proof}

With the parametrisation
\[
q=e^{-t},
\qquad t>0,
\]
the feedback operator and its resolvent become
\[
D_t(T,S)
=
I_C-e^{-t}ST
\]
and
\[
\mathcal R_t(T,S)
=
(I_C-e^{-t}ST)^{-1}.
\]
The damping factor $e^{-tn}$ assigns an exponential penalty to an execution
that traverses the feedback loop $n$ times.

\begin{proposition}
	\label{prop:resolvent-limit}
	Assume that $I_C-ST$ is invertible. Then there exists $t_0>0$ such that
	$I_C-e^{-t}ST$ is invertible for every $t\in[0,t_0)$, and
	\[
	\lim_{t\to 0^{+}}
	(I_C-e^{-t}ST)^{-1}
	=
	(I_C-ST)^{-1}
	\]
	in operator norm. If, in addition, the spectral radius satisfies
	\[
	r(ST)<1,
	\]
	then
	\[
	(I_C-ST)^{-1}
	=
	\sum_{n=0}^{\infty}(ST)^n
	\]
	in operator norm.
\end{proposition}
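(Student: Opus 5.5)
The plan is to view the resolvent as a continuous function of $t$ into the Banach algebra $\mathcal L(C)$, using that the group of invertible elements is open and that inversion is continuous on it.

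Write $K=ST\in\mathcal L(C)$ and set $G(t)=I_C-e^{-t}K$. Since $t\mapsto e^{-t}$ is continuous, $G$ is norm continuous, and
\[
G(t)-G(0)=(1-e^{-t})K,
\qquad
\lVert G(t)-G(0)\rVert\leq(1-e^{-t})\lVert K\rVert .
\]
Because $G(0)=I_C-K$ is invertible by hypothesis, the standard perturbation lemma applies: if $\lVert G(t)-G(0)\rVert<\lVert G(0)^{-1}\rVert^{-1}$, then
\[
G(t)=G(0)\bigl(I_C-G(0)^{-1}(G(0)-G(t))\bigr)
\]
is invertible. This follows from the Neumann series of Proposition~\ref{prop:analytic-neumann-resolvent} applied to $G(0)^{-1}(G(0)-G(t))$. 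So I choose $t_0>0$ such that
\[
(1-e^{-t_0})\,\lVert K\rVert\,\lVert G(0)^{-1}\rVert<1 ,
\]
which is possible because $1-e^{-t}\to 0$. The case $K=0$ is trivial, since then $t_0$ can be anything. This gives invertibility of $G(t)$ for all $t\in[0,t_0)$.

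For the limit I use the resolvent identity
\[
G(t)^{-1}-G(0)^{-1}=G(t)^{-1}\bigl(G(0)-G(t)\bigr)G(0)^{-1}
\]
together with the uniform bound
\[
\lVert G(t)^{-1}\rVert\leq\frac{\lVert G(0)^{-1}\rVert}{1-(1-e^{-t})\lVert K\rVert\lVert G(0)^{-1}\rVert},
\]
which is read off from the Neumann series above. The difference $G(t)^{-1}-G(0)^{-1}$ is then $O(1-e^{-t})$ in norm, and so it tends to zero as $t\to 0^{+}$. The only care needed is to shrink $t_0$ so that the denominator stays bounded away from zero, for instance by halving the admissible range.

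For the second statement, assume $r(K)<1$. By Gelfand's formula $\lim_n\lVert K^n\rVert^{1/n}=r(K)$, so I fix $\rho$ with $r(K)<\rho<1$. Then $\lVert K^n\rVert\leq\rho^n$ for all large $n$, and $\sum_n K^n$ converges absolutely in norm. The telescoping identity
\[
(I_C-K)\sum_{n=0}^{N}K^n=I_C-K^{N+1},
\]
and the same identity with the factors in the other order, let me pass to the limit $N\to\infty$, since $K^{N+1}\to0$. The series is therefore the two-sided inverse of $I_C-K$, which by uniqueness of inverses equals $(I_C-ST)^{-1}$. I expect no step to be a real obstacle. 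The only subtle point is that $\lVert K\rVert$ may be $\geq1$, so the Neumann series in $K$ itself cannot be used directly, and the spectral radius via Gelfand's formula is what is needed instead.
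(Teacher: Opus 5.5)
Your proposal is correct and follows the paper's argument. For the first part, the paper invokes openness of the invertible group and continuity of inversion, which you make quantitative via the perturbation lemma and the resolvent identity, obtaining an explicit $t_0$ and an $O(1-e^{-t})$ rate. For the second part, both you and the paper use the spectral-radius (Gelfand) formula to get norm convergence of the Neumann series.
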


\begin{proof}
	The invertible group of a unital Banach algebra is open, and inversion is
	continuous. Since
	\[
	I_C-e^{-t}ST
	\longrightarrow
	I_C-ST
	\]
	in operator norm as $t\to 0^{+}$, the first assertion follows. If
	$r(ST)<1$, the spectral-radius formula implies that the Neumann series
	converges in operator norm.
\end{proof}

The distinction between the two assumptions in
Proposition~\ref{prop:resolvent-limit} is important. Invertibility of
$I_C-ST$ guarantees a regular limit of the analytic resolvent, but the
unweighted execution series
\[
\sum_{n=0}^{\infty}(ST)^n
\]
converges in operator norm only under the stronger spectral-radius condition.

\subsection{Feedback as a least fixed point}
\label{subsec:feedback-fixed-point}

The execution resolvent has a direct domain-theoretic interpretation. Let $C$
be a pointed dcpo cone whose least element is $0$. Assume that addition and
multiplication by scalars in $[0,1]$ preserve directed suprema in each
variable. Let
\[
K:C\longrightarrow C
\]
be Scott-continuous, additive, and positively homogeneous, and let $u\in C$.

For $0\leq q\leq 1$, define
\[
F_{q,u}(c)
=
u+qKc.
\]

\begin{proposition}
	\label{prop:feedback-kleene-chain}
	The map $F_{q,u}$ is Scott-continuous, and its Kleene approximants satisfy
	\[
	F_{q,u}^{n}(\bottom)
	=
	\sum_{j=0}^{n-1}q^jK^ju
	\]
	for every $n\geq 1$. Consequently,
	\[
	\fix(F_{q,u})
	=
	\sup_{n\geq 1}
	\sum_{j=0}^{n-1}q^jK^ju.
	\]
	If $C$ is embedded in a vector space in which $I_C-qK$ is invertible and the
	right-hand side converges to an element of $C$, then
	\[
	\fix(F_{q,u})
	=
	(I_C-qK)^{-1}u.
	\]
\end{proposition}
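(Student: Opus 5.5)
The plan has three steps: establish Scott-continuity of $F_{q,u}$ from the structural hypotheses on the cone $C$, compute the Kleene chain by induction, and then apply the Kleene fixed-point theorem of Section~\ref{subsec:channels-domains}, with an extra identification in the vector-space case.

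\emph{Scott-continuity.} The map $F_{q,u}$ factors as $c\mapsto Kc$, then $c\mapsto qc$, then $c\mapsto u+c$. The first map is Scott-continuous by hypothesis. For $0\leq q\leq 1$, scalar multiplication by $q$ preserves directed suprema, and so does addition of the fixed element $u$ in the second variable. A composite of Scott-continuous maps is Scott-continuous. Monotonicity, needed to make the Kleene chain increasing, follows from Scott-continuity.

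\emph{Kleene approximants.} I will argue by induction on $n\geq 1$, using $\bottom=0$.
\begin{itemize}
\item Base case: $F_{q,u}(0)=u+qK0$. Additivity gives $K0=K0+K0$, but since $C$ is a cone, which need not be cancellative, I would instead use positive homogeneity with scalar $0$ to get $K0=0$. This yields $F_{q,u}(\bottom)=u$.
\item Inductive step: assuming $F_{q,u}^n(\bottom)=\sum_{j=0}^{n-1}q^jK^ju$, additivity and positive homogeneity of $K$ give
\[
F_{q,u}^{n+1}(\bottom)=u+qK\Bigl(\sum_{j=0}^{n-1}q^jK^ju\Bigr)=u+\sum_{j=0}^{n-1}q^{j+1}K^{j+1}u=\sum_{j=0}^{n}q^jK^ju .
\]
\end{itemize}
The fixed-point formula then follows from Kleene's theorem, $\fix(F_{q,u})=\sup_n F_{q,u}^n(\bottom)$, since the $n=0$ term $\bottom$ does not affect the supremum.

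\emph{Identification with the resolvent.} Write $c^*=\fix(F_{q,u})$. By the fixed-point property, $c^*=u+qKc^*$, which means $(I_C-qK)c^*=u$ in the ambient vector space. Invertibility of $I_C-qK$ then gives $c^*=(I_C-qK)^{-1}u$.

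The main obstacle is interpreting ``the right-hand side converges to an element of $C$.'' I would read this as saying that the directed supremum in $C$ coincides with the limit of the partial sums in the ambient space. Under that reading there is an alternative to the fixed-point equation: multiply the partial sums by $I_C-qK$, which gives $u-q^nK^nu$ by the telescoping computation of Proposition~\ref{prop:formal-neumann-resolvent}, and pass to the limit. That route, however, needs $q^nK^nu\to0$ and continuity of $I_C-qK$, so I prefer the fixed-point argument, which avoids both. The only point to state explicitly is that the ambient linear structure extends the cone operations and $K$, so that the dcpo fixed-point identity is a genuine linear identity.
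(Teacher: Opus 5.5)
Your proof is correct. For Scott-continuity, the inductive computation of the Kleene chain and the appeal to Kleene's theorem it follows the paper. Your derivation of $K0=0$ from homogeneity at the scalar $0$ is more careful than the paper's bare assertion, since a cone need not be cancellative.

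The final identification is where the two arguments differ.
\begin{itemize}
\item The paper's route: take a limit $c$ of the partial sums and pass to the limit in $s_{n+1}=u+qKs_n$ to get $c=u+qKc$. This tacitly needs $K$ to be continuous in whatever topology the convergence refers to. It also needs that limit to be identified with the dcpo supremum before anything can be said about $\fix(F_{q,u})$.
\item Your route: transport the exact identity $\fix(F_{q,u})=u+qK\fix(F_{q,u})$, which holds in $C$ by Kleene's theorem, into the ambient space, and then use invertibility. This requires only that the embedding be additive, positively homogeneous, and intertwine $K$ with its linear extension. No continuity of $K$ is needed.
\end{itemize}

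Your route also shows that the convergence hypothesis is superfluous for the resolvent formula. Invertibility of $I_C-qK$ even forces every fixed point of $F_{q,u}$ in $C$ to equal $(I_C-qK)^{-1}u$, so the fixed point is unique.

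What the paper's route buys is the extra statement that the topological sum of the Neumann partial sums is the resolvent, so that the dcpo supremum and the analytic series agree. In your approach that agreement is simply built into how you read the convergence hypothesis, and the displayed equality does not depend on it.
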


\begin{proof}
	Scott-continuity follows from the assumptions on addition, scalar
	multiplication, and $K$. The formula for the Kleene approximants is proved by
	induction. For $n=1$,
	\[
	F_{q,u}(\bottom)=u,
	\]
	because $K(0)=0$. If the formula holds for $n$, then
	\[
	\begin{aligned}
		F_{q,u}^{n+1}(\bottom)
		&=
		u+qK
		\left(
		\sum_{j=0}^{n-1}q^jK^ju
		\right)
		\\
		&=
		\sum_{j=0}^{n}q^jK^ju.
	\end{aligned}
	\]
	The fixed-point formula follows from the Kleene theorem. Finally, any limit
	$c$ of the partial sums satisfies
	\[
	c=u+qKc,
	\]
	and therefore
	\[
	(I_C-qK)c=u.
	\]
	Invertibility gives the last formula.
\end{proof}

In the setting of Sections~\ref{sec:graph-semantics} and~\ref{sec:abel}, the
terms $K^ju$ are the semantic contributions of executions that return through
the feedback interface exactly $j$ times. Proposition
\ref{prop:feedback-kleene-chain} therefore identifies the formal Neumann
series with the Kleene chain of the linear feedback functional.

\subsection{The cross-ratio of two graph decompositions}
\label{subsec:feedback-cross-ratio}

The feedback operator $I_C-qST$ also has an intrinsic algebraic description.
Set
\[
V=C\oplus R
\]
and identify $C$ and $R$ with the coordinate subspaces
\[
C_0=C\oplus\{0\},
\qquad
R_0=\{0\}\oplus R.
\]
For a linear map $T:C\to R$, define
\[
\Gamma(T)
=
\{(c,Tc)\mid c\in C\},
\]
and, for $S:R\to C$, define
\[
\Gamma(S)
=
\{(Sr,r)\mid r\in R\}.
\]
One always has
\[
V=\Gamma(T)\oplus R_0
\qquad\text{and}\qquad
V=C_0\oplus\Gamma(S).
\]

If $V=A\oplus B$, let
\[
p_{A,B}:V\longrightarrow A
\]
denote the projection onto $A$ parallel to $B$.

\begin{definition}
	\label{def:algebraic-cross-ratio}
	The algebraic cross-ratio associated with the two graph decompositions is the
	endomorphism of $C_0$ defined by
	\[
	\CR(C_0,R_0;\Gamma(T),\Gamma(S))
	=
	p_{C_0,\Gamma(S)}
	\circ
	p_{\Gamma(T),R_0}
	\big|_{C_0}.
	\]
	Through the canonical identification $C_0\simeq C$, it is regarded as an
	endomorphism of $C$.
\end{definition}

\begin{proposition}
	\label{prop:cross-ratio-feedback}
	For every pair of linear maps $T:C\to R$ and $S:R\to C$,
	\[
	\CR(C_0,R_0;\Gamma(T),\Gamma(S))
	=
	I_C-ST.
	\]
	More generally,
	\[
	\CR(C_0,R_0;\Gamma(qT),\Gamma(S))
	=
	I_C-qST.
	\]
\end{proposition}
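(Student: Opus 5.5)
Our plan is to compute the composite of Definition~\ref{def:algebraic-cross-ratio} on an arbitrary vector $(c,0)\in C_0$, one projection at a time. The only inputs are the two decompositions $V=\Gamma(T)\oplus R_0$ and $V=C_0\oplus\Gamma(S)$ already recorded before the definition. For each of them we will write down the unique splitting of a general vector, and that splitting gives the corresponding projection explicitly.

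\emph{First projection.} Take $(x,y)\in V$. We write
\[
(x,y)=(x,Tx)+(0,y-Tx),
\]
where the first summand lies in $\Gamma(T)$ and the second in $R_0$. Uniqueness of this splitting holds because $\Gamma(T)\cap R_0=0$: a vector $(c,Tc)$ with first coordinate zero must have $c=0$. This gives $p_{\Gamma(T),R_0}(x,y)=(x,Tx)$. In particular,
\[
p_{\Gamma(T),R_0}(c,0)=(c,Tc).
\]

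\emph{Second projection.} Again take $(x,y)\in V$. We write
\[
(x,y)=(x-Sy,0)+(Sy,y),
\]
with the first summand in $C_0$ and the second in $\Gamma(S)$. Uniqueness follows from $C_0\cap\Gamma(S)=0$, and so $p_{C_0,\Gamma(S)}(x,y)=(x-Sy,0)$. We now apply this to $(c,Tc)$ and obtain $(c-STc,0)$. Under the identification $C_0\simeq C$ this vector is $(I_C-ST)c$, which proves the first identity. For the second identity we replace $T$ by the linear map $qT$; the same computation gives $I_C-S(qT)=I_C-qST$.

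No step is a real obstacle. The one point to keep straight is the convention in $p_{A,B}$: it projects onto $A$ parallel to $B$. We must also apply the projections in the stated order, first $p_{\Gamma(T),R_0}$ restricted to $C_0$ and then $p_{C_0,\Gamma(S)}$. Reversing the order, or the roles of the two summands, would give a different operator such as $TS$ on $R$ or a sign change. We will therefore check both direct-sum decompositions explicitly, including the intersection conditions, so that the projections are well defined.
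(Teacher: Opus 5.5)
Your proposal is correct and matches the paper's proof: you compute $p_{\Gamma(T),R_0}(c,0)=(c,Tc)$, split $(c,Tc)=(c-STc,0)+(STc,Tc)$ to read off $p_{C_0,\Gamma(S)}(c,Tc)=(c-STc,0)$, and then substitute $qT$ for $T$. You also check explicitly that $\Gamma(T)\cap R_0=0$ and $C_0\cap\Gamma(S)=0$, which the paper leaves implicit when it asserts the two direct-sum decompositions.
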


\begin{proof}
	For $c\in C$, identified with $(c,0)\in C_0$, one has
	\[
	p_{\Gamma(T),R_0}(c,0)
	=
	(c,Tc).
	\]
	Moreover,
	\[
	(c,Tc)
	=
	(c-STc,0)+(STc,Tc),
	\]
	and the second term belongs to $\Gamma(S)$. Therefore,
	\[
	p_{C_0,\Gamma(S)}(c,Tc)
	=
	(c-STc,0).
	\]
	This proves the first formula. Replacing $T$ with $qT$ gives the second one.
\end{proof}

The definition above is purely algebraic. The more general smooth cross-ratio
construction studied in \cite{Magnot2024CrossRatio} is not needed for the
present semantic application.

\begin{theorem}[Transversality criterion]
	\label{thm:transversality-criterion}
	The following conditions are equivalent:
	\begin{enumerate}
		\item $V=\Gamma(qT)\oplus\Gamma(S)$;
		\item $I_C-qST$ is invertible on $C$;
		\item $I_R-qTS$ is invertible on $R$.
	\end{enumerate}
	When these conditions hold,
	\[
	(I_R-qTS)^{-1}
	=
	I_R+qT(I_C-qST)^{-1}S
	\]
	and
	\[
	(I_C-qST)^{-1}
	=
	I_C+qS(I_R-qTS)^{-1}T.
	\]
\end{theorem}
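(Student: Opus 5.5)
We prove the equivalences purely algebraically, by writing an arbitrary vector of $V=C\oplus R$ against the two graph subspaces and solving the resulting linear system. Since $q$ enters only through $qT$, we replace $T$ by $T_q=qT$ throughout; then $qST=ST_q$ and $qTS=T_qS$, and it suffices to treat $q=1$ with $T$ arbitrary.

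\emph{Step 1: (1)$\Leftrightarrow$(2).} A vector $(x,y)\in V$ lies in $\Gamma(T)+\Gamma(S)$ if and only if there are $c\in C$ and $r\in R$ with
\[
x=c+Sr,\qquad y=Tc+r .
\]
Eliminating $r=y-Tc$ gives $(I_C-ST)c=x-Sy$. Conversely, any solution $c$ of this equation produces the decomposition by setting $r=y-Tc$.

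For injectivity of the sum, a vector in $\Gamma(T)\cap\Gamma(S)$ has the form $(c,Tc)=(Sr,r)$, so $r=Tc$ and $c=STc$. This identifies $\Gamma(T)\cap\Gamma(S)$ with $\ker(I_C-ST)$ via $c\mapsto(c,Tc)$. Surjectivity of the sum is equivalent to surjectivity of $I_C-ST$: the right-hand side $x-Sy$ ranges over all of $C$ (take $y=0$), and every solvable system yields a decomposition. Together these give (1)$\Leftrightarrow$(2). This step can also be phrased through Proposition~\ref{prop:cross-ratio-feedback}: the cross-ratio is $p_{C_0,\Gamma(S)}$ restricted to $\Gamma(T)$, transported by the isomorphism $C_0\to\Gamma(T)$. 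That restriction is an isomorphism onto $C_0$ exactly when $\Gamma(T)$ is a complement of $\Gamma(S)$, because $C_0$ is itself a complement of $\Gamma(S)$. The direct computation above is cleaner and is the one to write.

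\emph{Step 2: (2)$\Leftrightarrow$(3).} We prove this by checking the two stated identities directly, in the style of the Jacobson lemma. Assume $I_C-ST$ is invertible and set $M=I_R+T(I_C-ST)^{-1}S$. Using $(I_R-TS)T=T(I_C-ST)$, we get
\[
(I_R-TS)M=I_R-TS+T(I_C-ST)(I_C-ST)^{-1}S=I_R .
\]
Symmetrically, $S(I_R-TS)=(I_C-ST)S$ gives $M(I_R-TS)=I_R$. Hence $I_R-TS$ is invertible with inverse $M$, which is the first displayed formula. The converse implication and the second formula follow by exchanging the roles of $(C,T)$ and $(R,S)$. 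Reinserting $q$ gives the formulas exactly as stated.

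The main thing needing care is that no topology is involved. Every argument above is pure linear algebra valid in arbitrary vector spaces, with invertibility meaning bijectivity (hence linear inverse), so no finite-dimensionality or boundedness is used. In the Banach setting, bounded inverses follow from the explicit formulas, or from the open mapping theorem. The only real obstacle is bookkeeping: keeping the surjectivity and injectivity halves of Step 1 separate, since a direct sum requires both.
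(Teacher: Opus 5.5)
Your proof is correct and follows essentially the same route as the paper. Like the paper, you eliminate one unknown from the decomposition $(c,r)=(x,qTx)+(Sy,y)$ to reduce (1) to bijectivity of $I_C-qST$, and then verify the inverse formulas by direct multiplication using the push-through identities; your separate treatment of $\Gamma(qT)\cap\Gamma(S)\cong\ker(I_C-qST)$ and of surjectivity just spells out what the paper compresses into ``every vector has a unique such decomposition.''
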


\begin{proof}
	Given $(c,r)\in C\oplus R$, a decomposition
	\[
	(c,r)
	=
	(x,qTx)+(Sy,y)
	\]
	is equivalent to
	\[
	c=x+Sy,
	\qquad
	r=qTx+y.
	\]
	Eliminating $y$ gives
	\[
	(I_C-qST)x
	=
	c-Sr.
	\]
	Thus every vector has a unique such decomposition if and only if
	$I_C-qST$ is invertible. This proves the equivalence of the first two
	conditions. The equivalence with the third condition and the inverse formulas
	follow from direct multiplication. For example,
	\[
	(I_R-qTS)
	\left(I_R+qT(I_C-qST)^{-1}S\right)
	=
	I_R.
	\]
	The reverse product is verified in the same way, and the second identity is
	symmetric.
\end{proof}

\begin{corollary}
	\label{cor:cross-ratio-resolvent}
	Whenever the graph subspaces $\Gamma(qT)$ and $\Gamma(S)$ are complementary,
	the inverse cross-ratio is the execution resolvent:
	\[
	\CR(C_0,R_0;\Gamma(qT),\Gamma(S))^{-1}
	=
	(I_C-qST)^{-1}.
	\]
	Consequently,
	\[
	\operatorname{Fb}_q(T,S)
	=
	T\,
	\CR(C_0,R_0;\Gamma(qT),\Gamma(S))^{-1}.
	\]
\end{corollary}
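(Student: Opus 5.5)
The plan is to combine Proposition~\ref{prop:cross-ratio-feedback} with the transversality criterion of Theorem~\ref{thm:transversality-criterion}. First, apply Proposition~\ref{prop:cross-ratio-feedback} with $T$ replaced by $qT$. This identifies the cross-ratio endomorphism of $C$ exactly:
\[
\CR(C_0,R_0;\Gamma(qT),\Gamma(S))=I_C-qST .
\]
The identity holds for all $q$, with no hypothesis. So the only content of the corollary is the invertibility of this operator, together with taking inverses of both sides.

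Second, the hypothesis that $\Gamma(qT)$ and $\Gamma(S)$ are complementary is precisely condition~(1) of Theorem~\ref{thm:transversality-criterion}, namely $V=\Gamma(qT)\oplus\Gamma(S)$. Condition~(2) of that theorem then says that $I_C-qST$ is invertible on $C$. By the first step, this means the cross-ratio is an invertible endomorphism of $C$, and its inverse is $(I_C-qST)^{-1}$. This gives the first displayed identity.

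Third, Definition~\ref{def:feedback-response} defines $\operatorname{Fb}_q(T,S)=T(I_C-qST)^{-1}$ whenever $I_C-qST$ is invertible. That condition was just established. Substituting the previous identity yields
\[
\operatorname{Fb}_q(T,S)=T\,\CR(C_0,R_0;\Gamma(qT),\Gamma(S))^{-1}.
\]

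I expect no step to be a real obstacle. The only point needing care is bookkeeping: the cross-ratio is defined on $C_0$ and transported to $C$ by the canonical identification $c\mapsto(c,0)$. Inverses must be taken after this identification, so that both sides live in $\operatorname{End}(C)$. Since that identification is a linear isomorphism, inverting on $C_0$ and on $C$ agree. If $C$ and $R$ are Banach spaces and bounded inverses are wanted, one additionally invokes Proposition~\ref{prop:analytic-neumann-resolvent} or the open mapping theorem. The purely algebraic statement, however, needs only Theorem~\ref{thm:transversality-criterion}.
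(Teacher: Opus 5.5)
Your proposal is correct and matches the paper's intended argument. The paper states the corollary without a separate proof, as an immediate consequence of Proposition~\ref{prop:cross-ratio-feedback} (with $T$ replaced by $qT$) and the equivalence of conditions (1) and (2) in Theorem~\ref{thm:transversality-criterion}, followed by substitution into Definition~\ref{def:feedback-response}; your remarks on the identification $C_0\simeq C$ and on bounded inverses are harmless extras not needed for the algebraic statement.
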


Thus, loss of invertibility of the feedback operator is exactly the failure of
transversality between the two graph subspaces. This algebraic observation
will be used in Section~\ref{sec:fredholm} to define a determinant that detects
singular feedback configurations.

\subsection{Covariance under changes of interface coordinates}
\label{subsec:feedback-covariance}

A feedback invariant should not depend on the choice of coordinates used to
represent the continuation and result interfaces. Let
\[
U:C\longrightarrow C'
\qquad\text{and}\qquad
V:R\longrightarrow R'
\]
be linear isomorphisms, and define
\[
T'=VTU^{-1},
\qquad
S'=USV^{-1}.
\]

\begin{proposition}
	\label{prop:feedback-covariance}
	For every scalar $q$,
	\[
	I_{C'}-qS'T'
	=
	U(I_C-qST)U^{-1}.
	\]
	Whenever these operators are invertible,
	\[
	(I_{C'}-qS'T')^{-1}
	=
	U(I_C-qST)^{-1}U^{-1}
	\]
	and
	\[
	\operatorname{Fb}_q(T',S')
	=
	V\operatorname{Fb}_q(T,S)U^{-1}.
	\]
\end{proposition}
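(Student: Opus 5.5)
The plan is a direct algebraic computation: first compute the product $S'T'$, then transport inverses through the conjugation by $U$, and finally read off the feedback response. No topology or earlier result beyond the definitions is needed. Note that in this subsection the letter $V$ denotes the isomorphism $R\to R'$, not the space $C\oplus R$ of the previous subsection.

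\emph{Step 1.} Compute $S'T'=(USV^{-1})(VTU^{-1})=UST U^{-1}$, using $V^{-1}V=I_R$. Writing $I_{C'}=UI_CU^{-1}$ and using linearity of conjugation then gives
\[
I_{C'}-qS'T'=U I_C U^{-1}-qUSTU^{-1}=U(I_C-qST)U^{-1}.
\]

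\emph{Step 2.} Since $U$ is an isomorphism, $I_{C'}-qS'T'$ is invertible if and only if $I_C-qST$ is. When they are invertible, the inverse of a conjugate is the conjugate of the inverse:
\[
(U(I_C-qST)U^{-1})^{-1}=U(I_C-qST)^{-1}U^{-1}.
\]
This can be checked by multiplying on either side.

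\emph{Step 3.} Using Definition~\ref{def:feedback-response},
\[
\operatorname{Fb}_q(T',S')=T'(I_{C'}-qS'T')^{-1}=VTU^{-1}\,U(I_C-qST)^{-1}U^{-1}=V\operatorname{Fb}_q(T,S)U^{-1}.
\]
The middle factor $U^{-1}U=I_C$ cancels.

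There is no genuine obstacle here. The only points requiring care are keeping the order of the factors straight, and observing that the invertibility hypotheses for $(T,S)$ and $(T',S')$ are equivalent, so the phrase ``whenever these operators are invertible'' is unambiguous. The same argument works verbatim in $\operatorname{End}(C)[[q]]$ for formal $q$, and for bounded $U$ and $V$ in the Banach setting.
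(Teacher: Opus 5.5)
Your proof is correct and follows the same route as the paper: compute $S'T'=USTU^{-1}$, conjugate, invert, and multiply by $T'$ so that $U^{-1}U=I_C$ cancels. Your extra remarks are accurate and make explicit what the paper leaves implicit, namely that the two invertibility conditions are equivalent under conjugation and that $V$ here denotes the isomorphism $R\to R'$ rather than the space $C\oplus R$.
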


\begin{proof}
	The first identity follows from
	\[
	S'T'
	=
	USV^{-1}VTU^{-1}
	=
	USTU^{-1}.
	\]
	The remaining identities follow by inversion and multiplication by $T'$.
\end{proof}

Hence the execution resolvent and the feedback response are covariant under
interface isomorphisms. In particular, spectral and determinant quantities
constructed from $I_C-qST$ depend only on the represented feedback semantics,
not on a choice of bases. The trace-class sector and the corresponding
Fredholm determinant are studied next.

\section{Fredholm invariants of feedback}
\label{sec:fredholm}

The execution resolvent of Section~\ref{sec:feedback} detects whether a
linear feedback equation has a unique solution. In a Hilbert-space sector, a
stronger scalar invariant is available. If the return operator is trace
class, the Fredholm determinant of the feedback operator records its
singularities, is invariant under changes of interface coordinates, and
admits an expansion in traces of closed loop iterates. This section develops
that construction.

No integrable-systems interpretation is assumed. In particular, the
Fredholm determinant introduced below is a feedback invariant and is not
claimed to satisfy Pl\"ucker or Hirota identities.

\subsection{The trace-class feedback sector}
\label{subsec:trace-class-feedback}

Let $C$ and $R$ be separable complex Hilbert spaces. We denote by
$\mathcal L(C,R)$ the space of bounded operators from $C$ to $R$, by
$\Stwo(C,R)$ the Hilbert--Schmidt class, and by $\Sone(C)$ the trace class
on $C$. We use the standard inclusions and ideal properties
\[
\Sone(C)
\subseteq
\Stwo(C)
\subseteq
\mathcal L(C)
\]
and
\[
\Stwo(R,C)\,\Stwo(C,R)
\subseteq
\Sone(C).
\]
We refer to \cite{GohbergKrein1969,Simon2005} for the theory of trace ideals
and Fredholm determinants.

\begin{assumption}
	\label{ass:hilbert-schmidt-feedback}
	The feedback configuration $(C,R,T,S)$ satisfies
	\[
	T\in\Stwo(C,R)
	\qquad\text{and}\qquad
	S\in\Stwo(R,C).
	\]
\end{assumption}

Under Assumption~\ref{ass:hilbert-schmidt-feedback}, the return operators
\[
K_C=ST\in\Sone(C)
\qquad\text{and}\qquad
K_R=TS\in\Sone(R)
\]
are trace class.

\begin{definition}
	\label{def:feedback-determinant}
	For $q\in\C$, the Fredholm feedback determinant of $(T,S)$ is
	\[
	\Delta_{T,S}(q)
	=
	\DetF\bigl(I_C-qST\bigr).
	\]
	With the heat parametrisation $q=e^{-t}$, where $t>0$, we write
	\[
	\Delta_{T,S}(t)
	=
	\DetF\bigl(I_C-e^{-t}ST\bigr).
	\]
\end{definition}

Since $ST$ is trace class, the map
\[
q\longmapsto\Delta_{T,S}(q)
\]
is entire and satisfies
\[
\Delta_{T,S}(0)=1.
\]
The determinant is therefore defined even at values of $q$ for which the
feedback operator is not invertible. At such values, it vanishes.

\begin{theorem}
	\label{thm:determinant-singularities}
	Under Assumption~\ref{ass:hilbert-schmidt-feedback}, the following
	statements hold.
	\begin{enumerate}
		\item For every $q\in\C$,
		\[
		\Delta_{T,S}(q)=0
		\]
		if and only if $I_C-qST$ is not invertible.
		
		\item If $(\lambda_j)_{j\geq 1}$ is the sequence of nonzero eigenvalues of
		$ST$, repeated according to algebraic multiplicity, then
		\[
		\Delta_{T,S}(q)
		=
		\prod_{j\geq 1}(1-q\lambda_j).
		\]
		
		\item One has the Sylvester identity
		\[
		\DetF\bigl(I_C-qST\bigr)
		=
		\DetF\bigl(I_R-qTS\bigr).
		\]
	\end{enumerate}
\end{theorem}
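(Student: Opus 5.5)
The plan is to reduce all three statements to classical facts about Fredholm determinants of trace-class perturbations of the identity, as in \cite{GohbergKrein1969,Simon2005}, applied to $A=qST\in\Sone(C)$. Under Assumption~\ref{ass:hilbert-schmidt-feedback}, the ideal inclusion $\Stwo(R,C)\,\Stwo(C,R)\subseteq\Sone(C)$ gives $qST\in\Sone(C)$ and $qTS\in\Sone(R)$ for every $q\in\C$. Hence both determinants in item~3 are defined.

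First, I would prove item~2 by Lidskii's theorem. The standard product formula states
\[
\DetF(I+A)=\prod_j\bigl(1+\mu_j(A)\bigr)
\]
for $A\in\Sone$, where the $\mu_j(A)$ are the nonzero eigenvalues counted with algebraic multiplicity, and the product converges absolutely because $\sum_j|\mu_j|\le\lVert A\rVert_1$. Applying this with $A=-qST$, whose nonzero eigenvalues are $-q\lambda_j$ for $q\neq0$, gives the stated product. The case $q=0$ is trivial, since both sides equal $1$.

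Second, I would deduce item~1 from item~2 together with the Fredholm alternative. Since $qST$ is compact, $I_C-qST$ fails to be invertible exactly when $1$ is an eigenvalue of $qST$, that is, when $q\lambda_j=1$ for some $j$. An absolutely convergent product vanishes if and only if one of its factors vanishes, so $\Delta_{T,S}(q)=0$ exactly in that case. Alternatively, one can cite directly the standard criterion that $I+A$ is invertible if and only if $\DetF(I+A)\neq0$. I would give the eigenvalue argument and mention the citation.

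Third, for the Sylvester identity in item~3, I would show that $ST$ and $TS$ have the same nonzero eigenvalues with the same algebraic multiplicities, and then apply item~2 on both sides. The map $x\mapsto Tx$ restricts to an isomorphism
\[
\ker(ST-\lambda)^k\longrightarrow\ker(TS-\lambda)^k
\]
for $\lambda\neq0$. Its inverse is $\lambda^{-1}S$ corrected by a polynomial in the nilpotent part. More cleanly, the relation $T(ST-\lambda)^k=(TS-\lambda)^kT$ gives injective maps in both directions between these finite-dimensional generalized eigenspaces, so their dimensions coincide. An alternative route is to use the identity $\DetF(I+AB)=\DetF(I+BA)$ for bounded $A,B$ with $AB,BA\in\Sone$: prove it first for finite-rank $T$ and $S$ by matrix algebra, then pass to the limit using continuity of $\DetF$ in the trace norm and density of finite-rank operators in $\Stwo$. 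The estimate
\[
\lVert ST-S_nT_n\rVert_1\le\lVert S-S_n\rVert_2\lVert T\rVert_2+\lVert S_n\rVert_2\lVert T-T_n\rVert_2
\]
makes that limiting argument work. I would present the approximation argument as the primary proof, since it avoids bookkeeping with Jordan chains.

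The main obstacle I expect is item~3, specifically checking that the multiplicities, or the approximation in trace norm, behave correctly. Items~1 and~2 are direct citations of Lidskii's theorem and the invertibility criterion for Fredholm determinants.
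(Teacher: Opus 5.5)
Your proposal is correct. For items~1 and~2 you follow the paper, which likewise treats them as standard facts about $\DetF(I+A)$ for $A\in\Sone$. Deriving item~1 from the product formula together with the Riesz--Schauder alternative is a fine way to make that citation concrete.

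For item~3, the paper's entire argument is the route you relegate to second place. It observes that $ST$ and $TS$ have the same nonzero eigenvalues with the same algebraic multiplicities, so their Lidskii products coincide. Your intertwining relation $T(ST-\lambda)^k=(TS-\lambda)^kT$ supplies the justification the paper omits. The one step you should write out is injectivity. If $x\in\ker(ST-\lambda)^k$ and $Tx=0$, then $(ST-\lambda)^kx=(-\lambda)^kx$, which forces $x=0$ because $\lambda\neq0$; the same argument works for $S$. Finite dimensionality of these generalized eigenspaces comes from compactness.

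Your preferred approximation argument is a genuinely different route. It proves the identity for finite-rank approximants by finite-dimensional linear algebra, restricting to the invariant subspaces $\operatorname{ran}S_n$ and $\operatorname{ran}T_n$. It then passes to the limit using trace-norm continuity of $\DetF$ and your H\"older estimate.

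This buys logical independence from item~2: no Lidskii product formula and no multiplicity bookkeeping are needed. The cost is that it genuinely uses the Hilbert--Schmidt factorisation of Assumption~\ref{ass:hilbert-schmidt-feedback}. As written, it proves $\DetF(I+AB)=\DetF(I+BA)$ only for Hilbert--Schmidt factors, not for arbitrary bounded $A,B$ with $AB,BA\in\Sone$ as your phrasing suggests. The eigenvalue route used in the paper needs only that both products be trace class.
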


\begin{proof}
	The first two assertions are standard properties of the Fredholm determinant
	of a trace-class perturbation of the identity. For the third assertion, the
	nonzero eigenvalues of $ST$ and $TS$ coincide with the same algebraic
	multiplicities. Their Fredholm products are therefore equal.
\end{proof}

By Theorem~\ref{thm:transversality-criterion}, the zeros of
$\Delta_{T,S}(q)$ are exactly the values of $q$ for which the graph
subspaces $\Gamma(qT)$ and $\Gamma(S)$ fail to be complementary. Thus, the
determinant is a scalar detector of singular feedback.

\subsection{Trace expansion and closed loop executions}
\label{subsec:trace-expansion}

For sufficiently small $q$, the logarithm of the feedback determinant has
the convergent expansion
\[
\log\Delta_{T,S}(q)
=
-\sum_{m=1}^{\infty}
\frac{q^m}{m}
\Tr\bigl((ST)^m\bigr).
\]
More precisely, this formula is valid whenever
\[
|q|\,\lVert ST\rVert<1,
\]
and it extends by analytic continuation on every simply connected component
of the set on which $\Delta_{T,S}$ does not vanish.

The term
\[
\Tr\bigl((ST)^m\bigr)
\]
is the trace of an execution that traverses the feedback interface $m$ times
and is then closed at the continuation interface. The factor $1/m$ removes
the choice of a marked starting point on the resulting cyclic execution. In
this sense, $\log\Delta_{T,S}$ is a generating function for closed connected
feedback loops, whereas $\Delta_{T,S}$ contains arbitrary finite
collections of such loops.

\begin{proposition}
	\label{prop:logarithmic-derivative}
	Let $\Omega$ be an open subset of $\C$ on which $I_C-qST$ is invertible.
	Then
	\[
	\frac{d}{dq}\log\Delta_{T,S}(q)
	=
	-\Tr\left(
	(I_C-qST)^{-1}ST
	\right)
	\]
	for every $q\in\Omega$. Equivalently, for $q=e^{-t}$,
	\[
	\frac{d}{dt}\log\Delta_{T,S}(t)
	=
	e^{-t}
	\Tr\left(
	(I_C-e^{-t}ST)^{-1}ST
	\right).
	\]
\end{proposition}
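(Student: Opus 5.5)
The plan is to differentiate the determinant directly through Jacobi's formula for Fredholm determinants, rather than through the eigenvalue product, and then to convert to the heat parameter by the chain rule.

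\emph{Holomorphy.} Set $K=ST\in\Sone(C)$. The map $q\mapsto I_C-qK$ is affine, hence holomorphic, with values in $I_C+\Sone(C)$ and with derivative $-K$ in trace norm.

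\emph{Jacobi's formula.} I would invoke the standard result from \cite{GohbergKrein1969,Simon2005}. If $A(q)$ is a holomorphic $\Sone(C)$-valued family and $I_C+A(q)$ is invertible, then
\[
\frac{d}{dq}\log\DetF\bigl(I_C+A(q)\bigr)=\Tr\bigl((I_C+A(q))^{-1}A'(q)\bigr).
\]
If a self-contained argument is preferred, fix $q_0\in\Omega$ and use the multiplicativity of $\DetF$ on $I_C+\Sone(C)$. This gives
\[
\Delta_{T,S}(q)=\Delta_{T,S}(q_0)\,\DetF\bigl(I_C-(q-q_0)(I_C-q_0K)^{-1}K\bigr).
\]
The operator $(I_C-q_0K)^{-1}K$ is trace class, because $\Sone(C)$ is an ideal and the inverse exists by hypothesis. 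The first-order expansion $\DetF(I_C+\epsilon B)=1+\epsilon\Tr B+O(\epsilon^2)$, which can be read off from the trace expansion of $\log\DetF$ valid for small $\epsilon$ (recalled just before the statement), then yields the derivative at $q_0$. This gives the first formula.

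\emph{Meaning of $\log$.} On $\Omega$ the determinant does not vanish, by Theorem~\ref{thm:determinant-singularities}(1). The expression $\frac{d}{dq}\log\Delta_{T,S}$ is therefore understood as the logarithmic derivative $\Delta'_{T,S}/\Delta_{T,S}$. This is well defined without choosing a branch, and it agrees with the derivative of any local branch of the logarithm. This is the only point that needs care, since $\Omega$ need not be simply connected.

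\emph{Heat parameter.} For the second formula, set $q=e^{-t}$, so that $dq/dt=-e^{-t}$. The chain rule then gives
\[
\frac{d}{dt}\log\Delta_{T,S}(t)=-e^{-t}\cdot\Bigl(-\Tr\bigl((I_C-e^{-t}ST)^{-1}ST\bigr)\Bigr)=e^{-t}\Tr\bigl((I_C-e^{-t}ST)^{-1}ST\bigr),
\]
valid for every $t$ with $e^{-t}\in\Omega$.

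The main obstacle is making Jacobi's formula rigorous in infinite dimensions: one needs continuity of $\DetF$ in trace norm, together with the first-order expansion. I would rely on the cited standard theory for both. As a consistency check, for $|q|\,\lVert ST\rVert<1$, differentiating the trace expansion termwise gives $-\sum_{m}q^{m-1}\Tr((ST)^m)=-\Tr\bigl((I_C-qST)^{-1}ST\bigr)$, by the Neumann series of Proposition~\ref{prop:analytic-neumann-resolvent}. Analytic continuation then extends the identity to the connected component of $\Omega$ containing $0$.
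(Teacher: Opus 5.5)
Your proposal is correct and follows the paper's proof, which applies the standard Jacobi formula $\frac{d}{dq}\log\DetF(I_C+A(q))=\Tr\bigl((I_C+A(q))^{-1}A'(q)\bigr)$ with $A(q)=-qST$ and then uses the chain rule for $q=e^{-t}$. Your additions are sound refinements that the paper leaves implicit: the self-contained derivation of Jacobi's formula via multiplicativity at a base point $q_0$, and reading $\frac{d}{dq}\log\Delta_{T,S}$ as $\Delta_{T,S}'/\Delta_{T,S}$ because $\Omega$ need not be simply connected.
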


\begin{proof}
	For a differentiable trace-class family $A(q)$, the Fredholm determinant
	satisfies
	\[
	\frac{d}{dq}
	\log\DetF\bigl(I_C+A(q)\bigr)
	=
	\Tr\left(
	\bigl(I_C+A(q)\bigr)^{-1}A'(q)
	\right)
	\]
	whenever $I_C+A(q)$ is invertible. Apply this identity to
	\[
	A(q)=-qST.
	\]
	The formula in the variable $t$ follows from the chain rule.
\end{proof}

\subsection{Formal determinants of graded return series}
\label{subsec:formal-graded-determinants}

The graph semantics generally produces a positively graded return series
rather than a single return operator. Let
\[
K(q)
=
\sum_{n=1}^{\infty}q^nK_n,
\qquad
K_n\in\Sone(C).
\]
We first regard this as a formal series in $\Sone(C)[[q]]$.

\begin{definition}
	\label{def:formal-feedback-determinant}
	The formal feedback determinant associated with $K(q)$ is
	\[
	\Delta_{K}^{\mathrm{form}}(q)
	=
	\exp\left(
	-\sum_{m=1}^{\infty}
	\frac{1}{m}
	\Tr\bigl(K(q)^m\bigr)
	\right).
	\]
\end{definition}

The expression is well defined in $\C[[q]]$. Indeed, $K(q)$ has strictly
positive valuation, so the coefficient of $q^N$ receives contributions only
from terms with $m\leq N$. Each such coefficient is a finite sum of traces
of ordered products
\[
\Tr\bigl(K_{n_1}\cdots K_{n_m}\bigr),
\qquad
n_1+\cdots+n_m=N.
\]
This is the determinant-level analogue of the local finiteness used for the
graded execution resolvent in
Proposition~\ref{prop:graded-formal-resolvent}. It is also an instance of the
general formal-series mechanism described in
\cite{Magnot2025FormalSeries}.

\begin{proposition}
	\label{prop:formal-analytic-determinant}
	Assume that there exists $r>0$ such that
	\[
	\sum_{n=1}^{\infty}
	|q|^n\lVert K_n\rVert_1
	<
	\infty
	\]
	for every $q\in\C$ with $|q|<r$. Then $K(q)$ converges in trace norm on the
	disk $|q|<r$, and
	\[
	\Delta_{K}^{\mathrm{form}}(q)
	=
	\DetF\bigl(I_C-K(q)\bigr)
	\]
	as analytic functions on that disk.
\end{proposition}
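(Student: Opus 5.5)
Under the summability hypothesis the series $K(q)=\sum_{n\ge1}q^nK_n$ converges absolutely in the Banach space $\Sone(C)$, uniformly on compact subsets of the disk $|q|<r$. A locally uniform limit of trace-norm-valued polynomials is holomorphic, so $K$ is holomorphic from $\{|q|<r\}$ into $\Sone(C)$, with $K(0)=0$. The Fredholm determinant $A\mapsto\DetF(I_C+A)$ is an entire function on $\Sone(C)$ and is locally Lipschitz in trace norm. The composite $q\mapsto\DetF(I_C-K(q))$ is therefore holomorphic on the disk and takes the value $1$ at $q=0$. Both sides of the claimed identity are thus determined by their Taylor coefficients at $0$, and the plan is to compare power series there and then use the identity theorem.

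On a small disk $|q|<\rho\le r$ we have $\lVert K(q)\rVert_1<1$, since $\lVert K(q)\rVert_1\le\sum_n|q|^n\lVert K_n\rVert_1\to0$ as $q\to0$. For a trace-class $A$ with $\lVert A\rVert<1$, the standard formula is
\[
\log\DetF(I_C-A)=-\sum_{m\ge1}\frac1m\Tr(A^m),
\]
which is the same expansion used above for $\log\Delta_{T,S}$, now applied to $A=K(q)$. The series converges absolutely, because $|\Tr(A^m)|\le\lVert A\rVert_1^m$. On this small disk this gives
\[
\DetF(I_C-K(q))=\exp\Bigl(-\sum_{m\ge1}\tfrac1m\Tr(K(q)^m)\Bigr)
\]
as a convergent analytic expression.

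The remaining step is to show that the Taylor expansion of this analytic expression coincides with the formal series of Definition~\ref{def:formal-feedback-determinant}. Expand $K(q)^m$ as a multiple series of $q^{n_1+\cdots+n_m}K_{n_1}\cdots K_{n_m}$. This multiple series is absolutely summable in trace norm, being dominated by $\bigl(\sum_n|q|^n\lVert K_n\rVert_1\bigr)^m$. The trace is continuous and linear on $\Sone(C)$, so we may take traces term by term and regroup by total degree $N$. The double sum over $m$ and $N$ is also absolutely convergent, dominated by $\sum_m\frac1m\bigl(\sum_n|q|^n\lVert K_n\rVert_1\bigr)^m<\infty$. Hence
\[
-\sum_m\tfrac1m\Tr(K(q)^m)
\]
is a convergent power series whose coefficients are exactly the finite sums of $\Tr(K_{n_1}\cdots K_{n_m})$ appearing in the formal logarithm. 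Since the analytic exponential of a convergent power series vanishing at $0$ has Taylor series equal to the formal exponential, the two agree on $|q|<\rho$. The identity theorem then extends the equality to the whole disk $|q|<r$.

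The main obstacle is a gap in the statement itself. The formal side $\Delta_K^{\mathrm{form}}$ is a priori only an element of $\C[[q]]$, so "as analytic functions on the disk" has to be read as: the formal series has radius of convergence at least $r$ and sums to $\DetF(I_C-K(q))$. This follows from the argument because the formal series is the Taylor series at $0$ of the holomorphic function $\DetF(I_C-K(q))$, which is holomorphic on the full disk and hence has radius at least $r$. The analytic-continuation issue for $\log$ on the larger disk is sidestepped entirely, since we only compare the determinants themselves and never the logarithms beyond the small disk.
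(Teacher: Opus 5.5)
Your proposal is correct and follows essentially the same route as the paper. Both arguments use locally uniform trace-norm convergence of $K(q)$, the logarithmic expansion of $\DetF(I_C-K(q))$ on a small disk where $K(q)$ is small, and the identity theorem to reach all of $|q|<r$. You also make explicit two points the paper leaves implicit. First, the Taylor coefficients of $-\sum_m\frac1m\Tr(K(q)^m)$ are exactly the formal coefficients, by absolute convergence and regrouping by total degree. Second, the formal side is analytic on the whole disk precisely because it is the Taylor series of the holomorphic function $q\mapsto\DetF(I_C-K(q))$, which is the right way to read the paper's claim that ``both sides are analytic on $|q|<r$''.
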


\begin{proof}
	The stated summability condition gives trace-norm convergence of $K(q)$ and
	locally uniform convergence on compact subdisks. For $q$ sufficiently close
	to zero, one has
	\[
	\lVert K(q)\rVert<1,
	\]
	and the standard logarithmic expansion of the Fredholm determinant gives
	\[
	\DetF\bigl(I_C-K(q)\bigr)
	=
	\exp\left(
	-\sum_{m=1}^{\infty}
	\frac{1}{m}
	\Tr\bigl(K(q)^m\bigr)
	\right).
	\]
	Both sides are analytic on $|q|<r$, so the identity theorem proves the
	result throughout the disk.
\end{proof}

If $K_n$ is the total weight of elementary return cycles of degree $n$, the
coefficient of $q^N$ in
\[
\log\Delta_{K}^{\mathrm{form}}(q)
\]
is a finite sum over closed cyclic concatenations of total degree $N$. Thus,
the formal determinant retains the grading of the execution semantics while
producing a scalar invariant.

\subsection{Covariance and semantic invariance}
\label{subsec:determinant-covariance}

Let
\[
U:C\longrightarrow C'
\qquad\text{and}\qquad
V:R\longrightarrow R'
\]
be bounded invertible operators, and define
\[
T'=VTU^{-1},
\qquad
S'=USV^{-1}.
\]
As shown in Proposition~\ref{prop:feedback-covariance}, one has
\[
S'T'=U(ST)U^{-1}.
\]

\begin{proposition}
	\label{prop:determinant-covariance}
	Under Assumption~\ref{ass:hilbert-schmidt-feedback},
	\[
	\Delta_{T',S'}(q)
	=
	\Delta_{T,S}(q)
	\]
	for every $q\in\C$.
\end{proposition}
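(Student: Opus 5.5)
The argument reduces to similarity invariance of the Fredholm determinant, applied to the conjugation identity $S'T'=U(ST)U^{-1}$ recorded just before the statement (Proposition~\ref{prop:feedback-covariance}).

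First I will check that the primed configuration lies in the trace-class sector, so that $\Delta_{T',S'}$ is defined. The Hilbert--Schmidt class is a two-sided ideal in the bounded operators, and $U^{\pm1},V^{\pm1}$ are bounded. Hence $T'=VTU^{-1}\in\Stwo(C',R')$ and $S'=USV^{-1}\in\Stwo(R',C')$. So Assumption~\ref{ass:hilbert-schmidt-feedback} holds for $(C',R',T',S')$, and $S'T'\in\Sone(C')$.

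Next I will write
\[
I_{C'}-qS'T'=U(I_C-qST)U^{-1}
\]
and show that $\DetF$ is invariant under similarity by a bounded invertible $U:C\to C'$. The cleanest route is the eigenvalue product of Theorem~\ref{thm:determinant-singularities}(2). The operators $ST$ and $U(ST)U^{-1}$ have the same nonzero eigenvalues with the same algebraic multiplicities, because $U$ maps each generalised eigenspace $\ker(ST-\lambda)^k$ bijectively onto $\ker(U STU^{-1}-\lambda)^k$. It follows that
\[
\Delta_{T',S'}(q)=\prod_j(1-q\lambda_j)=\Delta_{T,S}(q)
\]
for every $q\in\C$.

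An alternative is to apply the Sylvester identity, Theorem~\ref{thm:determinant-singularities}(3), to the pair $A=U(ST)\colon C\to C'$ (trace class) and $B=U^{-1}\colon C'\to C$ (bounded). This gives $\DetF(I_{C'}-qAB)=\DetF(I_C-qBA)$ with $BA=ST$. The catch is that Theorem~\ref{thm:determinant-singularities}(3) is stated only for Hilbert--Schmidt factors, so using it here would require citing the general version (one factor trace class, the other bounded) from \cite{Simon2005}.

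The only step requiring care is the equality of algebraic multiplicities, and only because $U$ maps between two different Hilbert spaces. The generalised-eigenspace argument above handles this directly, since conjugation commutes with polynomials in the operator. As a cross-check I will also verify agreement for small $|q|$ via the trace expansion of Subsection~\ref{subsec:trace-expansion}. By cyclicity of the trace for a trace-class operator times bounded operators, $\Tr((S'T')^m)=\Tr(U(ST)^mU^{-1})=\Tr((ST)^m)$. So the logarithms agree for small $|q|$, and since both determinants are entire, the identity theorem extends the equality to all $q\in\C$. This route can be used in place of the eigenvalue argument if the latter is judged too terse.
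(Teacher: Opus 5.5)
Your proposal is correct and follows the paper's route. The paper's proof notes that $S'T'=U(ST)U^{-1}$ and invokes invariance of $\DetF$ under bounded similarity; you additionally check that $(T',S')$ satisfies the Hilbert--Schmidt assumption and justify the similarity invariance, through equality of algebraic multiplicities or agreement of traces plus the identity theorem, where the paper treats it as standard.
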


\begin{proof}
	The trace-class operators $ST$ and $S'T'$ are similar. The Fredholm
	determinant is invariant under bounded similarity transformations, and hence
	\[
	\DetF\bigl(I_{C'}-qS'T'\bigr)
	=
	\DetF\bigl(I_C-qST\bigr).
	\]
\end{proof}

Consequently, the determinant does not depend on bases or on equivalent
choices of coordinates at the continuation and result interfaces. More
generally, suppose that two closed program presentations induce feedback
configurations whose return operators are intertwined by a bounded
isomorphism. Their Fredholm feedback determinants coincide. This is the
precise invariance statement used here. A stronger assertion for arbitrary
program equivalences would require the extraction of the Hilbert-space
feedback configuration to be canonical with respect to the chosen notion of
denotational equivalence.

\subsection{Removal of the damping parameter}
\label{subsec:determinant-limit}

The value $q=1$ removes the exponential damping of repeated feedback. If
$ST$ is trace class, no renormalisation is needed to define the endpoint
value:
\[
\Delta_{T,S}(1)
=
\DetF(I_C-ST).
\]
It may vanish, in which case the unregularised feedback equation is singular.

\begin{proposition}
	\label{prop:determinant-endpoint}
	Under Assumption~\ref{ass:hilbert-schmidt-feedback},
	\[
	\lim_{q\to 1^{-}}
	\Delta_{T,S}(q)
	=
	\DetF(I_C-ST).
	\]
	Equivalently,
	\[
	\lim_{t\to 0^{+}}
	\Delta_{T,S}(t)
	=
	\DetF(I_C-ST).
	\]
\end{proposition}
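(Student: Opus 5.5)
The plan is to deduce the statement from two facts already in place: $\Delta_{T,S}$ is an entire function of $q$, and $q=e^{-t}$ is a continuous reparametrisation.

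First, under Assumption~\ref{ass:hilbert-schmidt-feedback}, $ST\in\Sone(C)$ because $\Stwo(R,C)\,\Stwo(C,R)\subseteq\Sone(C)$. It was noted after Definition~\ref{def:feedback-determinant} that $q\mapsto\Delta_{T,S}(q)=\DetF(I_C-qST)$ is entire. In particular it is continuous at $q=1$, so the one-sided limit as $q\to 1^{-}$ along the real axis equals $\Delta_{T,S}(1)=\DetF(I_C-ST)$.

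If one prefers not to rely on entireness as a black box, there is a direct route through the standard local Lipschitz estimate for Fredholm determinants from \cite{Simon2005}:
\[
\bigl|\DetF(I+A)-\DetF(I+B)\bigr|
\leq
\lVert A-B\rVert_1\exp\bigl(\lVert A\rVert_1+\lVert B\rVert_1+1\bigr).
\]
Apply it with $A=-qST$ and $B=-ST$. Then $\lVert A-B\rVert_1=|1-q|\,\lVert ST\rVert_1\to 0$, while the exponential factor stays bounded for $q\in(0,1)$. This gives the limit explicitly.

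Second, for the heat parametrisation, $t\mapsto e^{-t}$ is continuous and maps $(0,\infty)$ into $(0,1)$, with $e^{-t}\to 1^{-}$ as $t\to 0^{+}$. Composing with the continuity of $\Delta_{T,S}$ at $q=1$ gives $\lim_{t\to 0^{+}}\Delta_{T,S}(t)=\DetF(I_C-ST)$.

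There is no real obstacle. The only point needing care is to quote continuity of $\DetF$ in trace norm, rather than in operator norm, which is why the trace-class hypothesis is essential. The argument does not use invertibility of $I_C-ST$, so it covers the case $\DetF(I_C-ST)=0$, consistent with Theorem~\ref{thm:determinant-singularities}.
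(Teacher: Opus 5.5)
Your proposal is correct and essentially matches the paper, whose proof simply notes that $qST\to ST$ in trace norm and invokes trace-norm continuity of $\DetF$. Your second route is that same argument made quantitative through the Lipschitz bound, and your first route via entireness of $q\mapsto\DetF(I_C-qST)$ is a repackaging of the same standard fact; the heat-parameter version then follows, as in the paper, by composing with $t\mapsto e^{-t}$.
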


\begin{proof}
	One has
	\[
	qST\longrightarrow ST
	\]
	in trace norm as $q\to 1^{-}$. The Fredholm determinant is continuous with
	respect to the trace norm, which proves the first limit. The second follows
	from $q=e^{-t}$.
\end{proof}

For a genuinely graded family $K(t)$, the same conclusion holds whenever
there exists $K_0\in\Sone(C)$ such that
\[
K(t)\longrightarrow K_0
\]
in trace norm as $t\to 0^{+}$. Then
\[
\DetF\bigl(I_C-K(t)\bigr)
\longrightarrow
\DetF(I_C-K_0).
\]
If trace-norm convergence fails, the determinant may diverge or vanish with a
nontrivial asymptotic rate. A renormalised endpoint can then be introduced
only after an asymptotic expansion has been established.

\begin{definition}
	\label{def:renormalised-feedback-determinant}
	Assume that $\Delta(t)\neq 0$ for sufficiently small $t>0$ and that, for a
	continuous choice of the logarithm,
	\[
	\log\Delta(t)
	=
	C_{\mathrm{div}}(t)+c_0+o(1)
	\]
	as $t\to 0^{+}$, where $C_{\mathrm{div}}(t)$ is a specified divergent
	counterterm. The renormalised feedback determinant is
	\[
	\Delta_{\mathrm{ren}}
	=
	\exp(c_0)
	=
	\lim_{t\to 0^{+}}
	\exp\bigl(-C_{\mathrm{div}}(t)\bigr)\Delta(t),
	\]
	provided that the limit exists.
\end{definition}

Definition~\ref{def:renormalised-feedback-determinant} is conditional: the
existence and uniqueness of the counterterm are not consequences of the dcpo
semantics alone. In the present paper, renormalisation will be used only in
examples for which the small-$t$ asymptotics can be computed explicitly.

\subsection{A semantic Fredholm invariant}
\label{subsec:semantic-fredholm-invariant}

Suppose that a closed program $P$ admits a linear feedback presentation
\[
(C_P,R_P,T_P,S_P)
\]
satisfying Assumption~\ref{ass:hilbert-schmidt-feedback}. Define
\[
\Delta_P(q)
=
\DetF\bigl(I_{C_P}-qS_PT_P\bigr).
\]
Then:
\begin{enumerate}
	\item $\Delta_P(q)$ is entire in $q$;
	
	\item its zeros are exactly the parameters for which the regularised
	feedback closure is singular;
	
	\item its logarithm generates traces of closed loop executions;
	
	\item it is invariant under bounded isomorphisms of feedback presentations;
	
	\item its value at $q=1$ is the unregularised Fredholm determinant whenever
	the return operator remains trace class.
\end{enumerate}

The determinant is therefore a derived invariant of the linear feedback
semantics. It supplements, but does not replace, the order-theoretic
denotation constructed in Sections~\ref{sec:graph-semantics}
and~\ref{sec:abel}. The examples in the next section illustrate both the
Abel reconstruction and the Fredholm invariant in concrete recursive
programs.

\section{Examples}
\label{sec:examples}

This section illustrates the constructions of the preceding sections in a
sequence of examples of increasing analytic complexity. The first two examples
come directly from recursive hybrid quantum programs. The remaining examples
focus on the linear feedback sector and show how ordinary and regularised
Fredholm determinants arise from execution resolvents.

\subsection{Repeat-until-success as a scalar feedback loop}
\label{subsec:example-rus}

Consider again the repeat-until-success computation of
Sections~\ref{subsec:rus}, \ref{subsec:rus-graphs}, and
\ref{subsec:rus-abel}. A failed trial has quantum weight
\[
(1-p)\id_{\mathcal A},
\]
whereas a successful trial has quantum weight
\[
p\mathcal U.
\]
A terminating execution with exactly $n$ failures followed by one success has
degree $n+1$ and weight
\[
p(1-p)^n\mathcal U.
\]
Consequently, its degree-weighted graph denotation is
\[
Z_{\mathrm{RUS}}(q)
=
\sum_{n=0}^{\infty}
q^{n+1}p(1-p)^n\mathcal U.
\]
For
\[
|q|<(1-p)^{-1},
\]
the series converges in operator norm and gives
\[
Z_{\mathrm{RUS}}(q)
=
\frac{pq}{1-(1-p)q}\mathcal U.
\]
The limit at $q=1$ is therefore
\[
\lim_{q\to 1^{-}}Z_{\mathrm{RUS}}(q)
=
\mathcal U,
\]
in agreement with the least-fixed-point semantics.

The same calculation can be written as a feedback resolvent. Let the scalar
return factor be
\[
K=1-p.
\]
The success edge contributes $pq\mathcal U$, and hence
\[
Z_{\mathrm{RUS}}(q)
=
pq(1-qK)^{-1}\mathcal U.
\]
The associated feedback determinant is
\[
\Delta_{\mathrm{RUS}}(q)
=
1-(1-p)q.
\]
Its only zero is
\[
q=(1-p)^{-1}>1.
\]
Thus, the feedback operator is nonsingular throughout the physical interval
$0\leq q\leq 1$, and the damping parameter can be removed without
renormalisation.

With $q=e^{-t}$, one obtains
\[
Z_{\mathrm{RUS}}(t)
=
\frac{pe^{-t}}{1-(1-p)e^{-t}}\mathcal U
\]
and
\[
\Delta_{\mathrm{RUS}}(t)
=
1-(1-p)e^{-t}.
\]
Both quantities have regular limits as $t\to 0^{+}$.

\subsection{A measurement-controlled loop with two terminal outcomes}
\label{subsec:example-two-outcomes}

Let $\mathcal A$ be a von Neumann algebra, and let
\[
A_0,A_1,R\in\Wstar(\mathcal A,\mathcal A)
\]
satisfy
\[
A_0(1_{\mathcal A})
+
A_1(1_{\mathcal A})
+
R(1_{\mathcal A})
\leq
1_{\mathcal A}.
\]
The maps $A_0$ and $A_1$ describe two terminating measurement outcomes, while
$R$ describes a retry outcome. The corresponding recursion functional on
$\QO(\mathcal A,\mathcal A,\{0,1\})$ is
\[
F(\xi)
=
\delta(0,A_0)
+
\delta(1,A_1)
+
R\triangleright\xi.
\]
Its $n$-th Kleene approximant is
\[
F^n(\bottom)
=
\sum_{j=0}^{n-1}
\left(
\delta\bigl(0,R^j\circ A_0\bigr)
+
\delta\bigl(1,R^j\circ A_1\bigr)
\right).
\]
Indeed, the term indexed by $j$ represents $j$ successive retries followed by
one of the two terminal outcomes.

The exact-depth graph contribution is
\[
D_{j+1}
=
\delta\bigl(0,R^j\circ A_0\bigr)
+
\delta\bigl(1,R^j\circ A_1\bigr),
\qquad j\geq 0.
\]
The Abel-regularised graph denotation is therefore
\[
Z(q)
=
\sum_{j=0}^{\infty}q^{j+1}
\left(
\delta\bigl(0,R^j\circ A_0\bigr)
+
\delta\bigl(1,R^j\circ A_1\bigr)
\right).
\]
This series is well defined in the quantum orchestra dcpo for every
$0<q\leq 1$. The order-theoretic Abel reconstruction theorem gives
\[
\sup_{0<q<1}Z(q)
=
\fix(F).
\]

Suppose, in addition, that the channels are represented as bounded operators
on a Banach space of observables and that
\[
r(R)<1,
\]
where $r(R)$ is the spectral radius of $R$. Then the series converges in
operator norm at $q=1$, and the two terminal components of the denotation are
\[
\sum_{j=0}^{\infty}R^j\circ A_0
=
(I-R)^{-1}A_0
\]
and
\[
\sum_{j=0}^{\infty}R^j\circ A_1
=
(I-R)^{-1}A_1.
\]
For $0<q<1$, the corresponding regularised expressions are
\[
q(I-qR)^{-1}\circ A_0
\]
and
\[
q(I-qR)^{-1}\circ A_1.
\]
Thus, the same execution resolvent simultaneously controls all terminal
outcomes of the measurement-dependent loop.

\subsection{A finite-dimensional feedback determinant}
\label{subsec:example-finite-dimensional}

Let
\[
C=R=\C^d,
\]
and let $T,S\in\mathcal L(\C^d)$ be such that
\[
ST
=
\operatorname{diag}(\lambda_1,\ldots,\lambda_d).
\]
The regularised execution resolvent is
\[
(I_C-qST)^{-1}
=
\operatorname{diag}
\left(
\frac{1}{1-q\lambda_1},
\ldots,
\frac{1}{1-q\lambda_d}
\right),
\]
whenever
\[
q\lambda_j\neq 1
\]
for every $j$. The feedback determinant is
\[
\Delta_{T,S}(q)
=
\prod_{j=1}^{d}(1-q\lambda_j).
\]
Hence the singular parameters are precisely
\[
q=\lambda_j^{-1}
\]
for the nonzero eigenvalues of $ST$.

For example, in dimension two,
\[
ST
=
\begin{pmatrix}
	\alpha & 0\\
	0 & \beta
\end{pmatrix}
\]
gives
\[
\Delta_{T,S}(q)
=
(1-q\alpha)(1-q\beta).
\]
The logarithmic derivative is
\[
\frac{d}{dq}\log\Delta_{T,S}(q)
=
-
\frac{\alpha}{1-q\alpha}
-
\frac{\beta}{1-q\beta},
\]
which is the trace of the regularised loop resolvent with the sign prescribed
by Proposition~\ref{prop:logarithmic-derivative}.

This example also makes the cross-ratio interpretation explicit. The graph
subspaces $\Gamma(qT)$ and $\Gamma(S)$ are complementary exactly when
\[
(1-q\alpha)(1-q\beta)\neq 0.
\]
Thus, the determinant vanishes exactly when the two graph decompositions cease
to be transverse.

\subsection{A trace-class diagonal return operator}
\label{subsec:example-trace-class-diagonal}

Let
\[
C=R=\ell^2(\N)
\]
with canonical orthonormal basis $(e_n)_{n\geq 1}$, and fix a parameter
\[
0<a<1.
\]
Define Hilbert--Schmidt operators $T$ and $S$ by
\[
Te_n=a^{n/2}e_n
\]
and
\[
Se_n=a^{n/2}e_n.
\]
Then
\[
STe_n=a^ne_n,
\]
and $ST$ is trace class because
\[
\Tr(ST)
=
\sum_{n=1}^{\infty}a^n
=
\frac{a}{1-a}.
\]
The feedback determinant is the convergent $q$-product
\[
\Delta_a(q)
=
\DetF(I_C-qST)
=
\prod_{n=1}^{\infty}(1-qa^n).
\]
It is defined for every $q\in\C$ and is nonzero at $q=1$ because
\[
\sum_{n=1}^{\infty}a^n<\infty
\]
and $a^n\neq 1$.

For $|q|<a^{-1}$, its logarithm can be computed from the trace expansion:
\[
\begin{aligned}
	\log\Delta_a(q)
	&=
	-
	\sum_{m=1}^{\infty}
	\frac{q^m}{m}\Tr\bigl((ST)^m\bigr)
	\\
	&=
	-
	\sum_{m=1}^{\infty}
	\frac{q^m}{m}
	\sum_{n=1}^{\infty}a^{nm}
	\\
	&=
	-
	\sum_{m=1}^{\infty}
	\frac{q^m}{m}
	\frac{a^m}{1-a^m}.
\end{aligned}
\]
The endpoint value is
\[
\Delta_a(1)
=
\prod_{n=1}^{\infty}(1-a^n).
\]
This example shows that an infinite-dimensional feedback determinant may be a
genuine convergent $q$-series determinant without requiring any subtraction
at $q=1$.

\subsection{A heat-regularised determinant with a singular endpoint}
\label{subsec:example-heat-determinant}

We finally consider a model in which the regularised return operator is trace
class for every $t>0$, but no trace-class operator remains at $t=0$. Let
\[
C=\ell^2(\N)
\]
and let $N$ be the number operator
\[
Ne_n=ne_n,
\qquad n\geq 1.
\]
For $t>0$, define
\[
K(t)=e^{-tN}.
\]
Then
\[
K(t)e_n=e^{-tn}e_n
\]
and
\[
\Tr(K(t))
=
\sum_{n=1}^{\infty}e^{-tn}
=
\frac{e^{-t}}{1-e^{-t}}.
\]
Thus, $K(t)$ is trace class for every $t>0$, and its Fredholm determinant is
\[
\Delta_{\mathrm{heat}}(t)
=
\DetF(I_C-K(t))
=
\prod_{n=1}^{\infty}(1-e^{-tn}).
\]
Equivalently, with $q=e^{-t}$,
\[
\Delta_{\mathrm{heat}}(q)
=
\prod_{n=1}^{\infty}(1-q^n).
\]
This is the Euler product associated with the degree spectrum of $N$.

At $t=0$, the operator $K(t)$ converges strongly to the identity, which is not
trace class. The determinant has no nonzero unregularised endpoint; in fact,
\[
\Delta_{\mathrm{heat}}(t)
\longrightarrow
0
\]
as $t\to 0^{+}$. The classical small-$t$ asymptotic of the Euler product is
\cite{Hardy1949}
\[
\Delta_{\mathrm{heat}}(t)
=
\sqrt{\frac{2\pi}{t}}
\exp\left(
-\frac{\pi^2}{6t}
+\frac{t}{24}
\right)
\bigl(1+o(1)\bigr).
\]
Consequently, the renormalised determinant
\[
\Delta_{\mathrm{heat}}^{\mathrm{ren}}
=
\lim_{t\to 0^{+}}
\sqrt{\frac{t}{2\pi}}
\exp\left(
\frac{\pi^2}{6t}
-\frac{t}{24}
\right)
\Delta_{\mathrm{heat}}(t)
\]
exists and satisfies
\[
\Delta_{\mathrm{heat}}^{\mathrm{ren}}=1.
\]

This example clarifies the role of the parametrisation $q=e^{-t}$. The
variable $t$ acts as a heat cutoff on the degree spectrum. For positive $t$,
the cutoff produces a trace-class return operator and a genuine Fredholm
determinant. Removing the cutoff requires a renormalisation because the
trace-class condition is lost at $t=0$. Such a renormalisation is not needed
in the trace-class situation of
Section~\ref{subsec:example-trace-class-diagonal}.

\subsection{Comparison of the examples}
\label{subsec:comparison-examples}

The examples display three distinct behaviours.

\begin{enumerate}
	\item In the repeat-until-success and two-outcome loops, the dcpo semantics is
	reconstructed directly from positive execution contributions. The parameter
	$q=e^{-t}$ regularises the depth of recursion, but the limit at $q=1$ is
	already finite.
	
	\item In the finite-dimensional and trace-class diagonal feedback sectors,
	the Fredholm determinant extends continuously to $q=1$. Its zeros detect
	singular feedback, and no renormalisation is necessary.
	
	\item In the heat-regularised example, every positive value of $t$ yields a
	trace-class determinant, but the endpoint operator is not trace class. A
	nontrivial small-$t$ subtraction is therefore required to extract a finite
	value.
\end{enumerate}

These cases separate the order-theoretic existence of a recursive denotation
from the analytic existence of a Fredholm determinant. The former is supplied
by directed completeness and Scott continuity. The latter requires additional
Schatten-class estimates and may fail at the undamped endpoint even when the
regularised family is well defined for every $t>0$.

\section{Discussion and further work}
\label{sec:discussion}

The constructions developed in this paper separate three levels that are often
combined in the treatment of recursive hybrid quantum programs. The first is
the order-theoretic denotation of recursion by least fixed points in the
quantum orchestra monad. The second is a combinatorial refinement by graded
execution graphs. The third is an optional linear-analytic layer in which a
feedback presentation gives rise to a resolvent, an algebraic cross-ratio,
and, under Schatten-class assumptions, a Fredholm determinant. Keeping these
levels distinct clarifies both the scope of the results and the additional
hypotheses required at each stage.

\subsection{Summary of the semantic construction}
\label{subsec:discussion-summary}

For a finitary hybrid control system, every terminating execution history is
represented by a finite path whose weight is the ordered composition of the
normal completely positive subunital maps attached to its edges. Path
concatenation defines a graded category, and the corresponding formal series
form a complete graded algebra. The degree records execution length and is
additive under concatenation and continuation grafting.

Semantic evaluation sends a terminating graph to the associated Dirac quantum
orchestra and extends to admissible positive graph polynomials. The
compatibility of graph concatenation with channel composition and of graph
grafting with Kleisli composition shows that the graph construction is
compositional. For locally finite execution series, evaluation is defined as
the directed supremum of the evaluations of the finite degree truncations.

The finite-unfolding correspondence of
Theorem~\ref{thm:finite-unfolding-correspondence} identifies these truncations
with the Kleene approximants of the recursive semantic functional. Therefore,
the graph-series semantics is a conservative refinement of the original
quantum orchestra semantics:
\[
\sem{P}_{\mathrm{gr}}
=
\sem{P}.
\]
The refinement retains information that is absent from the final least fixed
point, including the individual terminating histories, their execution
lengths, and their decomposition into successive control-flow steps.

The degree weighting
\[
R_q[\Gamma]
=
q^{|\Gamma|}[\Gamma],
\qquad 0<q<1,
\]
produces an increasing family of regularised denotations. The Abel
reconstruction theorem gives
\[
\sup_{0<q<1}\sem{P}_{q}
=
\sem{P}.
\]
Equivalently, with $q=e^{-t}$,
\[
\lim_{t\to 0^{+}}^{\mathrm{Scott}}\sem{P}_{t}
=
\sem{P}.
\]
This result requires directed completeness, positivity, and Scott continuity,
but no norm convergence.

\subsection{Scope of the graph model}
\label{subsec:scope-graph-model}

The execution graphs used here are finite paths generated by a finitary
control system. This level of generality is sufficient for the recursive
measurement-controlled programs considered in the examples, but it does not
cover every language feature that may occur in a full hybrid quantum
programming language.

First, the construction assumes finite branching at each elementary command.
Countably or continuously valued measurements require an extension from
finite quantum instruments to suitable measurable or domain-theoretic
instruments. In that setting, sums over outgoing edges must be replaced by
operator-valued integration, and local finiteness of the graph series must be
reformulated.

Second, the presentation uses a fixed von Neumann algebra for the quantum
store. Dynamic allocation, deallocation, and changes of quantum type can be
handled only after replacing the single algebra by a family of algebras
indexed by control locations or types. The execution category then becomes a
typed category whose arrows carry channels with varying domains and
codomains. The formal-series construction extends naturally to this setting,
provided that the grading remains locally finite.

Third, the present results establish denotational adequacy of the graph
expansion with respect to the given recursive semantic functional. They do
not constitute a full-abstraction theorem. Such a theorem would require an
operational equivalence for the programming language and a proof that equality
of graph-series denotations coincides with contextual equivalence. Since the
graph series retains more information than the quantum orchestra denotation,
full abstraction may require an explicit quotient identifying operationally
indistinguishable histories.

\subsection{Meaning and limitations of the Abel parameter}
\label{subsec:meaning-abel-parameter}

The parameter $q$ is introduced by the execution grading. It assigns the
factor $q^{n}$ to a history of degree $n$. It should not, without additional
modelling assumptions, be interpreted as physical time, a transition
probability, or a decoherence parameter. The parametrisation
\[
q=e^{-t}
\]
is useful because it turns degree damping into a multiplicative semigroup,
but the variable $t$ is initially only a regulator conjugate to the execution
degree.

Two related Abel expressions occur in the construction. If $D_n$ denotes the
semantic contribution of histories of exact degree $n$, then
\[
\sum_{n=0}^{\infty}q^nD_n
\]
is the directly weighted execution series. If $A_n$ denotes the cumulative
Kleene approximant, then
\[
(1-q)\sum_{n=0}^{\infty}q^nA_n
\]
represents the same regularised denotation. The factor $1-q$ is required
because every exact-depth contribution occurs in all later cumulative
approximants. It is therefore a canonical Abel normalisation rather than an
independently chosen counterterm.

By contrast, the renormalised Fredholm determinants considered in
Section~\ref{sec:fredholm} may require the subtraction of additional
divergent terms as $t\to 0^{+}$. Such counterterms are not determined by the
dcpo semantics. Their existence and invariance must be proved in each
analytic class under consideration. In particular, a regularised determinant
should not be regarded as a semantic invariant unless the choice of
regularisation and subtraction is shown to be canonical under the relevant
program equivalences.

\subsection{Feedback, cross-ratios, and Fredholm determinants}
\label{subsec:discussion-feedback}

The feedback construction of Section~\ref{sec:feedback} applies after a
linear continuation interface $C$, a linear result interface $R$, and maps
\[
T:C\longrightarrow R,
\qquad
S:R\longrightarrow C
\]
have been extracted from the program. The return operator is $ST$, and the
regularised feedback equation is solved by the execution resolvent
\[
(I_C-qST)^{-1}.
\]
Its Neumann expansion enumerates repeated traversals of the feedback
interface. The identity
\[
\CR(C_0,R_0;\Gamma(qT),\Gamma(S))
=
I_C-qST
\]
shows that invertibility of the feedback operator is equivalent to
transversality of two graph subspaces. This is an algebraic observation and
does not require a smooth structure on a Grassmannian.

The Fredholm determinant
\[
\Delta_{T,S}(q)
=
\DetF(I_C-qST)
\]
requires substantially stronger assumptions. The interfaces must be realised
as Hilbert spaces and the return operator must be trace class; a sufficient
condition is that both $T$ and $S$ are Hilbert--Schmidt. Under these
assumptions, the zeros of the determinant detect singular feedback, while the
trace expansion of its logarithm records closed cyclic traversals of the
loop.

This determinant is a derived invariant of a chosen linear feedback
presentation. The paper proves invariance under bounded changes of interface
coordinates. It does not prove that every quantum orchestra admits a
canonical Hilbert-space feedback presentation, nor that two arbitrary
presentations of the same denotation necessarily yield the same determinant.
Establishing such a result would require a functorial linearisation procedure
compatible with the monadic semantics.

The determinant considered here is not asserted to be a tau function of an
integrable hierarchy. No Pl\"ucker relations, Hirota equations, loop-group
factorisation, or Sato Grassmannian flow is constructed. Such structures may
arise in special feedback families carrying an additional integrable-system
symmetry, but they are not consequences of the graph-series semantics alone.

\subsection{Further directions}
\label{subsec:further-directions}

Several extensions of the present framework appear natural.

\paragraph{Typed and higher-dimensional execution diagrams.}
The path model can be replaced by formal series indexed by a graded small
category with varying quantum interfaces. More complicated concurrency or
resource-sensitive constructs may require trees, directed acyclic graphs, or
higher-dimensional diagrams rather than paths. The essential algebraic
condition is that every coefficient of a fixed total degree be determined by
only finitely many decompositions.

\paragraph{Continuous classical outcomes.}
A measurable version of the construction should combine quantum instruments
with valuations or kernels on continuous classical domains. The corresponding
execution expansion would involve iterated operator-valued integrals. An Abel
reconstruction theorem would then require a monotone-convergence principle
compatible with both the measurable and dcpo structures.

\paragraph{Operational adequacy and full abstraction.}
The graph semantics provides a natural bridge between small-step execution
and denotational recursion. A next step is to define an operational reduction
system whose finite runs are precisely the execution graphs and to prove that
the probability and quantum effect of termination agree with graph
evaluation. Contextual equivalence could then be compared with equality in
the graph-series model and in its quantum orchestra quotient.

\paragraph{Canonical feedback extraction.}
The linear feedback sector would become intrinsically semantic if one could
associate functorially with a recursive orchestra a pair of continuation and
result interfaces together with maps $T$ and $S$. Such a construction may be
related to linearisations of continuation semantics, traced monoidal
categories, or geometry-of-interaction models. It should preserve Kleisli
composition and send recursive closure to an execution resolvent.

\paragraph{Determinants and trace formulas.}
For graded trace-class return series, the coefficients of
\[
\log\DetF(I-K(q))
\]
are sums of traces of closed execution cycles. This suggests analogues of
zeta functions and trace formulas for recursive programs. A satisfactory
theory would need to identify the appropriate equivalence relation on cyclic
executions and to determine when the resulting determinant is compositional
under program substitution.

\paragraph{Geometric refinements.}
When the feedback graph subspaces belong to a restricted Grassmannian, the
algebraic cross-ratio may admit a smooth or diffeological refinement. This
could provide geometric information about families of recursive programs and
their singular loci. Such a development requires additional polarization and
operator-ideal hypotheses and is therefore separate from the semantic results
proved here.

\subsection{Conclusion}
\label{subsec:conclusion}

The main outcome of this paper is a graded execution semantics for recursive
hybrid quantum programs that remains compatible with the quantum orchestra
model. Finite execution histories form a formal graph series, semantic
evaluation is compositional, and finite degree truncations coincide with
Kleene approximants. Exponential damping of the graph degree produces an Abel
family whose Scott limit is the ordinary least-fixed-point denotation.

In a supplementary linear sector, recursive feedback is represented by the
resolvent of $I_C-qST$. The same operator is an algebraic cross-ratio of graph
subspaces, and its Fredholm determinant provides a scalar invariant whenever
the return operator is trace class. These analytic constructions enrich the
execution semantics but depend on hypotheses that are not needed for the
existence of the recursive denotation itself.

The resulting framework therefore distinguishes clearly between the general
order-theoretic semantics of recursion, its combinatorial refinement by
execution graphs, and the additional operator-theoretic invariants available
in restricted feedback sectors.

	\vskip 12pt

\paragraph{\bf Data availability statement} No data are available for this work.

\vskip 12pt

\paragraph{\bf Conflict of interest statement} The authors declare no conflict of interest.

\vskip 12pt

\paragraph{\bf Funding} No funding supported this work.

\vskip 12pt

\paragraph{\bf Acknowledgements} 
J.-P.M. thanks the France 2030 framework programme Centre Henri Lebesgue ANR-11-LABX-0020-01
for creating an attractive mathematical environment.

\vskip 12pt

\paragraph{\bf Author's Note on AI Assistance}
Portions of the text were developed with the assistance of a generative language model (OpenAI ChatGPT, based on the GPT-4 architecture). The AI was used to assist with drafting, editing, and standardizing the bibliography format. All mathematical content, structure, and theoretical constructions were provided, verified, and curated by the authors. The authors assume full responsibility for the correctness, originality, and scholarly integrity of the final manuscript.

%\nocite{*}
\bibliographystyle{elsarticle-harv}
\bibliography{graph_series_references}

\end{document}